\newtheorem{theorem}{Theorem}
\newtheorem{lemma}[theorem]{Lemma}
\newtheorem{corollary}[theorem]{Corollary}
\newtheorem{conjecture}{Conjecture}
\newtheorem{claim}{Claim}
\newtheorem{observation}[theorem]{Observation}
\newtheorem*{claim*}{Claim}
\newcommand{\proofparagraph}[1]{\smallskip\noindent\textbf{#1}~}
\newcommand*\patchAmsMathEnvironmentForLineno[1]{%
  \expandafter\let\csname old#1\expandafter\endcsname\csname #1\endcsname
  \expandafter\let\csname oldend#1\expandafter\endcsname\csname end#1\endcsname
  \renewenvironment{#1}%
     {\linenomath\csname old#1\endcsname}%
     {\csname oldend#1\endcsname\endlinenomath}}%
\newcommand*\patchBothAmsMathEnvironmentsForLineno[1]{%
  \patchAmsMathEnvironmentForLineno{#1}%
  \patchAmsMathEnvironmentForLineno{#1*}}%
\newcommand{\myorcid}[1]{\href{https://orcid.org/#1}{\includegraphics[height=9pt]{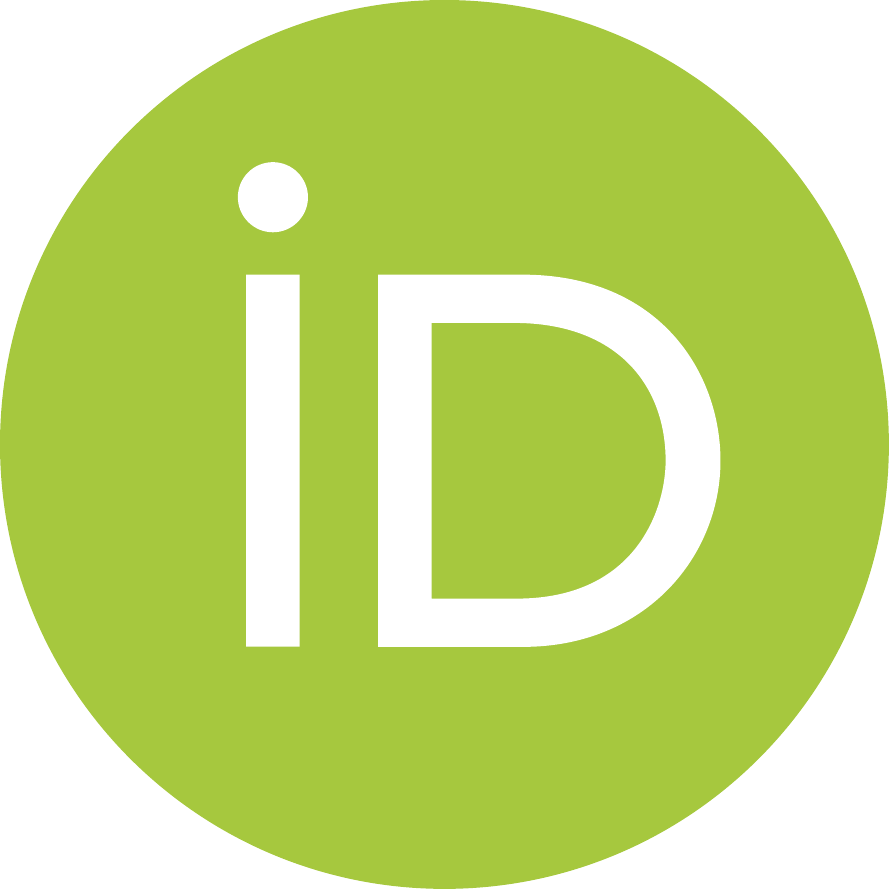}}}
\def\inst#1{$^{#1}$}
\newcommand{\R}{\ensuremath{\mathds{R}}\xspace}
\newcommand{\N}{\ensuremath{\mathds{N}}\xspace}
\renewcommand{\O}{\ensuremath{\mathcal{O}}\xspace}
\DeclareMathOperator{\obs}{obs}
\newcommand{\obsc}{\ensuremath{\obs_\mathrm{c}}}
\newcommand{\fc}{\ensuremath{f_\mathrm{c}}}
\theoremstyle{remark}
\newtheorem*{rem}{Remark}
\definecolor{defblue}{rgb}{0.121,0.47,0.705}
\let\emph\relax
\DeclareTextFontCommand{\emph}{\color{defblue}\em}
\date{}
\title{Bounding and Computing Obstacle Numbers
  of~Graphs\thanks{M.~Balko and P.~Valtr were supported by the Grant
    no.~21-32817S of the Czech Science Foundation (GA\v{C}R) and
    support by the Center for Foundations of Modern Computer Science
    (Charles University project UNCE/SCI/004). M.~Balko was supported
    by the European Research Council (ERC) under the European Union's
    Horizon 2020 research and innovation programme (grant agreement
    no.\ 810115). R.~Ganian was supported by the Austrian Science Fund
    (FWF, Project Y1329) and the Vienna Science and Technology Fund
    (WWTF, Project 10.47379/ICT22029).  S.~Gupta was supported by the
    Engineering and Physical Sciences Research Council (EPSRC) grant
    no.\ EP/V007793/1.  M.~Hoffmann was supported by the Swiss
    National Science Foundation within the collaborative DACH project
    \emph{Arrangements and Drawings} as SNSF Project 200021E-171681.
    A.~Wolff was partially supported by the DFG--GA\v{C}R project Wo
    754/11-1.}}
\author{Martin Balko\inst{1}\myorcid{0000-0001-9688-9489}
\and
Steven Chaplick\inst{2}\myorcid{0000-0003-3501-4608}
\and
Robert Ganian\inst{3}\myorcid{0000-0002-3102-4166}
\and
Siddharth Gupta\inst{4}\myorcid{0000-0003-4671-9822}
\and
Michael Hoffmann\inst{5}\myorcid{0000-0001-5307-7106}
\and
Pavel Valtr\inst{1}\myorcid{0000-0002-7762-8045}
\and Alexander Wolff\inst{6}\myorcid{0000-0001-5872-718X}
}
\begin{document}

\maketitle

\begin{center}
{\footnotesize
\inst{1} 
Department of Applied Mathematics, \\
Faculty of Mathematics and Physics, Charles University, Czech Republic \\
\texttt{balko@kam.mff.cuni.cz}\\
\inst{2} 
Maastricht University, The
  Netherlands \\
\inst{3} 
Institute of Logic and Computation, Technische Universit\"at Wien, Austria \\
\texttt{rganian@ac.tuwien.ac.at}\\
\inst{4}
Department of Computer Science \& Information Systems, BITS Pilani Goa Campus, India \\
\texttt{siddharthg@goa.bits-pilani.ac.in
}\\
\inst{5}
Department of Computer Science, ETH Z\"urich, Switzerland \\
\texttt{hoffmann@inf.ethz.ch}\\
\inst{6}
Institut f\"ur Informatik, Universit\"at W\"urzburg, Germany}
\end{center}

\begin{abstract}
An \emph{obstacle representation} of a graph~$G$ consists of a set
  of pairwise disjoint simply-connected closed regions 
  and a one-to-one mapping
  of the vertices of~$G$ to points 
  such that two vertices are adjacent in $G$ if 
  and only if the line segment connecting the two corresponding points does not intersect any obstacle.
  The
  \emph{obstacle number} of a graph is the smallest number of
   obstacles in an obstacle representation of the graph in the plane such that all obstacles are simple polygons.

  It is known that the obstacle number of each $n$-vertex graph is
  $O(n \log n)$ [Balko, Cibulka, and Valtr, 2018] and that there are $n$-vertex graphs whose obstacle number is $\Omega(n/(\log\log n)^2)$ [Dujmovi\'c and Morin, 2015].
  We improve this lower bound to
  $\Omega(n/\log\log n)$ for simple polygons and to $\Omega(n)$ for
  convex polygons. 
  To obtain these stronger bounds, we improve known estimates on the
  number of $n$-vertex graphs with bounded obstacle number, solving a
  conjecture by Dujmovi\'c and Morin.  We also show that if the drawing of some $n$-vertex graph is given as part of the input, then for some drawings $\Omega(n^2)$ obstacles are required to turn them into an obstacle representation of the graph.
  Our bounds are asymptotically tight in several instances.
  
  We complement these combinatorial bounds by two complexity results. First, 
  we show that
  computing the obstacle number of a graph~$G$ is fixed-parameter
  tractable in the vertex cover number of~$G$. 
  Second,
  we show that, given a graph~$G$ and a
  simple polygon~$P$, it is NP-hard to decide whether~$G$ admits an
  obstacle representation using~$P$ as the only obstacle.
\end{abstract}

\section{Introduction}
\label{sec-intro}

An \emph{obstacle} is a simple polygon in the plane.
For a set~$S$ of points in the plane and a set~$\O$ of obstacles, the
\emph{visibility graph} 
of~$S$ with respect to~\O is the
graph with vertex set~$S$ where two
vertices~$s$ and~$t$ are adjacent if and only if the line segment
$\overline{st}$ does not intersect any obstacle in~$\O$. For convenience, we
identify vertices with the points representing
them and edges with the corresponding line segments.  An
\emph{obstacle representation} of a graph~$G$ consists of a set~\O of
pairwise disjoint  
obstacles and a placement
of the vertices of~$G$ such that their visibility graph with respect to~\O
is isomorphic to~$G$. We consider finite point sets and finite
collections of obstacles. 
All obstacles are 
closed. 
For simplicity, we consider point sets to be in \emph{general position}, that is, no three points lie on a common line.

Given a straight-line drawing~$D$ of a graph~$G$, we define the
\emph{obstacle number} of~$D$, $\obs(D)$, to be the smallest number of
obstacles that are needed in order to turn~$D$ into an obstacle
representation of~$G$.  Since every \emph{non-edge} of~$G$ (that is, every
pair of non-adjacent vertices) must be blocked by an obstacle and no
edge of~$G$ may intersect an obstacle, $\obs(D)$ is the cardinality of
the smallest set of faces of~$D$ whose union is intersected by every
non-edge of~$G$.
For a graph~$G$, the \emph{obstacle number} of~$G$, $\obs(G)$, is the
smallest value of~$\obs(D)$, where $D$ ranges over all straight-line drawings of~$G$.
For a positive integer~$n$, let $\obs(n)$ be the maximum
value of $\obs(G)$, where $G$ is any $n$-vertex graph.
The \emph{convex obstacle number} $\obsc(\cdot)$ is defined analogously, except that here all obstacles are convex polygons.

For positive integers $h$ and $n$, let $f(h,n)$ be the number of
graphs on $n$ vertices that have obstacle number at most~$h$.
Similarly, we denote the number of graphs on $n$ vertices that have
convex obstacle number at most~$h$ by $\fc(h,n)$.

Alpert, Koch, and Laison \cite{akl-ong-DCG10} introduced the obstacle
number and the convex obstacle number.  Using Ramsey theory, they
proved that for every positive integer~$k$, there is a (huge) complete
$k$-partite graph with convex obstacle number~$k$.  They also showed
that $\obs(n) \in \Omega(\sqrt{\log n / \log \log n})$.
This lower bound was subsequently
improved to $\Omega(n/\log n)$ by Mukkamala, Pach, and
P\'{a}lv\"{o}lgyi~\cite{mpp-lbong-12} and to
$\Omega(n/(\log \log n)^2)$ by Dujmovi\'c and Morin~\cite{dm-on-15},
who conjectured the following.

\begin{conjecture}[\cite{dm-on-15}]
\label{conj-DujMor}
For all positive integers $n$ and $h$, we have $f(h,n) \in 2^{g(n) \cdot o(h)}$, where
  $g(n) \in O(n \log^2 n)$.
\end{conjecture}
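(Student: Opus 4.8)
To prove Conjecture~\ref{conj-DujMor}, the plan is to sharpen the counting-via-encoding argument behind the earlier bounds of Mukkamala, Pach, and P\'{a}lv\"{o}lgyi and of Dujmovi\'c and Morin, establishing $f(h,n)\le 2^{O(n\log n)}\cdot 2^{h\cdot\phi(n)}$ with $\phi(n)=O(n\log\log n)$, and $\fc(h,n)\le 2^{O(n\log n)}\cdot 2^{O(hn)}$ in the convex case. Since $\phi(n)=o(n\log^2 n)$, this has the form predicted by the conjecture; moreover, since \emph{all} $2^{\binom n2}$ graphs on~$n$ vertices would be counted by~$f(h,n)$ if $\obs(n)\le h$, these upper bounds force $\obs(n)=\Omega(n/\log\log n)$ and $\obsc(n)=\Omega(n)$, the bounds promised in the abstract. (For~$h$ of order~$n$ or larger the trivial bound $f(h,n)\le 2^{\binom n2}$ already suffices.)

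First I would set up the framework. Given an obstacle representation of an $n$-vertex graph~$G$ with point set~$S$ and obstacles $O_1,\dots,O_h$, the non-edges of~$G$ are exactly the pairs $\{s,t\}\in\binom S2$ whose segment~$\overline{st}$ meets some~$O_i$; writing $B(O)=\{\{s,t\}:\overline{st}\cap O\neq\emptyset\}$ for the set of pairs blocked by an obstacle~$O$, the isomorphism type of~$G$ is determined by the order type of~$S$ together with $(B(O_1),\dots,B(O_h))$. There are only $2^{O(n\log n)}$ order types of~$n$ planar points, even with a labelling, so it suffices to prove that for every fixed order type the number of subsets of~$\binom S2$ realizable as~$B(O)$ for a single obstacle~$O$ is at most~$2^{\phi(n)}$; multiplying over the at most~$h$ obstacles then yields the claimed bound. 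Disjointness of the obstacles and the requirement that no edge of~$G$ be blocked may be ignored here, as dropping them only inflates the count.

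The heart of the proof is the per-obstacle bound. For a convex obstacle~$O$ and a point $s\notin O$, the set of points blocked from~$s$ equals the truncated cone $\{s+\lambda(o-s):o\in O,\ \lambda\ge 1\}$, which is convex; hence the blocked neighbours of~$s$ among~$S$ are precisely the points of~$S$ inside one angular interval around~$s$ and beyond a single arc — an $O(\log n)$-bit datum per vertex, i.e.\ $O(n\log n)$ bits per obstacle if one is naive, which is no gain over counting order types. The key point, and the main obstacle, is that these~$n$ local descriptions are far from independent: the truncating arcs are all silhouettes of the single convex body~$O$, and one must exploit this coherence to encode all of~$B(O)$ with only a linear number of bits. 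This amounts to a combinatorial question about how a convex region can interact with a planar point set, and settling it is exactly what replaces the $n(\log\log n)^2$ bits of Dujmovi\'c and Morin by~$O(n)$, giving $\phi(n)=O(n)$ and the convex bound.

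For general simple-polygon obstacles convexity fails and the blocked set from a point need not be convex. Here I would decompose~$O$ into convex pieces so that $B(O)=\bigcup_i B(O_i)$, but a naive triangulation costs a factor equal to the boundary complexity of~$O$, which is unbounded. The remedy is that the visibility structure of~$S$ only ``sees'' a bounded-complexity part of~$O$'s boundary — only $O(n)$ of its features can separate two points of~$S$ — so $B(O)$ is still pinned down by $O(n\log\log n)$ bits, the extra $\log\log n$ factor paying for the finer bookkeeping that convexity previously spared us. I expect the two real difficulties to be this linear-bit encoding in the convex case and the control of the loss in the non-convex case; summing over order types and deriving the stated lower bounds are then routine.
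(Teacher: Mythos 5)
Your high-level plan has the right shape---count order types of the vertex set separately, then bound the number of blocked sets $B(O)$ realizable per obstacle---and this is indeed exactly the structure of the paper's proof for the \emph{convex} case (Theorem~\ref{thm-complexityConvex}, via Observation~\ref{obs-blockingInterval}, Lemma~\ref{lem:blockingInterval}, and the dual cutpath argument in Lemma~\ref{lem:cutpaths}, which yields $2^{O(n)}$ choices per convex obstacle). However, for general simple-polygon obstacles your proposal hinges on an unsupported claim: that $B(O)$ can be encoded with $\phi(n)=O(n\log\log n)$ bits per obstacle. Nobody proves this, the paper included. If it were true it would give $\obs(n)=\Omega(n/\log\log\log n)$ via Theorem~\ref{thm-blackBox}, which is strictly stronger than the paper's $\Omega(n/\log\log n)$. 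You flag this as the ``main obstacle'' but never say how to overcome it; ``only $O(n)$ features of $\partial O$ can separate two points of $S$'' is a correct intuition, but even granting it, you would still owe an argument that those features can be described in $O(\log\log n)$ rather than $O(\log n)$ bits each. A further technical snag: once $O$ is non-convex, Lemma~\ref{lem:blockingInterval} fails---knowing the radial blocking intervals $I_O(v)$ no longer determines which pairs are blocked, since a segment $\overline{uv}$ can avoid $O$ even if both rays $\overrightarrow{uv}$ and $\overrightarrow{vu}$ hit it---so the whole per-vertex-interval encoding on which your sketch rests needs to be replaced, not merely refined.

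The paper resolves the conjecture by a genuinely different (and less demanding) route for general obstacles. It passes to a \emph{minimal} obstacle representation (fewest obstacles, then fewest total obstacle corners), then shows via a chain of double-counting arguments (Lemmas~\ref{lem-boundaryIntersections}--\ref{lem-minimalObstaclesReflex}) that the total obstacle complexity is $O(hn)$: each obstacle has at most $2n$ ``blocking'' convex vertices, at most $2r_O$ non-blocking convex vertices, and the total number of reflex vertices over all obstacles is at most $2n+5$. It then encodes the entire point set---graph vertices \emph{plus} obstacle corners, some $O(hn)$ points in all---as a single order type, giving $f(h,n)\in 2^{O(hn\log n)}$ by Goodman--Pollack. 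This is weaker than the bound you propose, yet it already suffices: since $h\le\binom n2\le n^2$, one has $\log h\le 2\log n$, so $hn\log n\le n\log^2 n\cdot(2h/\log h)$, which is $g(n)\cdot o(h)$ as required. You should make that last reduction explicit too; as written, your exponent $n\log n + hn\log\log n$ is not visibly of the form $g(n)\cdot o(h)$ without the same $h\le n^2$ observation.
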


On the other hand, we trivially have $\obs(n) \leq \obsc(n) \leq \binom{n}{2}$ as one can block each non-edge of an $n$-vertex graph with a single obstacle.
Balko, Cibulka, and Valtr~\cite{bcv-dgusno-DCG18} improved this upper bound to $\obs(n) \leq \obsc(n) \leq n \lceil \log{n} \rceil-n+1$, 
refuting a conjecture by Mukkamala, Pach, and P{\'a}lv{\"o}lgyi~\cite{mpp-lbong-12} stating that $\obs(n)$ is around~$n^2$.
For graphs $G$ with chromatic number $\chi$, Balko, Cibulka, and Valtr~\cite{bcv-dgusno-DCG18} showed that $\obs(G) \leq \obsc(G) \leq (n-1)(\lceil\log \chi \rceil + 1)$, which is in $O(n)$ if the chromatic number is bounded by a constant.

Alpert, Koch, and Laison~\cite{akl-ong-DCG10} differentiated between an
\emph{outside obstacle}, which lies in (or simply is) the outer face
of the visibility drawing, and \emph{inside obstacles}, which lie in
(or are) inner faces of the drawing.  They proved that every outerplanar
graph has an outside-obstacle representation, that is, a representation with a single outside obstacle.
Later, Chaplick, Lipp, Park, and Wolff~\cite{clpw-ov1o-GD16} showed that the class of graphs that admit a
representation with a single inside obstacle is incomparable with the class of
graphs that have an outside-obstacle representation.  They found the
smallest graph with obstacle number~2; it has eight vertices and is
co-bipartite.  They also showed that the following \emph{sandwich version} of the
outside-obstacle representation problem is NP-hard: 
Given two graphs~$G$ and~$H$ with
$V(G)=V(H)$ and $E(G)\subseteq E(H)$, is there a graph~$K$ with
$V(K)=V(G)$ and $E(G) \subseteq E(K) \subseteq E(H)$ that admits an
outside-obstacle representation?  Analogous hardness results hold with
respect to inside 
and general obstacles.
Firman, Kindermann, Klawitter, Klemz, Klesen, and
Wolff~\cite{fkkkkw-ooravof-GD22} showed that
every partial 2-tree has an outside-obstacle
representation, which generalizes the
result of Alpert, Koch, and Laison~\cite{akl-ong-DCG10} concerning
the representation of outerplanar graphs.  For (partial)
outerpaths, cactus graphs, and grids, Firman et al.\ constructed
outside-obstacle representations where the vertices are those
of a regular polygon~\cite{fkkkkw-ooravof-GD22}.  Furthermore,
they characterized when the complement of a tree and when a complete
graph minus a simple cycle admits a \emph{convex} outside-obstacle
representation (where the graph vertices are in convex position).

For planar graphs, Johnson and Sar\i\"oz \cite{js-rpslg-CCCG14}
investigated a variant of the problem where the visibility graph~$G$
is required to be plane and a plane drawing~$D$ of~$G$ is given.  
They
showed that computing $\obs(D)$ is NP-hard (by reduction {\em from}
planar vertex cover) and that there is a solution-value-preserving
reduction {\em to} maximum-degree-3 planar vertex cover.  For
computing $\obs(D)$, this reduction yields a polynomial-time
approximation scheme and a fixed-parameter algorithm with respect to $\obs(D)$.
Gimbel, de Mendez, and Valtr~\cite{gpv-onpg-GD17} showed that, for some planar
graphs, there is a large discrepancy between this planar setting and the usual
obstacle number.

A related problem deals with \emph{point visibility graphs}, where the
points are not only the vertices of the graph but also the obstacles
(which are closed in this case). Recognizing point visibility graphs
is contained in $\exists\R$~\cite{gr-sropvg-TCS15} and
is $\exists\R$-hard~\cite{ch-rcpvg-DCG17}, and thus
$\exists\R$-complete.

\paragraph*{Our Contribution.}

Our results span three areas.  First, we improve existing bounds on
the (convex and general) obstacle number (see Section~\ref{sec-lowerBound}), on
the functions $f(n,h)$ and $\fc(n,h)$ (see Section~\ref{sec-complexity}), and
on the obstacle number of drawings (see Section~\ref{sec-drawing}).  Second,
we provide an algorithmic result: computing the obstacle number of a
given graph is fixed-parameter tractable with respect to the vertex
cover number of the graph (see Section~\ref{sec-fpt}).  Third, we investigate
algorithmic lower bounds and show that, given a graph~$G$ and a simple
polygon~$P$, it is NP-hard to decide whether~$G$ admits
an obstacle representation using~$P$ as the only obstacle (see
Section~\ref{sec-hardness}). We now describe our results in more detail. 

First, we prove the currently strongest lower bound on the obstacle
number of $n$-vertex graphs, improving the estimate of
$\obs(n) \in \Omega(n/(\log \log n)^2)$ by Dujmovi\'c and Morin
\cite{dm-on-15}.

\begin{theorem}
  \label{thm-lowerBound}
  There is a constant $\beta>0$ such that, for every $n\in\N$, there exists a
  graph on $n$ vertices with obstacle number at least
      $\beta n/\log \log n$, that is, $\obs(n)\in\Omega(n/\log \log n)$.
\end{theorem}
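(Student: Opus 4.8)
The plan is to establish the bound via a counting argument: we show that the number $f(h,n)$ of $n$-vertex graphs with obstacle number at most $h$ is too small to account for all $2^{\binom{n}{2}}$ labelled graphs unless $h$ is at least on the order of $n/\log\log n$. This is the strategy pioneered by Mukkamala, Pach, and Pálvölgyi and refined by Dujmović and Morin; our improvement comes from a sharper encoding of obstacle representations that shaves a logarithmic factor off the previous estimate. Concretely, I would first fix an $n$-vertex graph $G$ together with an obstacle representation using $h$ obstacles, and describe a combinatorial ``signature'' from which $G$ can be reconstructed. The key observation is that the visibility relations among the $n$ points are determined by the order type of the point set together with, for each obstacle, the cyclic sequence of its ``blocked'' sightlines; the main work is to bound the number of bits needed to record this data.

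The heart of the argument is the encoding step. I would charge each non-edge of $G$ to one obstacle that blocks it, and then argue that the set of non-edges blocked by a single obstacle has bounded combinatorial complexity relative to a fixed order type: essentially, for a fixed simple polygonal obstacle sitting among $n$ points in general position, the family of blocked segments can be described by a structure of size $O(n)$ rather than $O(n^2)$ (for instance, via a radial/rotational decomposition around the obstacle, or by bounding the number of distinct ``visibility intervals''). Summing over all $h$ obstacles, one obtains a signature of total length $O(hn)$ symbols, each drawn from an alphabet of polynomial size, so $f(h,n) \le 2^{O(hn\log n)}$ — and with care one improves the $\log n$ factor to $\log\log n$ by exploiting that the relevant per-obstacle data is a sparse/monotone object rather than an arbitrary labelling. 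Since there are $2^{\binom{n}{2}}$ graphs on a labelled vertex set and $\obs(n) \ge h$ whenever $f(h,n) < 2^{\binom{n}{2}}$, solving $2^{O(hn/\log\log n)} \ge 2^{\binom{n}{2}}$ for $h$ yields $h = \Omega(n/\log\log n)$, which is the claimed bound; choosing $\beta$ appropriately finishes the proof.

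I expect the main obstacle to be the precise combinatorial lemma controlling the complexity of a single obstacle's blocked non-edges — that is, proving that a fixed obstacle among $n$ points can block only a ``cheaply describable'' set of segments, with the description cost being the $\log\log n$-type factor rather than $\log n$. This is exactly the place where Dujmović and Morin's bound lost a factor, and where the improved counting (the resolution of Conjecture~\ref{conj-DujMor}, established elsewhere in this paper) must be invoked; I would therefore structure the proof so that Theorem~\ref{thm-lowerBound} follows formally from the improved bound on $f(h,n)$ together with the crude lower bound $f(\obs(n),n) = 2^{\binom{n}{2}}$, relegating the delicate encoding to the section on $f$ and $\fc$. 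A secondary technical point is handling general position and the reconstruction of the actual graph (as opposed to just the point configuration) from the signature, but this is routine once the counting bound is in hand.
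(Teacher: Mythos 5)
There is a genuine gap in your proposal, and it lies exactly where you anticipated the ``main obstacle'' to be. You conjecture that one can prove $f(h,n)\le 2^{O(hn\log\log n)}$ and then deduce the lower bound by the direct counting argument $f(\obs(n),n)\ge 2^{\binom{n}{2}}$. Neither half of this works. The bound the paper actually establishes (Theorem~\ref{thm-complexity}) is $f(h,n)\in 2^{O(hn\log n)}$ — the improvement over Mukkamala--Pach--P\'alv\"olgyi is from $\log^2 n$ to $\log n$ in the exponent, not from $\log n$ to $\log\log n$. In fact a bound of the form $2^{O(hn\log\log n)}$ is not available: the paper's encoding needs to record an order type of $\Theta(n+hn)$ points, which already costs $\Theta(hn\log n)$ bits by Goodman--Pollack, and the convex-obstacle lower bounds of Balko--Cibulka--Valtr show $\fc(1,n)\in 2^{\Omega(n\log n)}$, so you cannot drop below $\log n$ at $h=1$. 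Moreover your arithmetic is internally inconsistent: you write both $2^{O(hn\log n)}$ ``improved to $\log\log n$'' and later ``$2^{O(hn/\log\log n)}$,'' and the direct counting argument applied to the true bound $2^{O(hn\log n)}$ only yields the old $\Omega(n/\log n)$.

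The missing idea is the bootstrapping theorem of Dujmovi\'c and Morin (Theorem~\ref{thm-blackBox} in the paper), which the paper's proof of Theorem~\ref{thm-lowerBound} uses as a black box. Rather than comparing $f(h,n)$ against $2^{\binom{n}{2}}$ at the scale $n$, one defines $H(k)=\max\{h: f(h,k)\le 2^{k^2/4}\}$ and applies the counting bound at the much smaller scale $k=c\log n$; the Dujmovi\'c--Morin theorem then lifts this to an $n$-vertex graph with obstacle number $\Omega\bigl(nH(c\log n)/(c\log n)\bigr)$. Plugging in $f(h,k)\in 2^{O(hk\log k)}$ gives $H(k)\in\Omega(k/\log k)$, hence $H(c\log n)\in\Omega(\log n/\log\log n)$, and the lifted bound becomes $\Omega(n/\log\log n)$. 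This two-stage argument — counting at scale $\log n$ and then amplifying — is precisely what produces the $\log\log n$ in the denominator; there is no direct ``$f(h,n)$ versus $2^{\binom{n}{2}}$'' argument that gets there. You should replace the last paragraph of your proposal with an invocation of Theorem~\ref{thm-blackBox}, and correct the claimed exponent in the counting lemma to $O(hn\log n)$.
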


This lower bound is quite close to the currently best upper bound
$\obs(n) \in O(n\log{n})$ by Balko, Cibulka, and Valtr~\cite{bcv-dgusno-DCG18}.
In fact, Alpert, Koch, and Laison~\cite{akl-ong-DCG10} asked whether the obstacle number of any graph with $n$ vertices is bounded from above by a linear function of $n$.
This is supported by a result of Balko, Cibulka, and Valtr~\cite{bcv-dgusno-DCG18} who proved $\obs(G) \leq \obsc(G) \in O(n)$ for every graph $G$ with $n$ vertices and with bounded chromatic number.
We remark that we are not aware of any argument that would give a strengthening of Theorem~\ref{thm-lowerBound} to graphs with bounded chromatic number.

Next, we show that a linear lower bound holds for convex obstacles.

\begin{theorem}
  \label{thm-lowerBoundConvex}
  There is a constant $\gamma>0$ such that, for every $n\in\N$, there exists a (bipartite)
  graph on $n$ vertices with convex obstacle number at least
  $\gamma n$, that is, 
  $\obsc(n) \in \Omega(n)$.
\end{theorem}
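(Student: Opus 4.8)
The plan is to prove Theorem~\ref{thm-lowerBoundConvex} by a counting argument, playing off the abundance of bipartite graphs against a good upper bound on the number of graphs with small convex obstacle number. For the first part, note that the bipartite graphs on the labelled vertex set $\{1,\dots,n\}$ with parts $\{1,\dots,\lceil n/2\rceil\}$ and $\{\lceil n/2\rceil+1,\dots,n\}$ already number $2^{\lceil n/2\rceil\lfloor n/2\rfloor}\ge 2^{n^2/4-n}$, and since each isomorphism class contains at most $n!\le 2^{n\log n}$ labelled graphs, there are at least $2^{n^2/4-n\log n}\ge 2^{n^2/5}$ pairwise non-isomorphic $n$-vertex bipartite graphs once $n$ is large enough. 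For the second part --- which is where essentially all of the work goes, and which we carry out in Section~\ref{sec-complexity} --- we prove the bound
\[
  \fc(h,n)\;\le\;2^{\bigoh(n\log n)}\cdot 2^{\bigoh(n)\cdot h}.
\]
Granting this, take $h=\gamma n$ for a positive constant $\gamma$ to be fixed: then $\fc(\gamma n,n)\le 2^{\bigoh(n\log n)+c\gamma n^2}$ for some absolute constant $c$, which for a sufficiently small $\gamma$ (depending only on $c$) is smaller than $2^{n^2/5}$ for all large $n$. Hence some $n$-vertex bipartite graph has convex obstacle number exceeding $\gamma n$, so $\obsc(n)\in\Omega(n)$, which is the assertion of Theorem~\ref{thm-lowerBoundConvex}.

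It remains to explain the displayed bound on $\fc(h,n)$. Up to isomorphism of the resulting visibility graph, a convex obstacle representation of an $n$-vertex graph is determined by (i)~the order type of the $n$ points and (ii)~for each of the $h$ convex obstacles, the set of pairs of points whose connecting segment that obstacle crosses --- for then the visibility graph is exactly the complement of the union of these $h$ ``blocked sets''. There are only $2^{\bigoh(n\log n)}$ order types of $n$ points in general position, so the displayed bound follows once we establish the structural fact that \emph{for every fixed $n$-point set $S$ in general position, the number of distinct sets of pairs of $S$ that can be the pairs blocked by a single convex region disjoint from $S$ is $2^{\bigoh(n)}$}: indeed then $\fc(h,n)\le 2^{\bigoh(n\log n)}\cdot\bigl(2^{\bigoh(n)}\bigr)^{h}$, since each of the $h$ blocked sets is one of at most $2^{\bigoh(n)}$ possibilities. (We are free to overcount here, ignoring that the obstacles must be pairwise disjoint.)

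This structural fact is the main obstacle, and its subtle point is that the bound has to be $2^{\bigoh(n)}$ \emph{with no logarithmic factor in the exponent}: a per-obstacle bound of only $2^{\bigoh(n\log n)}$ --- all that one gets for arbitrary simple polygons, which underlies the weaker estimates behind Theorem~\ref{thm-lowerBound} --- would push $\fc(\gamma n,n)$ up to $2^{\Theta(\gamma n^2\log n)}$, too large for any fixed $\gamma>0$, and would only yield $\obsc(n)\in\Omega(n/\log n)$. Convexity removes the logarithm because the family of segments crossed by a convex region is combinatorially ``staircase-like'' with respect to a point set. This is transparent when $S$ consists of two clusters of $m=n/2$ points, near two distinct points $A$ and $B$, with the obstacle roughly between them: passing to the angular orders of the two clusters, one checks (using that the left and right boundaries of a convex region are a convex and a concave function of height, respectively) that the blocked pairs form a \emph{monotone staircase region} of the $m\times m$ grid of cluster-pairs, of which there are only $2^{\bigoh(m)}=2^{\bigoh(n)}$. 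For a general point set one localises this picture: from each point the obstacle subtends an angular interval, within it the blocked points form an interval in a radial order already fixed by the order type, and one must bound the number of resulting global patterns across all $n$ points by $2^{\bigoh(n)}$ --- this bookkeeping is the technical heart of Section~\ref{sec-complexity}. We remark finally that bipartiteness is what makes the bound sharp: the estimate $\obsc(G)\in\bigoh(n)$ for graphs $G$ of bounded chromatic number (Balko, Cibulka, and Valtr) already covers bipartite $G$, so Theorem~\ref{thm-lowerBoundConvex} shows it to be asymptotically best possible, whereas for general obstacles no analogue of Theorem~\ref{thm-lowerBound} restricted to bounded chromatic number is known.
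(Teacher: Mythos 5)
Your proof takes essentially the same route as the paper: you derive the result by comparing the (bipartite) graph count $2^{\Theta(n^2)}$ against the bound $\fc(h,n)\le 2^{O(n\log n)}\cdot 2^{O(n)h}=2^{O(n(h+\log n))}$, which is exactly Theorem~\ref{thm-complexityConvex}, and you correctly identify the crucial per-obstacle $2^{O(n)}$ bound on blocked-pair sets as the technical core (in the paper this is established via the dual pseudoline-arrangement cutpath count in Lemma~\ref{lem:cutpaths}). The only cosmetic difference is that you factor out isomorphism on the bipartite-graph side, whereas the paper compares directly against $2^{\binom{n}{2}}$ (respectively $2^{n^2/4}$) labelled graphs; both work.
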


The previously best known bound on the convex obstacle number was $\obsc(G) \in \Omega(n/(\log \log n)^2)$~\cite{dm-on-15}.
We note that the upper bound proved by Balko, Cibulka, and Valtr~\cite{bcv-dgusno-DCG18} actually holds for the convex obstacle number as well and gives $\obsc(n) \in O(n\log{n})$.
Furthermore, the linear lower bound on $\obsc(n)$ from Theorem~\ref{thm-lowerBoundConvex} works for $n$-vertex graphs with bounded chromatic number, asymptotically matching the linear upper bound on the obstacle number of such graphs proved by Balko, Cibulka, and Valtr~\cite{bcv-dgusno-DCG18}; see the remark in Section~\ref{sec-lowerBound}.

The proofs of Theorems~\ref{thm-lowerBound} and~\ref{thm-lowerBoundConvex} are both based on counting arguments that use the upper bounds on $f(h,n)$ and $\fc(h,n)$.
To obtain the stronger estimates on $\obs(n)$ and $\obsc(n)$, we improve the currently best bounds $f(h,n) \in 2^{O(hn\log^2{n})}$ and $\fc(h,n) \in 2^{O(hn\log{n})}$ 
by Mukkamala, Pach, and P\'{a}lv\"{o}lgyi~\cite{mpp-lbong-12} as follows.

\begin{theorem}
  \label{thm-complexity}
  For all positive integers $h$ and $n$, we have
  $f(h,n) \in 2^{O(h n \log n)}$.
\end{theorem}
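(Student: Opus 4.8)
The plan is to show that every $n$-vertex graph $G$ with $\obs(G)\le h$ can be recovered from a combinatorial certificate of $O(hn\log n)$ bits, and then count certificates. We may assume $h\le\binom n2$: every graph on $n$ vertices has obstacle number at most $\binom n2$, so $f(h,n)=2^{\binom n2}$ once $h\ge\binom n2$, and there $2^{\binom n2}\le 2^{O(hn\log n)}$; in particular $\log h=O(\log n)$ from now on. Fix an obstacle representation of $G$ with point set $S$, $|S|=n$, and obstacles $O_1,\dots,O_h$. Since the definition already restricts $S$ to general position, and since a small generic perturbation of the obstacles changes neither the visibility graph (by compactness, edges stay at positive distance from the obstacles) nor the validity of the representation, we may further assume that no obstacle vertex lies on a line spanned by two points of $S$ and that every segment between two points of $S$ meets each $\partial O_i$ transversally.

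The core of the argument is a normalization lemma: \emph{every obstacle may be taken to be a simple polygon with $O(n)$ vertices}. Granting it, the count is immediate. After normalization the point set $V:=S\cup\bigcup_i V(O_i)$, which after a harmless further perturbation we take to be in general position, has $N=O(hn)$ points, and $G$ is determined by (i) the order type of $V$ and (ii) the partition of $V\setminus S$ into the vertex sets of the $h$ polygons together with the cyclic order of the vertices along each polygon boundary. Indeed, since every point of $S$ lies outside every obstacle and each $O_i$ is a filled simple polygon, the segment $\overline{st}$ is blocked exactly when, for some $i$ and some edge $\overline{uv}$ of $O_i$, the segments $\overline{st}$ and $\overline{uv}$ properly cross, and whether two segments cross is an orientation predicate on their four endpoints, hence is read off from the order type of $V$ and the polygon structure in (ii). By the classical bound on the number of order types (Goodman and Pollack), there are $N^{O(N)}=2^{O(N\log N)}$ choices for (i), and there are at most $h^N\cdot N!=2^{O(N\log N)}$ choices for (ii); since $N=O(hn)$ and $\log N=O(\log(hn))=O(\log n)$, each of these is $2^{O(hn\log n)}$, so $f(h,n)$ is bounded by the number of certificates, which is $2^{O(hn\log n)}$.

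It remains to prove the normalization lemma, and this is where the real work lies — it is the step at which we improve on the estimate $2^{O(hn\log^2 n)}$ of Mukkamala, Pach, and P\'alv\"olgyi, whose argument in effect only bounds the complexity of a single obstacle by $O(n\log n)$. Assign each non-edge $\{s,t\}$ of $G$ to one obstacle whose interior the segment $\overline{st}$ crosses. It then suffices, for each obstacle $O_i$ separately, to produce a simple polygon $O_i'\subseteq O_i$ with $O(n)$ vertices that still crosses $\overline{st}$ for every non-edge assigned to $O_i$; since $O_i'\subseteq O_i$, the obstacles remain pairwise disjoint and disjoint from $S$ and no edge of $G$ is hit, so $G$ is still represented. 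To build such an $O_i'$ I would work inside $O_i$ with the arrangement $\mathcal A$ of the $\binom n2$ lines spanned by $S$; general position forces every line of $\mathcal A$ to carry exactly two points of $S$, so each constraint ``$O_i'$ must reach $\overline{st}$'' is pinned down by which cell of $\mathcal A$ inside $O_i$ the required crossing lives in, and the task reduces to threading a single polygon of linear complexity through $O_i$ that visits all the required cells while staying inside $O_i$. The hard part will be carrying out this surgery so that the complexity remains $O(n)$ rather than growing proportionally to the number of non-edges assigned to $O_i$ (which can be quadratic), i.e.\ arguing that the many stabbing requirements can be realized by only linearly many turns; this is the one place where the logarithmic-factor saving over Mukkamala, Pach, and P\'alv\"olgyi is obtained, and I expect it to be the main obstacle of the proof.
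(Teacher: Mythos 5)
Your high-level plan is the right one and matches the paper's: reduce to showing that an obstacle representation can be normalized so that the total obstacle complexity is $O(hn)$, then encode by the order type of the $O(hn)$ points and invoke the Goodman--Pollack bound. Your reduction to $h \le \binom n2$, the perturbation to general position, and the final counting step (order type plus polygon-partition data determines the graph; $N=O(hn)$, $\log N = O(\log n)$) are all correct. But the entire content of the theorem lives in the normalization lemma, and you have not proved it --- you explicitly flag it as ``the main obstacle of the proof'' and give only a sketch (``thread a polygon of linear complexity through the required cells of the line arrangement inside $O_i$''). There is no argument offered for why this threading can be done with $O(n)$ rather than, say, $\Theta(n^2)$ turns, and this is not obvious: the set of arrangement cells to be visited can have quadratic size and complicated topology inside a non-convex $O_i$, and ``visiting all required cells'' is not even quite the right formulation (you must \emph{cross} each of up to $\Theta(n^2)$ specific segments, not merely touch a cell). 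So there is a genuine gap: the proposal contains the frame of a proof but not its substance.

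For comparison, the paper takes a different route to the normalization. It does not shrink a given obstacle; instead it considers a \emph{minimal} representation (first minimizing the number of obstacles, then the total number of obstacle vertices), restricts to connected $G$, and bounds the vertex count by a charging scheme. Convex obstacle vertices are split into \emph{blocking} ones, each forced by a non-edge; a ``responsibility'' relation charges each blocking vertex to a graph vertex, and a topological argument (Lemmas~\ref{lem-boundaryIntersections} and~\ref{lem-responsbility}) shows each graph vertex is responsible for at most two, giving $\le 2n$ blocking vertices per obstacle. Non-blocking convex vertices are charged to reflex vertices of the same obstacle ($\le 2r_O$), and the reflex vertices are bounded \emph{globally} by $2n+5$ via a sweep argument that, in a minimal representation, rules out merging obstacles and ``cutting segments''. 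Note that this last bound is global across all obstacles, not per-obstacle, and the argument crucially uses obstacle-count minimality and connectivity --- features your per-obstacle shrinking plan does not provide. You would either need to reprove something like these charging lemmas for your shrunken $O_i'$, or find a genuinely different construction of a low-complexity sub-obstacle; as written, the proposal only reduces Theorem~\ref{thm-complexity} to an unproved claim of comparable difficulty.
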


We also prove the following asymptotically tight upper bound on $\fc(h,n)$.

\begin{theorem}
  \label{thm-complexityConvex}
  For all positive integers $h$ and $n$, we have
  $\fc(h,n) \in 2^{O(n(h+\log n))}$.
\end{theorem}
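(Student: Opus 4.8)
The plan is to bound $\fc(h,n)$ by counting the combinatorial data of a convex obstacle representation: the order type of the point set, and, for each obstacle, the set of vertex pairs that it blocks. Fix an $n$-vertex graph $G$ with a convex obstacle representation given by a point set $S$ in general position, $|S|=n$, and convex obstacles $C_1,\dots,C_h$. For each $i$ let $B_i=\{\{s,t\}\in\binom S2 : \overline{st}\cap C_i\neq\emptyset\}$. Since the non-edges of $G$ are exactly $B_1\cup\dots\cup B_h$, the graph $G$ is, as an abstract graph, determined by the order type $\tau$ of $S$ (which fixes a labelling of $S$, hence identifies each $B_i$ with a subset of $\binom{[n]}{2}$) together with the tuple $(B_1,\dots,B_h)$. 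It is classical that $n$ points in general position realise only $2^{O(n\log n)}$ order types. So, writing $\mathcal B_\tau$ for the family of all sets occurring as the blocking pattern $B(C)$ of a \emph{single} convex obstacle $C$ over all point sets of order type $\tau$,
\[
  \fc(h,n)\ \le\ \sum_\tau |\mathcal B_\tau|^{\,h}\ \le\ 2^{O(n\log n)}\cdot\Bigl(\max_\tau|\mathcal B_\tau|\Bigr)^{h},
\]
and it suffices to prove the \textbf{key claim}: $|\mathcal B_\tau|\le 2^{O(n)}$ for every order type $\tau$.

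The starting point for the key claim is an elementary \emph{cone characterisation}. For $s\in S$ and a convex obstacle $C$ disjoint from $S$, let $\mathrm{Cone}_s(C)$ be the closed angular sector at $s$ consisting of all directions $d$ such that the ray from $s$ in direction $d$ meets $C$. Since $C$ is convex and $s\notin C$, the set $C$ lies in an open half-plane bounded by a line through $s$, so this sector has aperture $<\pi$; consequently $\mathrm{Cone}_s(C)\cap S$ is a contiguous arc in the radial order of $S\setminus\{s\}$ around $s$, an order determined by $\tau$. One checks that for all $s,t\in S$,
\[
  \overline{st}\cap C\neq\emptyset
  \iff
  t\in\mathrm{Cone}_s(C)\ \text{and}\ s\in\mathrm{Cone}_t(C),
\]
the forward direction because walking from $s$ to $t$ one enters $C$ before reaching $t$ (and symmetrically for $t\to s$), the backward direction because then $C\cap\ell_{st}$ lies in the open segment $(s,t)$. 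Hence $B(C)$ depends on $C$ only through the \emph{cone system} $\bigl(\mathrm{Cone}_s(C)\bigr)_{s\in S}$, i.e.\ only through the two supporting lines of $C$ through each point of $S$; and by replacing $C$ with the convex hull of its at most $2n$ tangency points with these lines (which changes neither the supporting lines nor the cone system) we may assume $C$ is a convex polygon with at most $2n$ vertices.

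It remains to show that, for a fixed $\tau$, only $2^{O(n)}$ cone systems are realisable; \textbf{this is the step I expect to be the main obstacle.} My approach would be a rotational sweep: rotate a directed supporting line $L_\theta$ of $C$ once around $C$. There are exactly $2n$ critical directions at which $L_\theta$ passes through a point of $S$; between consecutive critical directions the set $T(\theta)\subseteq S$ of points strictly on the far side of $L_\theta$ is constant and changes by one element at each critical direction; and the two critical directions associated with a point $s$ are exactly the directions of its two supporting lines through $s$, so the cyclic sequence of the $2n$ labelled enter/leave events determines the whole cone system, hence $B(C)$. A crude count of such cyclic sequences only gives $2^{O(n\log n)}$ --- reproducing the bound $\fc(h,n)\in 2^{O(hn\log n)}$ of Mukkamala, Pach, and P\'{a}lv\"{o}lgyi --- so the real work is to use convexity to cut the count to $2^{O(n)}$: one would argue that, given $\tau$, the coarse cyclic order of the events is forced up to $n^{O(1)}$ choices (for instance it is governed by the radial order of $S$ ``seen through'' $C$, which ranges over only $n^{O(1)}$ possibilities once $\tau$ is fixed), while the remaining refinement is local, contributing $O(1)$ bits per point, for $2^{O(n)}$ cone systems in total. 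Plugging $|\mathcal B_\tau|\le 2^{O(n)}$ into the displayed bound yields $\fc(h,n)\le 2^{O(n\log n)}\cdot 2^{O(nh)}=2^{O(n(h+\log n))}$. The exponent is optimal up to the constant, as witnessed by explicit graph families related to the construction behind Theorem~\ref{thm-lowerBoundConvex}; I do not elaborate on that direction here.
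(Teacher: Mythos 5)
Your overall strategy matches the paper: encode the order type of the $n$ vertices (at cost $2^{O(n\log n)}$) and, per obstacle, the blocking pattern it induces, reducing everything to bounding $|\mathcal B_\tau|$, the number of blocking patterns realizable by a single convex obstacle for a fixed order type. Your ``cone characterisation'' is exactly the paper's observation that for each vertex $v$ the set of rays hitting a convex obstacle $O$ is an interval $I_O(v)$ in the radial order, and that $\{u,v\}$ is blocked by $O$ iff $u\in I_O(v)$ and $v\in I_O(u)$; that part is fine.

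The genuine gap is the key claim $|\mathcal B_\tau|\le 2^{O(n)}$, which is where the entire improvement over Mukkamala--Pach--P\'{a}lv\"{o}lgyi lives. You explicitly flag this as the hard step and then leave it as a sketch that does not close: the rotational-sweep picture gives a cyclic sequence of $2n$ labelled enter/leave events, and you correctly note that counting these na\"ively reproduces only $2^{O(n\log n)}$, i.e.\ the old bound. Your proposed fix --- that ``the coarse cyclic order of the events is forced up to $n^{O(1)}$ choices'' governed by ``the radial order of $S$ seen through $C$,'' with ``$O(1)$ bits per point'' of local refinement --- is not substantiated, and I don't see how to make it work: the angular order of $S$ around $C$ is not determined by $\tau$ up to $n^{O(1)}$ ambiguity, since it depends heavily on where $C$ sits relative to $S$. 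The paper's proof of this step is the real content of the argument: it passes to the dual line arrangement $\mathcal A$, represents the upper and lower tangent maps of $C$ as a concave curve $\tau^*$ and a convex curve $\beta^*$, and bounds the number of distinct cutpaths of such a curve through $\mathcal A$ by $2^{O(n)}$ using Knuth's $O(3^n)$ bound on cutpaths of a pseudoline in a pseudoline arrangement. Even that requires extra work, because $\tau^*$ is \emph{not} a pseudoline with respect to $\mathcal A$ (lines dual to points vertically above $C$ are crossed twice), and the paper spends most of Lemma~\ref{lem:cutpaths} splitting $\mathcal A$ and cutting/re-extending lines precisely to reduce to the pseudoline case, arriving at $O(216^n)$ cutpaths per tangent map. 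Without some argument of this kind (or an equally careful primal one), the central claim of your proof is unproved, and what you have established only recovers the previously known $2^{O(hn\log n)}$ bound.
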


The upper bound 
in Theorem~\ref{thm-complexityConvex} is asymptotically tight for $h < n$ as Balko, Cibulka, and Valtr~\cite{bcv-dgusno-DCG18} proved that for every pair of integers $n$ and $h$ satisfying $0 < h < n$ we have $\fc(h,n) \in 2^{\Omega(hn)}$. 
Their result is stated for $f(h,n)$, but it is proved using  convex obstacles only.
This matches our bound for $h \in \Omega(\log{n})$.
Moreover, they also 
showed that $\fc(1,n) \in 2^{\Omega(n\log{n})}$, which matches the bound from Theorem~\ref{thm-complexityConvex} also for $h \in O(\log{n})$.

Since 
trivially 
$h \leq \binom{n}{2} \leq n^2$, we get $\log{h} \leq 2\log{n}$, and thus the bound from
Theorem~\ref{thm-complexity} can be rewritten as
$f(h,n) \leq 2^{g(n) \cdot (h/\log{n})} \leq 2^{g(n) \cdot (2h/\log{h})} \in 2^{g(n) \cdot o(h)}$,
where $g(n) \in O(n\log^2{n})$.
Therefore, we get the following corollary, confirming Conjecture~\ref{conj-DujMor}.

\begin{corollary}
\label{cor-complexity}
For all positive integers $h$ and $n$, we have $f(h,n) \in 2^{g(n) \cdot (2h/\log{h})}$,
where $g(n) \in O(n\log^2{n})$.
\end{corollary}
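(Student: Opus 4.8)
The plan is to derive the corollary directly from Theorem~\ref{thm-complexity} by an elementary manipulation of the exponent; all of the combinatorial content already sits inside that theorem, so what is left is pure bookkeeping. Concretely, Theorem~\ref{thm-complexity} supplies a constant $c>0$ with $f(h,n)\le 2^{c\,h n\log n}$ for all positive integers $h$ and $n$. I would set $g(n):=c\,n\log^2 n$, which lies in $O(n\log^2 n)$ as required, and note the identity $c\,h n\log n = g(n)\cdot\bigl(h/\log n\bigr)$, so that $f(h,n)\le 2^{g(n)\cdot(h/\log n)}$.

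The only remaining step is to pass from $h/\log n$ to $2h/\log h$. For this I would invoke the trivial bound $h\le\binom{n}{2}\le n^2$, which is also used in the discussion preceding the statement: it gives $\log h\le 2\log n$, hence $1/\log n\le 2/\log h$, and therefore $h/\log n\le 2h/\log h$. Since the exponent $t\mapsto 2^{g(n)\cdot t}$ is monotone in $t$, this yields $f(h,n)\le 2^{g(n)\cdot(2h/\log h)}$, which is exactly the asserted bound; the supplementary observation that $2h/\log h\in o(h)$ as $h\to\infty$ is then immediate, and this is what recovers Conjecture~\ref{conj-DujMor}.

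There is essentially no genuine obstacle here — the single point deserving a word of care is the degenerate case $h=1$, where $\log h=0$ and the expression $2h/\log h$ is formally undefined. I would handle this by restricting to $h\ge 2$ (so that $\log h\ge 1$ and all estimates above are valid), or equivalently by reading the stated inequality as vacuous at $h=1$, since only the asymptotic behaviour in $h$ is relevant for the intended application to the conjecture. Modulo this technicality, the corollary is an immediate consequence of Theorem~\ref{thm-complexity}.
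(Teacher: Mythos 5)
Your argument is correct and essentially identical to the paper's: the paper likewise sets $g(n)\in O(n\log^2 n)$, rewrites the Theorem~\ref{thm-complexity} exponent as $g(n)\cdot(h/\log n)$, and then uses $h\le\binom{n}{2}\le n^2$ (hence $\log h\le 2\log n$) to pass to $g(n)\cdot(2h/\log h)$. Your remark about the $h=1$ degeneracy is a sensible precaution the paper leaves implicit, but otherwise nothing differs.
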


It is natural to ask about estimates on obstacle numbers of fixed drawings, that is, considering the problem of estimating $\obs(D)$.
The parameter $\obs(D)$ has been considered in the literature; e.g., by Johnson and Sar\i\"oz~\cite{js-rpslg-CCCG14}.
Here, we provide a quadratic lower bound on $\max_D \obs(D)$ where the maximum ranges over all drawings of graphs on $n$ vertices.

\begin{theorem}
  \label{thm-drawing}
  There is a constant $\delta>0$  such that, for every $n$, there exists a
  graph $G$ on $n$ vertices and a drawing $D$ of $G$ such that
  $\obs(D) \geq \delta \cdot n^2$.
\end{theorem}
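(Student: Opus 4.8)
The plan is to exhibit, for each $n$, a graph $G$ on $n$ vertices together with a straight-line drawing $D$ in which $\Omega(n^2)$ non-edges are \emph{trapped}, each inside a bounded face and no two in the same face. Call a non-edge $uv$ trapped in a face $F$ if the segment $\overline{uv}$ meets no edge of $D$; then $\overline{uv}$ lies in the closure of the single face $F$. Since an obstacle is a simple polygon disjoint from all edges of $D$, it is contained in one face of $D$, and an obstacle blocking $\overline{uv}$ must lie in $F$. Hence, if $\Omega(n^2)$ non-edges are trapped in $\Omega(n^2)$ distinct bounded faces, every obstacle representation of $D$ uses at least one obstacle per such face, i.e.\ $\obs(D)\in\Omega(n^2)$.

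Two observations shape the choice of $G$. First, a vertex $v$ lies on the boundary of exactly $\deg_G(v)$ faces of $D$, so at most $|E(G)|$ faces have two vertices of $G$ on their boundary; since a trapped non-edge has both endpoints on the boundary of its face, reaching $\Omega(n^2)$ trapped non-edges forces $|E(G)|\in\Omega(n^2)$. Second, we also need $\Omega(n^2)$ non-edges. So $G$ should be a \emph{balanced dense} graph, with $|E(G)|$ and the number of non-edges both $\Theta(n^2)$. Natural candidates are $K_{n/2,n/2}$, or the join $K_{n/2}+\overline{K_{n/2}}$ of a clique and an independent set (whose non-edges are exactly the $\binom{n/2}{2}$ pairs inside the independent part).

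The crux is designing the drawing $D$. I would aim to create $\Omega(n^2)$ \emph{empty quadrilateral faces}: faces $Q$ bounded by four edges $xy,yz,zw,wx$ of $D$ whose corners $x,y,z,w$ are vertices of $G$ and whose interior contains no vertex and no edge of $D$. Then each diagonal of such a $Q$ is a chord of the convex region $Q$, so any diagonal that is a non-edge is trapped in $Q$; with $G=K_{n/2,n/2}$ one can arrange the four corners of every such quadrilateral to alternate between the two parts, so that \emph{both} diagonals are within-part non-edges, each trapped in that quadrilateral. Producing $\Omega(n^2)$ such quadrilaterals means laying out the vertices so that the drawn edges carve out $\Theta(n^2)$ small cells while the selected non-edges avoid every drawn edge; one should also seal off the unbounded face (a single obstacle in the outer face could otherwise handle all non-edges reaching it), e.g.\ by ensuring every edge of the convex hull of the point set is an edge of $G$.

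The main obstacle is precisely this geometric construction, where a real tension must be overcome: to have $\Omega(n^2)$ bounded faces the drawn edges must cross each other $\Omega(n^2)$ times, yet each of the $\Omega(n^2)$ chosen non-edges must avoid all of these drawn edges and, crucially, must not be reachable from a common large face. Confirming that the segment between two non-adjacent vertices really does remain inside one small cell — despite the $\Theta(n)$ edges emanating from every nearby vertex of the dense graph — is the delicate part, and is where the explicit coordinates carry the argument. Once such a drawing is in hand, counting the trapped non-edges and the distinct faces is immediate and yields the constant $\delta$.
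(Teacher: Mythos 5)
Your key reduction is sound and is the same as the paper's: any obstacle is confined to a single face of the drawing, so it can only block the non-edges that meet that face, and therefore a lower bound on $\obs(D)$ follows from a large set of non-edges that is "spread out" over faces. The difference is in how you plan to certify the spreading. You ask for the strictly stronger condition that $\Omega(n^2)$ non-edges are each \emph{entirely contained} in a distinct bounded face (which would give one obstacle per non-edge), whereas the paper only shows that each face of its drawing is crossed by at most two segments from a chosen $\Omega(n^2)$-size set $X$ of non-edges (which gives one obstacle per two non-edges, still $\Omega(n^2)$). In the paper's construction the non-edges are long and traverse many faces; what makes the count work is a structural property of the faces (all inner faces of the complete graph on a carefully built convex cup are ``4-cap free,'' i.e.\ bounded from above by at most two edges) together with the fact that the chosen non-edges run just below existing edges. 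So the paper sidesteps trapping entirely.

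The genuine gap in your proposal is the construction itself, and you say so: ``The main obstacle is precisely this geometric construction\dots is where the explicit coordinates carry the argument.'' As written, the proof does not exist yet. Your counting observations (each vertex lies on $\deg_G(v)$ faces, so at most $|E(G)|$ faces have two graph-vertices on their boundary, so the graph must be dense) are correct necessary conditions, but they don't produce a drawing. Moreover, the concrete candidates you sketch do not obviously deliver trapped non-edges: in natural placements of $K_{n/2,n/2}$ or of $K_{n/2}+\overline{K_{n/2}}$, a within-part non-edge between two far-apart vertices crosses $\Theta(n)$ edges emanating from the vertices between them, so it is not contained in any single face. Building $\Omega(n^2)$ empty quadrilateral faces whose diagonals are precisely the intended non-edges---while also sealing the outer face and keeping general position---is exactly the hard part, and it is plausibly harder than the paper's route, since the paper never needs a non-edge to avoid all edges; it only needs the number of ``first-from-above'' edges per face to be at most two. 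To turn your sketch into a proof you would either need to exhibit such a drawing explicitly, or relax to the paper's weaker per-face bound and supply a construction realizing it (for instance the cup-plus-offset construction the paper uses).
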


The bound from Theorem~\ref{thm-drawing} is asymptotically tight as we trivially have $\obs(D) \leq \obsc(D) \leq \binom{n}{2}$ for every drawing $D$ of an $n$-vertex graph.
This also asymptotically settles the problem of estimating the convex obstacle number of drawings.

Next, we turn our attention to algorithms for computing the obstacle number. We establish fixed-parameter tractability for the problem when parameterized by the size of a minimum vertex cover of the input graph $G$, called the \emph{vertex cover number} of $G$.

\begin{restatable}{theorem}{thmfpt}
  \label{thm:fpt}
  Given a graph $G$ and an integer $h$, the problem of deciding
  whether $G$ admits an obstacle representation with $h$ obstacles is
  fixed-parameter tractable parameterized by the vertex cover number
  of~$G$.
\end{restatable}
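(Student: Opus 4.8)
The plan is to combine the twin structure forced by a small vertex cover with a reduction to the existential theory of the reals, denoted $\exists\R$, which is decidable. Let $k$ be the vertex cover number, fix a minimum vertex cover $C$ of $G$, and let $I=V(G)\setminus C$, which is independent. Every vertex of $I$ is adjacent in $G$ only to vertices of $C$, so the vertices of $I$ split into at most $2^k$ classes of pairwise non-adjacent false twins according to their neighbourhood in $C$; all vertices of one class have \emph{identical} visibility requirements towards the rest of $G$. The core obstacle is that $|I|$, and hence the number of obstacles $h$, may be arbitrarily large while $k$ is fixed, so one cannot simply guess a representation. Instead I would prove a \emph{normal-form lemma}: $G$ has an obstacle representation with $m$ obstacles if and only if it has one of a restricted shape encodable by only $f(k)$ real parameters; the problem can then be solved by enumerating the finitely many combinatorial types of such normal forms and testing realisability of each by a single $\exists\R$ query.

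For the normal form I would first \emph{collapse} every twin class: argue that the vertices of a class can be relocated into a bounded number $N(k)$ of pairwise-disjoint tiny ``cluster'' disks chosen so small that, seen from anything outside, each cluster behaves like a single point. The bound $N(k)$ should come from the fact that the only objects that can distinguish two twins are the at most $k$ vertices of $C$ and the at most $2^k$ other clusters, whose mutual arrangement has complexity bounded in $k$; hence a twin can occupy only boundedly many ``roles''. Inside a cluster every pair is a non-edge and must be blocked; here I would use that an independent set has obstacle number $1$ (it is outerplanar, hence has even an outside-obstacle representation) and, more delicately, that this one blocker can be grown as a thin tentacle from an obstacle already present in the representation --- one blocking some non-edge incident to the class --- so that collapsing does not create obstacles. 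After collapsing, the representation becomes an obstacle representation of a contracted graph $G^\star$ on at most $k+N(k)\cdot 2^k=:F(k)$ vertices. Finally, because $G^\star$ has boundedly many vertices, every obstacle in its representation may be replaced by a simple polygon of complexity $\bigoh(F(k)^2)$: take one point of the obstacle on each non-edge it must block, join these points by a tree lying inside the obstacle, and fatten it slightly; and $\obs(G^\star)\le\binom{F(k)}{2}$. Altogether the representation is encoded by the (finite) choice of contraction, the order type of $F(k)$ points, and $\bigoh(F(k)^2)$ polygons with $\bigoh(F(k)^2)$ vertices each --- in total $\bigoh(F(k)^4)$ real coordinates subject to $\bigoh(F(k)^4)$ polynomial constraints expressing exactly the required and the forbidden visibilities of $G$.

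The algorithm then enumerates all admissible contractions and all order types of at most $F(k)$ points --- finitely many of each, the number depending only on $k$ --- and for each forms the existential sentence over $\R$ asserting that polygons of the prescribed combinatorics realising precisely the adjacencies of $G$ exist, deciding it with Renegar's algorithm in time bounded in $k$ alone. The minimum number of obstacles over the feasible sketches equals $\obs(G)$: every feasible sketch blows up to a genuine representation of $G$ with the same number of obstacles, and, conversely, the normal-form lemma turns an optimal representation into a feasible sketch with no more obstacles. We answer ``yes'' exactly when this minimum is at most $h$ (and ``yes'' outright when $h\ge\binom{n}{2}$, since $\obs(G)\le\binom{n}{2}$ always). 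Computing $C$ and the twin classes is polynomial in $n$, so the total running time is $g(k)\cdot n^{\bigoh(1)}$ and the problem is fixed-parameter tractable in the vertex cover number.

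The hard part is the normal-form lemma, and within it two points especially. First, performing the twin-collapse \emph{without spending extra obstacles}: one must place the $N(k)$ clusters of a class so that each cluster's cone of directions towards the (at most $k$) cover vertices it has to see is narrow enough for a single comb-like blocker, reroute one existing obstacle into all of those clusters while keeping it connected and simple, still blocking everything it blocked before, and never touching an edge --- which needs a careful perturbation argument and a good placement of clusters relative to $C$. Second, showing the collapse is \emph{reversible without side effects}: blowing a cluster, together with its tentacle blocker, back up into an $\epsilon$-small point set must neither create nor destroy a visibility, which is intuitively a limit/compactness statement about the space of representations but has to be made precise. Once these are settled, bounding obstacle complexity and the $\exists\R$ reduction are routine.
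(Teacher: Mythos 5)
Your high-level plan shares two ingredients with the paper's proof---partitioning $V\setminus C$ into at most $2^k$ twin types, and deciding a bounded-size instance by Renegar's $\exists\R$ algorithm---but the crucial reduction step is different, and the version you propose has a genuine gap. You want a \emph{normal-form lemma}: every obstacle representation of $G$ can be transformed into one in which each twin class occupies $N(k)$ tiny cluster disks. Your justification for bounding $N(k)$ is that ``the only objects that can distinguish two twins are the at most $k$ vertices of $C$ and the at most $2^k$ other clusters.'' This is not true: the obstacles themselves distinguish twins, and at this point in the argument neither their number $h$ nor their geometric complexity is bounded by a function of $k$. (You do later say ``yes outright when $h\ge\binom{n}{2}$,'' but that bound is not a function of $k$; the paper instead proves $\obs(G)\le 1+\binom{k}{2}+k\cdot 2^k$ so it may assume $h$ is bounded in $k$, which is what makes any finiteness argument about the structure possible.) Beyond that, a global ``relocate every twin into a cluster and regrow blockers as tentacles without breaking anything'' statement is a very strong structure theorem about arbitrary representations; you flag the two dangerous spots yourself, but you give no concrete mechanism for either, and it is unclear this statement is even true.

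The paper sidesteps the need for such a structure theorem: it proves an \emph{equivalence kernel} by \emph{deleting} one vertex at a time from a sufficiently large type. Deletion trivially preserves ``yes''-ness; the hard direction is showing that one can \emph{add back} a twin without spending an extra obstacle, and the paper never touches the positions of existing vertices. Concretely: in a hypothetical solution it edge-colors the clique on the large type $T$ by the first obstacle met along each segment, applies Ramsey to extract a large subclique $K$ monochromatic in some obstacle $p$, and then, fixing a cover vertex $x$ adjacent to $K$, analyzes the ``crevices'' that the points of $K$ carve in $p$ around $x$. It shows at most a bounded number of crevices are ``bad'' (intersected by another obstacle, containing a cover vertex, or pierced by an edge), and iteratively reshapes $p$ inside good crevices until one becomes ``perfect'' (contains a single type-$T$ vertex and nothing else), at which point a tiny comb can be drilled into $p$ there to host one more twin. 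This local insertion argument is what replaces your global collapse lemma. So while the target (a kernel of size $f(k)$ plus an $\exists\R$ call) is the same, the path you propose to the kernel is different and unsubstantiated, and you are missing both the Ramsey device and the prior bound on $h$ in terms of $k$ that make the paper's combinatorics finite.
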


The proof of Theorem~\ref{thm:fpt} is surprisingly non-trivial. On a high level, it uses Ramsey techniques to identify, in a sufficiently large graph $G$, a set of vertices outside a minimum vertex cover which not only have the same neighborhood, but also have certain geometric properties in a hypothetical solution $S$. We then use a combination of topological and combinatorial arguments to show that $S$ can be adapted to an equivalent solution for a graph $G'$ whose size is bounded by the vertex cover number of $G$---i.e., a kernel~\cite{DowneyF13}.

While the complexity of deciding whether a given graph has obstacle
number~1 is still open, the sandwich version of the problem
\cite{clpw-ov1o-GD16} and the version for planar visibility drawings
\cite{js-rpslg-CCCG14} have been shown NP-hard. 
We conclude with a simple new NP-hardness result.

\begin{restatable}{theorem}{hardness}
  \label{thm-hardness}
  Given a graph $G$ and a simple polygon $P$, it is NP-hard to decide
  whether $G$ admits an obstacle representation using $P$ as
  (outside-) obstacle.
\end{restatable}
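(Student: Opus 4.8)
The plan is to give a polynomial-time reduction from a standard NP-complete problem; I will use \textsc{3-Sat}, though any problem with a ``pick one of two states per element, subject to local constraints'' flavour (such as \textsc{Not-All-Equal 3-Sat}) would serve equally well. Given a formula $\varphi$ with variables $x_1,\dots,x_n$ and clauses $C_1,\dots,C_m$, the goal is to construct, in time polynomial in $|\varphi|$, a graph $G=G_\varphi$ and a simple polygon $P=P_\varphi$ such that $G$ has an obstacle representation using $P$ as its only (outside) obstacle if and only if $\varphi$ is satisfiable.

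Both objects are assembled from three kinds of pieces. The polygon $P$ is comb-like: a thick ``spine'' carrying many long, thin spikes (teeth). The intended property of a sufficiently long and thin spike $s$ is \emph{locality}: once the rest of the drawing is fixed, the vertex pairs whose connecting segment meets $s$ form exactly a small, prescribed family of ``candidate'' non-edges --- those pairs forced to pass just behind the tip of $s$ --- while every other pair is unaffected by $s$, so the spikes can be reasoned about independently. The graph $G$ has (i)~a rigid \emph{frame} whose non-edges, together with the spikes reserved for it, pin down the combinatorial configuration (cyclic order around $P$, which side of the spine, and so on) that all remaining vertices must realise in any valid representation; (ii)~for each variable $x_i$ a constant-size \emph{variable gadget} that, given the frame, can be realised in exactly two ways --- ``$x_i$ true'' and ``$x_i$ false'' --- according to which side of a pair of literal-spikes its vertices lie; and (iii)~for each clause $C_j$ a constant-size \emph{clause gadget} whose designated non-edges get blocked by the clause-spike of $C_j$ precisely when at least one literal of $C_j$ is in its satisfying state.

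For the forward direction, from a satisfying assignment we place the frame canonically, place each variable gadget in the dictated state, place the clause gadgets in the unique compatible way, and verify by a routine case analysis that the visibility graph with respect to $P$ is exactly $G$ and that $P$ sits in the outer face. For the backward direction we take an arbitrary representation of $G$ with obstacle $P$ and establish \emph{rigidity}: the frame non-edges --- each of which must be stabbed by $P$, while no drawn edge may be --- force the frame into the intended configuration up to a normalisation of the plane; this forces each variable gadget into one of its two states, so we can read off a truth assignment; and since the clause-spike of $C_j$ must block the designated non-edges of its gadget, that assignment must satisfy $C_j$. Hence $\varphi$ is satisfiable.

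I expect the rigidity step to be the main obstacle. Visibility through a single, possibly non-convex polygon is a global and rather floppy condition, so the real work is to design the frame and the spikes so that the combinatorial type of \emph{every} valid placement is forced, while the obstacle stays one simple polygon; for the ``outside-obstacle'' variant one must also arrange that $P$ necessarily ends up in the outer face in every valid placement, which should be easy to guarantee by keeping all vertices to one side of the spine. Since the very same construction works whether or not we insist that $P$ be an outside obstacle, both forms of the statement follow from a single reduction. A secondary technical point is to certify the spike-locality claim quantitatively: one must check that spikes described by polynomially many bits can be made thin enough that each spike's blocking effect is confined to its intended candidate pairs, regardless of where, within their permitted ranges, the other vertices lie.
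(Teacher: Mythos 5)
Your proposal is a high-level plan for a gadget-based SAT reduction, not a proof: you say yourself that the frame rigidity step is "the main obstacle" and "the real work," and you do not supply the frame, the variable gadgets, the clause gadgets, or the rigidity argument. For this problem that gap is exactly the hard part. A single polygonal obstacle has a global effect on visibility, and nothing in your sketch explains how a single simple polygon can simultaneously serve as a "spike" for many gadgets while remaining connected and while letting you argue that \emph{every} valid placement of $G$ has the intended combinatorial type. Without a concrete construction and a concrete rigidity lemma, the claim that the backward direction works is unsupported.

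The paper takes a genuinely different route that sidesteps the rigidity problem entirely: it reduces from \textsc{3-Partition} rather than \textsc{3-Sat}. The graph $G$ is a $3m$-clique where clique vertex $v_i$ carries $a_i$ pendant leaves, and $P$ is an orthogonal "comb" with $m$ groups of $B$ unit bays separated by wide, flat corridors. The key structural observations replace your hoped-for rigidity lemma with two short, robust facts: (i) leaves are pairwise non-adjacent, so every convex sub-region of $P$ contains at most one leaf, which forces (almost) all leaves into bays; and (ii) the corridor geometry is tuned so that every clique vertex sees every other clique vertex, but a clique vertex can only see into one group of bays, so the clique vertices partition into groups. Because $a_i\in(B/4,B/2)$, exactly three clique vertices land in each group, and since each group has only $B$ bays, the three values must sum to $B$. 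This is a packing/counting argument, not a logical one, and it is much better matched to the "one polygon, global visibility" setting than a SAT-gadget scheme; the counting constraint is inherently global and doesn't require the kind of local forcing you would need to implement Boolean gadgets. If you want to salvage a SAT-style approach you would have to actually produce the frame and prove that it forces a normalized embedding, which is precisely what the paper avoids having to do.
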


In the remainder of this paper, for any positive integer~$k$, we will
use $[k]$ as shorthand for $\{1,2,\dots,k\}$.

\section{Improved Lower Bounds on Obstacle Numbers}
\label{sec-lowerBound}

We start with 
the estimate $\obs(n) \in \Omega(n/\log\log n)$
from Theorem~\ref{thm-lowerBound}.
The proof is based on Theorem~\ref{thm-complexity} (cf. Section~\ref{sec-complexity}) 
and the following result by Dujmovi\'{c} and Morin~\cite{dm-on-15}, which shows that an 
upper bound for $f(h,n)$ translates into a 
lower bound for $\obs(n)$.

\begin{theorem}[\cite{dm-on-15}]
\label{thm-blackBox}
For every positive integer $k$, let $H(k) = \max\{h \colon f(h,k) \leq 2^{k^2/4}\}$.
Then, for any constant $c >0$, there exist $n$-vertex graphs with obstacle number at least 
$\Omega\left(\frac{nH(c\log{n})}{c\log{n}}\right)$.
\end{theorem}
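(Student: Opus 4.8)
The plan is to combine a short counting step with an explicit ``blow-up'' construction and a geometric separation lemma; the function $f$ enters only to produce a single small ``hard'' graph. Fix the constant $c>0$ and assume $n$ is large (for small $n$ the claimed bound is vacuous). Set $k:=\lceil c\log n\rceil$, $t:=\lfloor n/k\rfloor$, and $h:=H(k)$. By the definition of $H$, at most $f(h,k)\le 2^{k^{2}/4}$ of the $2^{\binom{k}{2}}$ graphs on a fixed $k$-element vertex set have obstacle number at most $h$; since $\binom{k}{2}=\frac{k^{2}}{2}-\frac{k}{2}>\frac{k^{2}}{4}$ for $k\ge 3$, there is a $k$-vertex graph $G_{0}$ with $\obs(G_{0})\ge h+1$.

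\textbf{Construction.} Let $G$ be the $n$-vertex graph with $V(G)=W_{1}\cup\dots\cup W_{t}\cup Z$ (a disjoint union), where $|W_{i}|=k$ for each $i\in[t]$ and $|Z|=n-tk$, obtained by placing a copy of $G_{0}$ on each block $W_{i}$, adding all edges between vertices of distinct blocks, and leaving $Z$ isolated. Then every non-edge of $G$ has both endpoints in a single block $W_{i}$ or is incident to $Z$. We show that $\obs(G)=\Omega(th)$; as $t=\Theta\!\big(n/(c\log n)\big)$ and $h=H(k)$, this gives the theorem.

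\textbf{Reduction to a geometric lemma.} Fix any obstacle representation of $G$ with obstacle set $\O$, and for $i\in[t]$ let $S_{i}\subseteq\O$ be the set of obstacles meeting some segment $\overline{uv}$ with $u,v\in W_{i}$ and $uv\notin E(G)$. Restricting the representation to the points of $W_{i}$ and to the obstacles in $S_{i}$ realizes every edge and every non-edge spanned by $W_{i}$ correctly, so it is an obstacle representation of $G[W_{i}]\cong G_{0}$ with $|S_{i}|$ obstacles; hence $|S_{i}|\ge\obs(G_{0})\ge h+1$. Summing over $i$ and interchanging the order of summation,
\[
   t(h+1)\ \le\ \sum_{i=1}^{t}|S_{i}|\ =\ \sum_{O\in\O}\bigl|\{\,i\in[t]:O\in S_{i}\,\}\bigr| .
\]
It therefore suffices to prove the following statement.

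\smallskip
\noindent\emph{Lemma.} There is an absolute constant $D$ such that in every obstacle representation of $G$, each obstacle lies in at most $D$ of the sets $S_{1},\dots,S_{t}$.
\smallskip

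\noindent Granting the Lemma, the display becomes $t(h+1)\le D\,|\O|$, so $\obs(G)\ge t(h+1)/D\in\Omega(th)$, as wanted.

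\textbf{Proving the Lemma --- the main obstacle.} Suppose an obstacle $O$ lies in $S_{a}\cap S_{b}$ with $a\ne b$. Then $O$ meets the convex hull $\mathrm{conv}(W_{a})$ (it crosses a segment spanned by $W_{a}$) and likewise meets $\mathrm{conv}(W_{b})$, while being disjoint from each of the $k^{2}$ segments joining $W_{a}$ to $W_{b}$, all of which are edges of $G$ and hence avoid every obstacle. The crux is to show that a single connected, simply connected region cannot, subject to this constraint, reach into more than $O(1)$ of the hulls $\mathrm{conv}(W_{i})$. The intended argument picks a witness point $p_{i}\in O\cap\mathrm{conv}(W_{i})$ for each block $i$ met by $O$, joins consecutive witnesses by arcs inside $O$ (possible since $O$ is path-connected), and tracks these arcs: using simple-connectivity of $O$ together with the fact that, for each pair of served blocks, the inter-block segments form a ``wall'' that $O$ must avoid, one derives that if $O$ met too many hulls then some subarc of $O$ would be forced to cross an edge of $G$, a contradiction. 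Making this rigorous requires a topological case analysis (whether the hulls of two served blocks are disjoint or overlap, and control of clustered point placements) and is the principal technical difficulty; the counting and the reduction above are routine. Combining the Lemma with the displayed inequality and substituting $k=\lceil c\log n\rceil$, $t=\lfloor n/k\rfloor$ gives $\obs(n)\ge\obs(G)\ge t(h+1)/D\in\Omega\!\big(nH(c\log n)/(c\log n)\big)$.
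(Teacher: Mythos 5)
Your counting step (existence of a $k$-vertex graph $G_0$ with $\obs(G_0)\ge H(k)+1$) and the reduction of the theorem to your Lemma are fine, but the Lemma itself---which you explicitly leave unproven as ``the principal technical difficulty''---is not merely a missing technicality: as stated it is false, so the intended route collapses at its central step. To see this, build a valid representation of your graph $G$ as follows. Draw each block $W_i$ as a tiny cluster of $k$ points in convex position, choosing the cyclic order so that some non-edge of the copy of $G_0$ joins two consecutive hull points, and orient the cluster so that this non-edge is the outward-facing hull edge; place the $t$ clusters far apart in convex position on a huge circle (and put $Z$ elsewhere). Every edge of $G$ incident to cluster $i$ (local edges of $G_0$ and all inter-block segments) leaves the two endpoints of that special non-edge toward the interior of the circle, so the outward side of that non-edge is locally free of edges, and the exterior of the circle contains no edges at all. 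Hence a single simple polygon shaped like a comb, whose spine lies far outside the circle and which sends one thin arm radially inward to cross the outward-facing non-edge of each cluster, avoids every edge and every vertex of $G$, yet belongs to all $t$ sets $S_1,\dots,S_t$. Completing this to a full obstacle representation (block each remaining non-edge by its own tiny obstacle placed generically on that segment) exhibits a representation of $G$ in which one obstacle lies in every $S_i$, so no absolute constant $D$ exists and the inequality $\sum_i|S_i|\le D|\O|$ fails. Your sketched topological argument cannot be repaired in this form precisely because the obstacle never needs to cross the inter-block ``walls'': it can route entirely through the outer face and reach each block from outside.

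Note also that the paper does not prove Theorem~\ref{thm-blackBox}; it imports it from \cite{dm-on-15}, so the comparison is with that argument rather than with anything in this text. The real content of the cited result is exactly the sharing analysis you tried to shortcut: intersecting some within-block non-edge is too weak a notion of ``serving'' a block, because an obstacle shared between two blocks is forced to lie, inside each of them, in the outer face of that block's local drawing (it cannot enter a face enclosed by that block's own edges without crossing them), and the genuine work is to turn this into a global count of essentially ``private'' enclosed faces per block while handling blocks nested inside faces of other blocks, non-edges of one block traversing another, and so on. As written, your proposal replaces that analysis by a claim that is both unproven and false, so the proof of the key lower-bound step is not established.
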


\begin{proof}[Proof of Theorem~\ref{thm-lowerBound}]
By Theorem~\ref{thm-complexity}, we have $f(h,k) \in 2^{O(h k \log k)}$ for every positive integer $k$ and thus $H(k) \in \Omega(k/\log{k})$.
Plugging this estimate with $k\in \Omega(\log n)$ into Theorem~\ref{thm-blackBox}, we get the lower bound $\obs(n) \in \Omega(n/\log\log{n})$.
\end{proof}

It remains to prove the lower bound $\obsc(n) \in \Omega(n)$ from Theorem~\ref{thm-lowerBoundConvex}.

\begin{proof}[Proof of Theorem~\ref{thm-lowerBoundConvex}]
By Theorem~\ref{thm-complexityConvex}, we have 
$f_c(h,n)\le 2^{\beta n(h+\log n)}$ for some constant $\beta>0$.
Since the number of graphs with $n$ vertices is $2^{\binom{n}{2}}$, we see that if  $2^{\binom{n}{2}} > 2^{\beta n(h+\log n)}$, then there is an $n$-vertex graph with convex obstacle number larger than $h$.
The inequality $\binom{n}{2} > \beta n(h+\log n)$ is satisfied for $h < \binom{n}{2}/(\beta n) - \log{n} \in \Theta(n)$.
Thus, $\obsc(n) \in \Omega(n)$.
\end{proof}

\begin{rem}
Observe that the proof of Theorem~\ref{thm-lowerBoundConvex} also works if we restrict ourselves to bipartite graphs, as the number of bipartite graphs on $n$ vertices is at least $2^{n^2/4}$.
Thus, the convex obstacle number of an $n$-vertex graph with bounded chromatic number can be linear in~$n$, which asymptotically matches the linear upper bound proved by Balko, Cibulka, and Valtr~\cite{bcv-dgusno-DCG18}.
\end{rem}

\section{On the Number of Graphs with Bounded Obstacle Number}
\label{sec-complexity}

In this section, we prove Theorems~\ref{thm-complexity}
and~\ref{thm-complexityConvex} about improved estimates on the number
of $n$-vertex graphs with obstacle number and convex obstacle number,
respectively, at most~$h$.

\subsection{Proof of Theorem~\ref{thm-complexity}}

We prove that the number $f(h,n)$ of $n$-vertex graphs with obstacle number at most~$h$ is bounded from above by $2^{O(h n \log n)}$. This improves a previous bound of $2^{O(h n \log^2 n)}$ by Mukkamala, Pach, and P{\'a}lv{\"o}lgyi~\cite{mpp-lbong-12} and resolves Conjecture~\ref{conj-DujMor} by Dujmovi\'{c} and Morin~\cite{dm-on-15}.
We follow the approach of Mukkamala, Pach, and P{\'a}lv{\"o}lgyi~\cite{mpp-lbong-12} by encoding the obstacle representations using point set order types for the vertices of the graph and the obstacles.
They used a bound of $O(n\log n)$ for the number of vertices of a single obstacle.
We show that under the assumption that the obstacle representation uses a minimum number of obstacle vertices, this bound can be reduced to $O(n)$.
As a result, the upper bound on $f(h,n)$ improves by a factor of $\Theta(\log n)$ in the exponent.

We say that an obstacle representation $(D,\mathcal{O})$ of a graph $G$ is \emph{minimal} if it contains the minimum number of obstacles and there is no obstacle representation $(D',\mathcal{O}')$ of $G$ with the minimum number of obstacles such that the total number of vertices of obstacles from~$\mathcal{O}'$ is strictly smaller than the number of vertices of obstacles from $\mathcal{O}$. 

Consider a graph~$G$ that admits an obstacle representation using at most~$h$ obstacles. Our goal in the remainder of this section is to show that a minimal obstacle representation of~$G$ can be encoded using~$O(hn\log n)$ bits.

\begin{claim*}\label{claim:connected}
We may assume without loss of generality that~$G$ is connected.
\end{claim*}
\begin{proof}
Suppose that~$G$ is disconnected. Then we encode~$G$ by (1)~its component structure and (2)~a separate obstacle representation for each component. For~(1) we can simply store for each vertex the index of its component, which requires~$\log n$ bits per vertex and $n\log n$ bits overall. 
For~(2) we know by assumption that~$G$ admits an obstacle representation using at most~$h$ obstacles. So this is certainly true as well for every component~$C$ of~$G$ because it suffices to remove all vertices of~$V\setminus C$ from some representation for~$G$. 

Now suppose that we show that a connected graph on $n$ vertices can be represented using at most~$\alpha hn\log n$ bits, for some constant~$\alpha>0$. Let~$C_1,\ldots,C_k$ be the components of~$G$, and let~$n_i$ denote the size of~$C_i$, for~$1\le i\le k$. Then the obstacle representations of all components together can be represented using at most~$\sum_{i=1}^{n}\alpha hn_i\log n_i\le\alpha hn\log n$ bits. Altogether this amounts to at most~$n\log n+\alpha hn\log n\in O(hn\log n)$ bits to represent~$G$.

Note that the encoding described above does not represent an obstacle representation for~$G$ in general. But it uniquely represents~$G$, which suffices for the purposes of counting and proving our bound.
\end{proof}

For a set $S \subseteq\R^2$, we use $\partial S$ to denote the boundary of $S$.
We start with the following auxiliary result, whose statement is illustrated in Figure~\ref{fig-responsibilityApp}(a).

\begin{lemma}
\label{lem-boundaryIntersections}
Let $G$ be a connected graph with an obstacle representation $(D, \mathcal{O})$ and let $O \in \mathcal{O}$.
Let $u$ be a vertex of $G$ and let $z_0,z_1,z_2$ be three distinct points of $\partial O$.
Then there is an~$i \in \{0,1,2\}$ and a polygonal chain $\gamma^u_{z_{i-1},z_i,z_{i+1}} \subseteq \partial O$ between $z_{i-1}$ and $z_{i+1}$ such that $\gamma^u_{z_{i-1},z_i,z_{i+1}}$ intersects the ray $\overrightarrow{uz_i}$ at some point different from $z_i$ (the indices are taken modulo 3).
\end{lemma}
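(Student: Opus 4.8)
The plan is to track the direction from $u$ to a point of $\partial O$ as that point runs once around the polygonal Jordan curve $\partial O$, and then to apply the intermediate value theorem. The only geometric input I need is that $u\notin O$: since $G$ is connected (and we may assume it has at least two vertices, the statement being vacuous otherwise), $u$ has an incident edge, and that edge---hence in particular its endpoint $u$---avoids every obstacle. As $O$ is a simple polygon, it is homeomorphic to a closed disc, so $\R^2\setminus O$ is connected and unbounded; thus $u$ lies in the unbounded face of $\partial O$, and the winding number of $\partial O$ around $u$ equals $0$.

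Next I would introduce the angular coordinate. Define $\theta\colon\partial O\to S^1$ by $\theta(p)=(p-u)/\lVert p-u\rVert$; this is continuous because $u\notin\partial O$. Regarding $\partial O$ as a topological circle, the degree of $\theta$ is exactly the winding number of $\partial O$ about $u$, which is $0$, so $\theta$ lifts along the covering map $\R\to S^1$ to a continuous function $\widetilde\theta\colon\partial O\to\R$. The point of passing to the lift is the elementary fact that if $q\in\partial O$ satisfies $\widetilde\theta(q)=\widetilde\theta(z_i)$, then $q$ lies on the ray $\overrightarrow{uz_i}$, since $q-u$ and $z_i-u$ then point in the same direction.

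Now write $a_j=\widetilde\theta(z_j)$ for $j\in\{0,1,2\}$. If two of the values coincide, say $a_j=a_k$ with $j\ne k$, then $z_k$ lies on $\overrightarrow{uz_j}$; since there are only three points, $z_k$ is one of the two cyclic neighbours of $z_j$ on $\partial O$, so for $i=j$ the point $z_k\ne z_i$ lies (as an endpoint) on either sub-arc of $\partial O$ between $z_{i-1}$ and $z_{i+1}$, and we are done. Otherwise $a_0,a_1,a_2$ are pairwise distinct; let $i$ be the index for which $a_i$ is their median, so that $a_i$ lies strictly between $a_{i-1}$ and $a_{i+1}$. Take $\gamma^u_{z_{i-1},z_i,z_{i+1}}$ to be the sub-arc of $\partial O$ from $z_{i-1}$ to $z_{i+1}$ that does \emph{not} pass through $z_i$; it is a polygonal chain because $\partial O$ is polygonal. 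Traversing it from $z_{i-1}$ to $z_{i+1}$, the continuous function $\widetilde\theta$ runs from $a_{i-1}$ to $a_{i+1}$, so by the intermediate value theorem it attains the value $a_i$ at some point $q$ of the arc. By the fact above, $q\in\overrightarrow{uz_i}$, and $q\ne z_i$ because $z_i$ is not on this arc, which proves the lemma.

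The only genuinely delicate step is the topological one---justifying that the angle function admits a continuous real-valued lift---and this rests entirely on the observation that $u$ lies outside the simply connected region $O$, so $\partial O$ does not wind around $u$. Everything afterwards is a one-line application of the intermediate value theorem plus the trivial bookkeeping for the degenerate case in which two of the $z_j$ lie, on the same side, on a common line through $u$. (One should double-check that the arc \emph{not} through $z_i$ is the correct choice: if instead one insisted on the arc through $z_i$, the claim can fail, e.g.\ when $\widetilde\theta$ is monotone along that arc; the median choice of $i$ together with the complementary arc is what makes the argument go through.)
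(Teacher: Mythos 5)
Your proof is correct, and it takes a genuinely different route from the paper's. The paper argues by contradiction: it assumes that for every $i$ the arc $\gamma_i\subseteq\partial O$ from $z_{i-1}$ to $z_{i+1}$ \emph{not} containing $z_i$ misses the ray $\overrightarrow{uz_i}$, and then asserts that $u$ must lie in the region bounded by $\gamma_0\cup\gamma_1\cup\gamma_2=\partial O$, i.e.\ inside $O$ --- contradicting that $u$, being non-isolated in the connected graph $G$, lies on an edge disjoint from every obstacle. That ``then $u$ lies inside $O$'' step is asserted rather than argued, whereas your proof makes precisely this topological content explicit and constructive: $u\notin O$ gives winding number $0$, hence the direction map $\theta$ lifts to a real-valued $\widetilde\theta$, and the intermediate value theorem applied to $\widetilde\theta$ on the complementary arc for the median value $a_i$ produces the required crossing $q\ne z_i$. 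Your handling of the degenerate case (two of the $a_j$ equal, so some $z_k$ is itself the witness, as an endpoint of either arc) is correct and is a case the paper's phrasing glosses over. Both proofs ultimately identify the same arc (the one avoiding $z_i$) and both hinge on $u\notin O$, so the underlying topology is the same fact, but yours trades the paper's short Jordan-curve-flavoured contradiction for a fully rigorous direct argument; the small price is the machinery of degree and lifting. One minor point: for $|V(G)|=1$ the claim is not literally vacuous --- $u$ could then sit inside $O$ and the conclusion would fail --- but the paper's proof likewise tacitly assumes $u$ is non-isolated, so this is a shared, harmless edge case; it would be slightly cleaner to say ``the hypothesis that $u$ is incident to an edge forces $|V(G)|\ge 2$'' rather than ``vacuous.''
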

\begin{proof}
Suppose for contradiction that there is no such $i$ and $\gamma$ for some vertex $u$ and an obstacle $O$.
For every $i \in \{0,1,2\}$, the boundary of $O$ can be partitioned into polygonal chains $\gamma_i$ and $\gamma'_i$ that meet at common endpoints $z_{i-1}$ and $z_{i+1}$ and are otherwise disjoint as the obstacle $O$ is a simple polygon.
Moreover, $\gamma'_i$ contains $z_i$ while $\gamma_i$ does not.
By our assumption, no polygonal chain $\gamma_i$ thus intersects the ray $\overrightarrow{uz_i}$.
Then, however, the vertex $u$ is contained in the region $R$ bounded by the closed curve $\gamma_0 \cup \gamma_1 \cup \gamma_2$.
Since $O$ is a simple polygon, we have $R=O$.
This, however, contradicts the fact that $(D,\mathcal{O})$ is an obstacle representation of $G$ as $u$ is a vertex of $G$ lying inside $O$ and some edge of $G$ containing $u$ is blocked by $O$.
Note that we use the fact that $G$ is connected and thus $u$ is non-isolated in $G$.
\end{proof}

\begin{figure}
    \centering
    \includegraphics{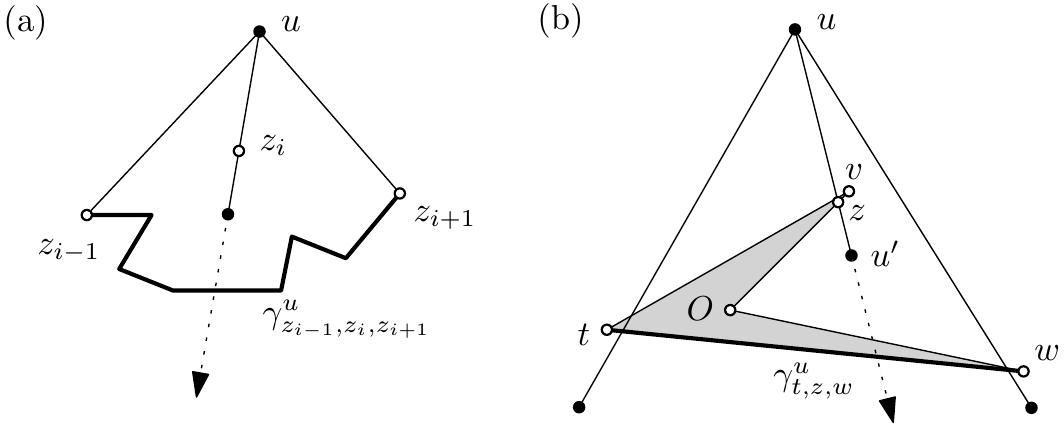}
    \caption{(a) An illustration of the statement of Lemma~\ref{lem-boundaryIntersections}. (b) An example where $u$ is responsible for $t$ and $w$ but not for $v$. The vertex $u'$ is responsible for $v$.}
    \label{fig-responsibilityApp}
\end{figure}

A vertex $v$ of a simple polygon $P$ is \emph{convex} if the internal angle of $P$ at $v$ is less than $\pi$; otherwise $v$ is a \emph{reflex} vertex of $P$.
Let $v$ be a convex vertex of an obstacle $O$ from a minimal obstacle representation $(D,\mathcal{O})$ of $G$ and let $e_1$ and $e_2$ be the two edges of $O$ adjacent to $v$.
Let $T$ be the triangle spanned by the edges $e_1$ and $e_2$.
Assume first that the interior of $T$ does not contain any point of $D$ nor a vertex of an obstacle from $\mathcal{O}$.
In this case, we call the vertex $v$ \emph{blocking}.

Since the representation $(D,\mathcal{O})$ is minimal, there is a line segment~$s$ representing a non-edge $e$ of $G$ such that $s$ intersects $e_1$ and $e_2$.
Otherwise, if $O$ has at least 4 vertices, we could remove $v$ and reduce the number of vertices of the obstacles from $\mathcal{O}$. 
If $s$ does not have any other intersections with obstacles from $\mathcal{O}$, then we say $e$ is \emph{forces} $v$. 
Then each blocking vertex of an obstacle with at least 4 vertices is forced by at least one non-edge of $G$. 
 
Let $u$ be an end-vertex of a non-edge $uu'$ that forces $v$ and let $z$ be an intersection point of $uu'$ with an edge of $\partial O$ incident to $v$.
We then say that $u$ is \emph{responsible} for $v$ if the following situation does not happen: there are non-edges of $G$ containing $u$ and forcing two other vertices $t$ and $w$ of $O$ that span the polygonal chain $\gamma^u_{t,z,w}$ from Lemma~\ref{lem-boundaryIntersections}; see  Figure~\ref{fig-responsibilityApp}(b).

We now prove a linear upper bound on the number of blocking vertices of an obstacle. 
To do so, we need the following auxiliary result.

\begin{lemma}
\label{lem-responsbility}
Let $G$ be a graph with a minimal obstacle representation $(D, \mathcal{O})$ and let $v$ be a blocking vertex of some obstacle $O \in \mathcal{O}$ with at least 4 vertices.
Then, there is at least one vertex of $G$ that is responsible for $v$.
\end{lemma}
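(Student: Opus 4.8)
The plan is to prove the contrapositive: assuming no vertex of $G$ is responsible for the blocking vertex $v$ of $O$, derive a contradiction with the minimality of $(D,\mathcal O)$, typically by exhibiting an obstacle representation of $G$ with the same number of obstacles but fewer total obstacle vertices (or, if $O$ were to lose enough vertices, fewer obstacles). Recall that since $v$ is blocking and $O$ has at least $4$ vertices, $v$ is forced by some non-edge $uu'$; let $z$ be an intersection point of $\overline{uu'}$ with an edge of $\partial O$ incident to $v$, with the other incident edge being hit at $z'$ by the same segment. The idea is that ``$u$ not responsible for $v$'' means there exist non-edges at $u$ forcing two other vertices $t,w$ of $O$ that span the chain $\gamma^u_{t,z,w}$ supplied by Lemma~\ref{lem-boundaryIntersections}; and similarly for $u'$. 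So if \emph{no} vertex is responsible for $v$, then in particular both $u$ and $u'$ fail to be responsible.

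First I would set up the local picture at $v$: the triangle $T$ spanned by the two edges $e_1,e_2$ of $O$ at $v$ has empty interior (no points of $D$, no obstacle vertices), and the forcing segment $\overline{uu'}$ crosses both $e_1$ and $e_2$, entering $T$ through one and leaving through the other, with no other obstacle intersections. The key geometric step is to use Lemma~\ref{lem-boundaryIntersections} applied to the vertex $u$ and the three points $z$ (on $\partial O$ near $v$), $t$, and $w$ on $\partial O$: it yields a polygonal chain along $\partial O$ and a ray from $u$ that re-enters $O$. The point of the ``responsible'' definition is precisely to forbid the bad configuration where the ray $\overrightarrow{uz}$ is already ``covered'' in both directions by other forced vertices $t,w$, so that $v$ is geometrically redundant for $u$. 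So the main work is: show that if this bad configuration happens for \emph{every} candidate end-vertex of \emph{every} non-edge forcing $v$ — i.e. both $u$ and $u'$, and no other vertex is responsible either — then we can modify $O$ near $v$ (e.g. push the vertex $v$ slightly, or ``cut the corner'' replacing $e_1,e_2$ by a single straight edge $\overline{z'z}$-like segment just inside $T$) without changing the visibility graph. Because $T$ is empty, cutting the corner keeps all edges of $G$ unblocked; the only danger is that some non-edge forced \emph{only} by $v$ becomes visible. But any such non-edge has an end-vertex $x$ that is responsible for $v$ unless it too realizes the bad $t,z,w$ configuration — and the existence of $t,w$ with the spanning chain means that non-edge is still blocked by the part of $\partial O$ between $t$ and $w$ after the local modification. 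Hence no non-edge is freed, the visibility graph is preserved, and we have strictly decreased the obstacle-vertex count, contradicting minimality.

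The hard part will be making the corner-cutting (or vertex-displacement) operation precise and verifying that it genuinely preserves the visibility graph in all cases — in particular ruling out the possibility that removing $v$ turns a non-edge into an edge even though that non-edge's end-vertices are all ``non-responsible''. This requires combining Lemma~\ref{lem-boundaryIntersections} (to produce, for each such end-vertex, the surviving blocking chain $\gamma^u_{t,z,w}$) with a careful topological argument that the segment of the corresponding non-edge which used to cross near $v$ now crosses $\partial O$ along that chain instead. A secondary subtlety is the interaction between different non-edges forcing $v$: one must check that the \emph{same} local modification works simultaneously for all of them, which is where one uses that $T$ is empty and that all these segments pass through the thin corner region at $v$. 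I would also need to handle the boundary case where $O$ has exactly $4$ vertices, so that deleting $v$ (rather than just relocating it) might be what is forced; but then after the modification $O$ has $3$ vertices, still a valid obstacle, and the count still drops, so minimality is again contradicted. Once these geometric verifications are in place, the conclusion — that some vertex must in fact be responsible for $v$ — follows immediately.
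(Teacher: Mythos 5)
Your plan is genuinely different from the paper's, and it contains a gap at the crucial step. The paper does \emph{not} argue by modifying $O$ and invoking minimality. Instead it argues directly about the geometry of the existing obstacle: take a non-edge $uu'$ forcing $v$, suppose $u$ is not responsible, and show that $u'$ \emph{must} then be responsible. If both $u$ and $u'$ fail, one gets non-edges $ua,ub,ua',ub'$ that force four distinct vertices of $O$, and the four resulting segments together with the chains $\gamma^u_{t,z,w}$ and $\gamma^{u'}_{t',z,w'}$ enclose the point $z$; but $\partial O$ must then cross one of $\overline{ua},\overline{ub},\overline{ua'},\overline{ub'}$ in its interior, contradicting that each of them forces a vertex (and hence meets $O$ only once). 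No obstacle is ever modified; minimality is used only beforehand, to guarantee that some forcing non-edge exists at all.

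The gap in your corner-cutting argument is in the sentence ``the existence of $t,w$ with the spanning chain means that non-edge is still blocked by the part of $\partial O$ between $t$ and $w$ after the local modification.'' This is false precisely for the forcing non-edge $uu'$. By Lemma~\ref{lem-boundaryIntersections}, the chain $\gamma^u_{t,z,w}$ intersects the \emph{ray} $\overrightarrow{uz}$ at some $x\neq z$. But since $uu'$ \emph{forces} $v$, the segment $\overline{uu'}$ meets $O$ only in the corner at $v$; hence $x$ must lie on the ray strictly beyond $u'$ (if $x$ were on $\overline{uu'}$ it would be a second intersection of $\overline{uu'}$ with $O$, contradicting forcing). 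So the chain blocks the ray, not the segment, and cutting the corner at $v$ would turn the non-edge $uu'$ into a visible edge, changing the visibility graph. Your intuition that non-responsibility means $v$ is ``geometrically redundant for $u$'' is therefore misplaced: the chain covers the wrong part of the ray. This is exactly why the paper abandons any obstacle-modification strategy here and instead derives a purely combinatorial/topological contradiction from the assumption that both endpoints of a forcing non-edge are non-responsible.
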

\begin{proof}
Let $uu'$ be a non-edge of $G$ forcing $v$.
Let $z$ be an intersection point of $uu'$ with an edge of $\partial O$ incident to $v$. 
Suppose that $u$ is not responsible for $v$. 
Then, there are two other non-edges $ua$ and $ub$ of $G$ forcing two other vertices $t$ and $w$ of $O$, respectively, such that the polygonal chain $\gamma^u_{t,z,w} \subseteq \partial O$ between~$t$ and~$w$ intersects the ray~$\overrightarrow{uz}$ at some point~$x\ne z$; see Figure~\ref{fig-responsibility2App}(a) for an illustration.
Since $v$ is forced by $uu'$, the point $x$ lies behind $u'$ on the ray $\overrightarrow{uz}$.

We claim that~$u'$ is responsible for $v$. Suppose for a contradiction that this is not the case.
Then, analogously, there are non-edges $ua'$ and $ub'$ of $G$ forcing two other vertices $t'$ and $w'$ of $O$, respectively, such that the polygonal chain $\gamma^{u'}_{t',z,w'} \subseteq \partial O$ between $t'$ and $w'$ intersects the ray $\overrightarrow{u'z}$ at some point $x'\ne z$.
Similarly as before, the point $x'$ lies behind $u$ on the ray $\overrightarrow{u'z}$.

Since $t$ and $w$ are forced by $ua$ and $ub$, respectively, the vertices $a'$ and $b'$ lie outside the region $R$ bounded by the line segments $\overline{ut}$, $\overline{uw}$, and by the curve $\gamma^u_{t,z,w}$.
Otherwise at least one of the segments $ua$ and $ub$ has another intersection with $O$.
Analogously, the vertices $a$ and $b$ lie outside the region $R'$ bounded by the line segments $\overline{ut'}$, $\overline{uw'}$, and by the curve $\gamma^{u'}_{t',z,w'}$.
Since $z$ lies above $x$ on $\overrightarrow{uz}$ and above $x'$ on $\overrightarrow{u'z}$, we see $z \in R \cap R'$.
Thus, $z$ lies in the region bounded by the line segments $\overline{ua}$, $\overline{ub}$, $\overline{ua'}$, and $\overline{ub'}$.
However, the interior of at least one of these line segments has to be intersected by the portion of $\partial O$ between $z$ and, say, $t$.
This contradicts the fact that each of these four line segments is forcing a vertex of~$O$.
\end{proof}

\begin{figure}
    \centering
    \includegraphics{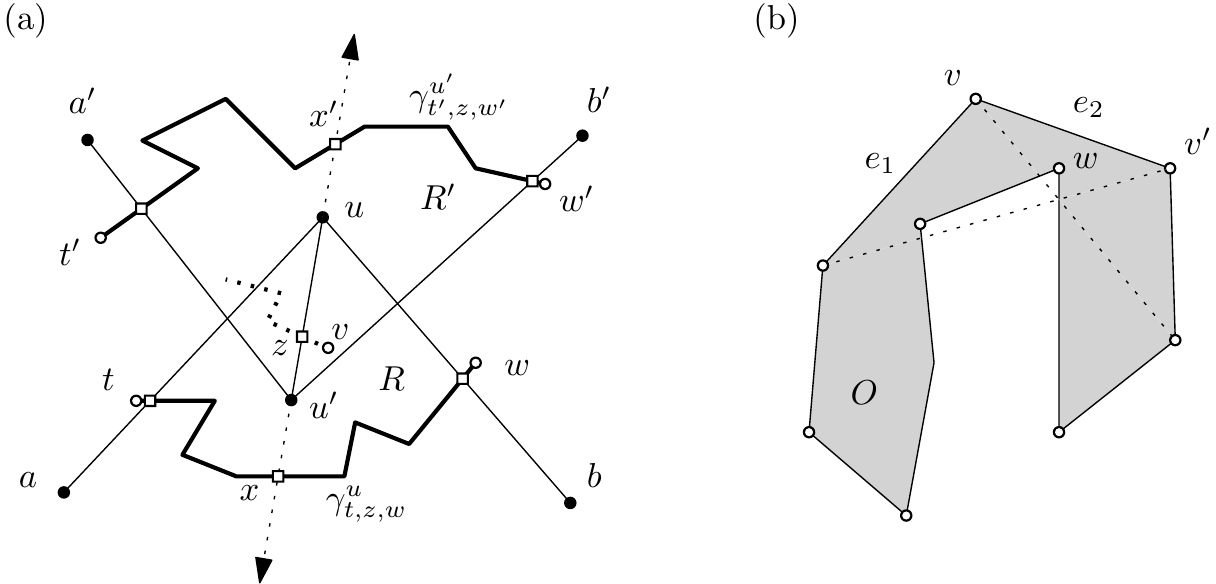}
    \caption{(a) An illustration of the proof of Lemma~\ref{lem-responsbility}. (b) The reflex vertex $w$ is responsible for the non-blocking convex vertices $v$ and $v'$ of the obstacle $O$.}
    \label{fig-responsibility2App}
\end{figure}

Now, we can use a double-counting argument to show that the number of convex vertices in each obstacle is at most linear in the number of vertices of $G$. 

\begin{lemma}
Let $O$ be an obstacle from a minimal obstacle representation $(D,\mathcal{O})$ of a connected graph $G$ with $n\ge 2$ vertices.
Then $O$ has at most $2n$ blocking vertices.
\label{lem-minimalObstaclesConvex}
\end{lemma}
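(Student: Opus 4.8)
The plan is to bound the number of blocking vertices of~$O$ by a double-counting argument built on Lemma~\ref{lem-responsbility}. If $O$ has at most three vertices, then it has at most $3\le 2n$ blocking vertices (here we use $n\ge 2$), so from now on assume $O$ has at least four vertices. By Lemma~\ref{lem-responsbility}, every blocking vertex~$v$ of~$O$ has a vertex of~$G$ responsible for it; fix one such vertex $u(v)\in V(G)$, together with a forcing non-edge and an intersection point witnessing this. It then suffices to show that every vertex of~$G$ is responsible (in the fixed, witnessed sense) for at most two blocking vertices of~$O$, as this immediately gives at most $2n$ blocking vertices overall.

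Suppose, for a contradiction, that some $u\in V(G)$ is responsible for three distinct blocking vertices $v_0,v_1,v_2$ of~$O$, witnessed by non-edges $uu'_j$ forcing $v_j$ and by points $z_j\in\partial O$ on an edge incident to~$v_j$, for $j\in\{0,1,2\}$. Since $D$ is in general position, the points $z_0,z_1,z_2$ are distinct (if $z_j=z_k$ for $j\ne k$, then the lines $uu'_j$ and $uu'_k$ coincide, forcing $u'_j=u'_k$, which is routinely excluded). Now apply Lemma~\ref{lem-boundaryIntersections} to~$u$ and the three boundary points $z_0,z_1,z_2$: there is an index $i\in\{0,1,2\}$ and a polygonal chain $\gamma=\gamma^u_{z_{i-1},z_i,z_{i+1}}\subseteq\partial O$ between $z_{i-1}$ and $z_{i+1}$ that meets the ray $\overrightarrow{uz_i}$ at some point $x\ne z_i$ (indices taken modulo~$3$).

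It remains to read this off as a violation of the hypothesis that $u$ is responsible for~$v_i$, taking $t=v_{i-1}$ and $w=v_{i+1}$ as the two required obstacle vertices: both are distinct from~$v_i$ and are forced by the non-edges $uu'_{i-1}$ and $uu'_{i+1}$ incident to~$u$, so the only remaining point is that the arc of $\partial O$ between $v_{i-1}$ and $v_{i+1}$ through~$z_i$ — the chain $\gamma^u_{v_{i-1},z_i,v_{i+1}}$ from the definition — also meets $\overrightarrow{uz_i}$ at a point other than~$z_i$. To see this I would argue that $\gamma$ is contained in that arc: because $z_{i\pm 1}$ lies on an edge incident to~$v_{i\pm 1}$, the two arcs agree except for short pieces cut off inside the tip triangles of~$v_{i-1}$ and~$v_{i+1}$, and — just as in the proof of Lemma~\ref{lem-responsbility}, where the crossing point is shown to lie beyond the other endpoint of the forcing non-edge — the point~$x$ cannot lie on those short pieces, since the tips contain no points of~$D$ and no obstacle vertices and the segment $uu'_i$ meets $\partial O$ only in the tip of~$v_i$. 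Then~$x$ witnesses the configuration forbidden by responsibility of~$v_i$, a contradiction; hence every vertex of~$G$ is responsible for at most two blocking vertices, and $O$ has at most $2n$ of them.

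The step I expect to be the real obstacle is this last one. The assertion that $\gamma$ lies inside the arc between $v_{i-1}$ and $v_{i+1}$ through~$z_i$ is not automatic: it depends on which of the two edges incident to~$v_{i\pm 1}$ contains~$z_{i\pm 1}$, so a fully rigorous argument must either exploit the freedom in choosing the designated points~$z_j$ (there are two candidates per blocking vertex, one on each tip edge) together with a short case analysis on the index returned by Lemma~\ref{lem-boundaryIntersections}, or else pin down the position of the crossing~$x$ precisely enough — using that $x$ lies beyond~$u'_i$ on $\overrightarrow{uz_i}$, that the tips are empty, and that each segment $uu'_j$ meets $\partial O$ only near its own~$v_j$ — to rule out the problematic position of~$x$ close to $v_{i-1}$ or $v_{i+1}$.
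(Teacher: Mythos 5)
Your proof follows the paper's argument essentially verbatim: reduce to the case $|O|\ge4$, use Lemma~\ref{lem-responsbility} for the lower side of the double count, and show each vertex of $G$ is responsible for at most two blocking vertices by invoking Lemma~\ref{lem-boundaryIntersections} on the three witness points $z_0,z_1,z_2$. The one step you flag as needing care — passing from the chain $\gamma^u_{z_{i-1},z_i,z_{i+1}}$ returned by Lemma~\ref{lem-boundaryIntersections} to the chain $\gamma^u_{v_{i-1},z_i,v_{i+1}}$ appearing in the definition of responsibility — is precisely the step the paper dispatches with the single phrase ``By the choice of $z_0,z_1,z_2$,'' and both of the repair routes you sketch (choosing the witness points $z_j$ appropriately on the tip edges, or locating $x$ beyond $u'_i$ and away from the empty tips) are sensible ways to make that terse remark airtight.
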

\begin{proof}
First, we assume that $O$ has at least 4 vertices. 
Otherwise the statement is trivial as $n \geq 2$.
Thus, if a vertex $v$ of $O$ is blocking, the edges of $O$ adjacent to $v$ are intersected by a non-edge of $G$.

Now we show that every vertex of $G$ is responsible for at most two blocking vertices of~$O$. 
Suppose for contradiction that there is a vertex $u$ of $G$ that is responsible for at least three blocking vertices $v_0$, $v_1$, and $v_2$ of $O$.
For every $i \in \{0,1,2\}$, let $z_i$ be a point of $\partial O$ that lies on a non-edge of $G$ that forces $v_i$.
By Lemma~\ref{lem-boundaryIntersections}, there is $i \in \{0,1,2\}$ and a polygonal chain $\gamma^u_{z_{i-1},z_i,z_{i+1}} \subseteq \partial O$ between $z_{i-1}$ and $z_{i+1}$ such that $\gamma^u_{z_{i-1},z_i,z_{i+1}}$ intersects the ray $\overrightarrow{uz_i}$ at some point different from $z_i$ (the indices are taken modulo 3).
By the choice of $z_0,z_1,z_2$, the ray $\overrightarrow{uz_i}$ intersects the chain $\gamma^u_{v_{i-1},z_i,v_{i+1}}$ at some point different from $z_i$.
This contradicts the fact that $u$ is responsible for $v_i$, as we have excluded this situation in the definition of responsibility.

We use a double-counting argument on the number $x$ of pairs $(u,v)$, where $u$ is a vertex of $G$ that is responsible for a blocking vertex $v$ of $O$.
By Lemma~\ref{lem-responsbility}, for every blocking vertex $v$ of $O$, there is at least one vertex of $G$ that is responsible for $v$.
Thus, $x$ is at least as large as the number of blocking vertices of $O$.
On the other hand, since every vertex of~$G$ is responsible for at most two blocking vertices of $O$, we have $x \leq 2n$.
Altogether, the number of blocking vertices of $O$ is at most $2n$.
\end{proof}

We note that the proof of Lemma~\ref{lem-minimalObstaclesConvex} can be modified to show that each convex obstacle~$O$ has at most $n$ vertices.
This is because exactly two vertices of $G$ are then responsible for each vertex of $O$ and all vertices of $O$ are blocking.

Now, we consider the remaining convex vertices of obstacles from a minimal obstacle representation $(D,\mathcal{O})$ of $G$.
We call such vertices \emph{non-blocking}.

\begin{lemma}
Let $O$ be an obstacle from a minimal obstacle representation $(D,\mathcal{O})$ of a graph $G$ with $n$ vertices.
Then $O$ has at most $2r_O$ non-blocking vertices where $r_O$ is the number of reflex vertices of $O$.
\label{lem-notBlocking}
\end{lemma}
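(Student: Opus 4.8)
The plan is to charge each non-blocking convex vertex to a reflex vertex of $O$ in such a way that each reflex vertex receives at most two charges. Recall that a non-blocking convex vertex $v$ of $O$ has, by definition, the property that the triangle $T$ spanned by the two edges of $O$ incident to $v$ contains a point of $D$ or a vertex of some obstacle in its interior; since the representation is minimal, one expects that actually some portion of $\partial O$ must poke into $T$, and that the witnesses can be taken to be reflex vertices of $O$ (a point of $D$ strictly inside $T$ adjacent to $v$ on $\partial O$ would force a reflex turn, and a vertex of another obstacle inside $T$ is excluded since obstacles are pairwise disjoint while the region near $v$ inside $T$ is ``visible'' from both incident edges). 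So first I would argue: for every non-blocking convex vertex $v$ there is a reflex vertex $w$ of $O$ lying in the open triangle $T_v$ spanned by the edges of $O$ at $v$; see Figure~\ref{fig-responsibility2App}(b). I expect to use minimality here exactly as in the blocking case: if no such $w$ existed and $O$ had $\ge 4$ vertices, one could cut the ear at $v$ (truncate $v$ slightly) without changing the visibility graph, contradicting that the total number of obstacle vertices is minimum.

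The main work is then to bound, for a fixed reflex vertex $w$, the number of non-blocking convex vertices $v$ with $w \in T_v$. I would show this number is at most two. The idea: the apex angle of $T_v$ at $v$ is the (convex) internal angle of $O$ at $v$, and $T_v$ is the region enclosed between the two edges $e_1, e_2$ of $O$ at $v$ and the segment joining their far endpoints. If $w \in T_v$, then in particular $w$ is ``close'' to $v$ along $\partial O$ in the following sense: the sub-chain of $\partial O$ from one neighbor of $v$ to the other (the one not through $v$) must reach into $T_v$ to place $w$ there, and since $O$ is a simple polygon this constrains the cyclic position of $w$ relative to $v$. More usefully, I would look from $w$: consider the two tangent-like rays from $w$ along $\partial O$, or consider the edges of $O$ incident to $w$. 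A reflex vertex $w$ has its two incident edges $f_1,f_2$ making a reflex internal angle; the complementary convex cone at $w$ (of angle $< \pi$) points ``into'' the polygon locally. A convex vertex $v$ with $w\in T_v$ must see $w$ across its ear, and the segments $v$–(endpoints of $e_1,e_2$) together with $e_1,e_2$ bound $T_v$; I would argue that $v$ must lie in the convex cone at $w$ bounded by the lines through $f_1$ and $f_2$, and moreover on a specific side, and that along each of the two edges $f_1, f_2$ of $O$ at most one such ear-apex $v$ can occur because two nested ears at $w$ along the same side would have one contained in the other, violating simplicity or minimality. Hence at most two.

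Combining the two steps: the map $v \mapsto w(v)$ sending each non-blocking convex vertex to a witnessing reflex vertex in its ear triangle is well-defined, and each reflex vertex is the image of at most two non-blocking convex vertices, so the number of non-blocking convex vertices is at most $2 r_O$, as claimed.

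The hard part will be the ``at most two'' bound, i.e. controlling how many ear triangles $T_v$ a single reflex vertex $w$ can lie in. The subtlety is that $w\in T_v$ is a purely geometric condition that a priori several far-apart convex vertices could satisfy simultaneously, so I will need to combine it with the simplicity of the polygon $O$ (its boundary $\partial O$ does not self-intersect) and possibly with minimality (to rule out degenerate nested configurations) to pin down the count. I anticipate a short case analysis on which side of $w$ (relative to its two incident edges $f_1,f_2$, or to the supporting lines of $e_1,e_2$) the apex $v$ lies, yielding one candidate per side and thus the bound $2r_O$; the figure referenced as Figure~\ref{fig-responsibility2App}(b) already depicts the tight case with two non-blocking convex vertices charged to one reflex vertex, which is reassuring that $2r_O$ is the right constant.
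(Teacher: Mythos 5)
Your high-level strategy (charge each non-blocking convex vertex $v$ to a reflex vertex of $O$ and show the charging is at most $2$-to-$1$, then double-count) is indeed the paper's strategy, but the two crucial steps diverge, and as written your version has a real gap in the $2$-to-$1$ bound.

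The paper does not charge $v$ to an \emph{arbitrary} reflex vertex of $O$ inside $T_v$. It sweeps a line $\ell$ orthogonal to the angle bisector at $v$, starting at $v$, and charges $v$ to the \emph{first} point $w$ inside $T_v$ that leaves $\mathrm{int}(O)$. That $w$ is a reflex vertex follows from the sweep itself (at the moment of first contact, the rest of $\ell\cap\mathrm{int}(T_v)$ is still interior to $O$, so the interior angle at $w$ contains a half-plane and must be $\ge\pi$); no appeal to minimality is needed or made here. Your argument that "one could cut the ear at $v$" to justify the existence of a reflex vertex doesn't apply to non-blocking vertices, since cutting the ear would sever visibilities through the part of $T_v$ containing the witnessing point.

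More importantly, the ``first along the sweep'' property is exactly what powers the paper's $2$-to-$1$ bound: just before the sweep reaches $w$, the only boundary points of $O$ in the swept part of $T_v$ lie on $e_1,e_2$; combined with the general-position assumption, this forces that if two vertices $v,v'$ are charged to the same $w$, their ear triangles share an edge, so $v$ and $v'$ are adjacent along $\partial O$ --- hence there can be at most two. Your proposed bound instead argues for the purely geometric claim that at most two convex vertices $v$ have $w\in T_v$, via the assertion that $v$ must lie in the convex (exterior) cone of $w$ and that ``nested ears'' are impossible. Neither assertion is justified: $w\in T_v$ is a statement about a triangle anchored at $v$, and places no obvious constraint on where $v$ sits relative to the exterior cone at $w$; and without a canonical choice of witness (such as first-along-the-sweep), a single reflex vertex can in principle lie in the interiors of many ear triangles (e.g.\ when several convex vertices have long incident edges whose ear triangles overlap). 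So the map $v\mapsto w(v)$ as you define it need not be $2$-to-$1$, and the ``short case analysis on which side of $w$'' you anticipate does not pin down the count. To repair your proof you would essentially need to import the sweep-based definition of the witness and the ``swept region sees only $e_1,e_2$'' argument, at which point you recover the paper's proof.
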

\begin{proof}
Without loss of generality, we assume that no three vertices of obstacles are collinear otherwise we apply a suitable perturbation.
Let $v$ be a non-blocking vertex of~$O$.
Let $T$ be the triangle spanned by the edges $e_1$ and $e_2$ of $O$ that are incident to $v$.
Since $v$ is non-blocking, the interior of $T$ contains either some vertex of $G$ or a vertex of an obstacle from $\mathcal{O}$; see Figure~\ref{fig-responsibility2App}(b).
Consider the line $\ell$ that contains $v$ and that is orthogonal to the axis of the acute angle between $e_1$ and $e_2$.
We sweep $\ell$ along the axis until, for the first time, the part of $\ell$ in the interior of $T$ meets a point $w$ that is not in the interior of $O$.
Then $w$ is a reflex vertex of $O$ since the obstacles are pairwise disjoint and no vertex of $G$ is contained in~$O$.
We say that $w$ is \emph{responsible} for $v$.

Now, we show that each reflex vertex of $O$ is responsible for at most two non-blocking vertices of $O$.
Let $v$ and $v'$ be two non-blocking vertices of $O$ and let $w$ be a reflex vertex of~$O$ responsible for $v$ and $v'$.
Let $T$ be the triangle spanned by the edges of $O$ that are incident to $v$.
The only points from $\partial O$ that are in the swept part of $T$ just before we meet $w$ are from $e_1$ and $e_2$.
This claim is also true for the triangle $T'$ that is analogously defined for $v'$.
Thus, since no three vertices of obstacles are collinear, the triangles $T$ and $T'$ share an edge.
Therefore, $w$ cannot be responsible for more than two such vertices.

We now use a double-counting argument on the number $x$ of pairs $(w,v)$, where $w$ is a reflex vertex of $O$ that is responsible for a non-blocking vertex $v$ of $O$.
For every such vertex $v$ of $O$, there is at least one reflex vertex of $O$ that is responsible for $v$.
Thus, $x$ is at least as large as the number of non-blocking vertices of $O$.
On the other hand, since every reflex vertex of $O$ is responsible for at most two non-blocking vertices of $O$, we have $x \leq 2r_O$.
Altogether, the number of non-blocking vertices of $O$ is at most $2r_O$.
\end{proof}

The only remaining vertices of obstacles are the reflex vertices.
The following result gives an estimate on their number in a minimal obstacle representation.

\begin{lemma}
Let $(D,\mathcal{O})$ be a minimal obstacle representation of a  connected graph $G$ with $n$ vertices.
Then the total number of reflex vertices of obstacles from $\mathcal{O}$ is at most $2n+5$.
\label{lem-minimalObstaclesReflex}
\end{lemma}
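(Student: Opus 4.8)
The plan is to show that, in a minimal representation, every reflex vertex can be ``shortcut away'' unless something of the representation occupies the triangle it cuts off, and then to double-count reflex vertices against the vertices of $G$ (with multiplicity at most two), a constant number of reflex vertices being exceptional.

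\textbf{The vertex-removing move.} Let $w$ be a reflex vertex of an obstacle $O\in\mathcal O$ with polygon-neighbours $a,b$, and put $T_w:=\triangle awb$; since $w$ is reflex, $T_w$ lies locally outside $O$ at $w$, and replacing the two edges $\overline{aw},\overline{wb}$ of $O$ by the segment $\overline{ab}$ (equivalently, sliding $w$ towards $\overline{ab}$ until the angle at $w$ straightens, then deleting $w$) keeps the number of obstacles fixed, only enlarges $O$, and removes one obstacle vertex. By minimality this move must fail, so some point of $\operatorname{int}(T_w)$ belongs to a vertex of $G$, to $\partial O$ itself, or to another obstacle. An edge of $G$ cannot be newly blocked without an endpoint in $T_w$ (an edge entering $T_w$ cannot cross $\overline{aw}$ or $\overline{wb}$, so it crosses $\overline{ab}$, which it does at most once, and hence cannot leave $T_w$ again); and if only some other obstacle $O'$ met $\operatorname{int}(T_w)$, then $O$ could be merged with $O'$ through the free part of $T_w$, contradicting minimality of the number of obstacles (or again exhibiting a vertex of $G$ in $T_w$). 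Thus each reflex vertex $w$ is of one of two kinds: \emph{good}, meaning $\operatorname{int}(T_w)$ contains a vertex of $G$; or \emph{bad}, meaning $\partial O$ re-enters $\operatorname{int}(T_w)$.

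\textbf{Counting.} For a good vertex $w$ I would pick a canonical \emph{responsible} vertex of $G$ in $T_w$ (for instance the one extreme in a fixed direction) and prove, by a double-counting argument in the style of Lemmas~\ref{lem-responsbility}--\ref{lem-notBlocking} and using Lemma~\ref{lem-boundaryIntersections} together with the pairwise disjointness of the obstacles, that no vertex of $G$ is responsible for more than two good vertices; connectedness of $G$ enters exactly here, ensuring (as in Lemma~\ref{lem-boundaryIntersections}) that a vertex of $G$ inside an obstacle really is an obstruction. This bounds the number of good vertices by $2n$. For bad vertices I would show that the triangles $T_w$ form a laminar family (two of them are nested or disjoint, since their apexes lie on pairwise-disjoint simple polygons and the obstructing chains are sub-chains of those polygons), and then organise both kinds of reflex vertex into a single outerplanar-type estimate: assigning to each reflex vertex an edge of an outerplanar graph on the $n$ points of $G$ together with the four corners of a bounding quadrilateral gives at most $2(n+4)-3=2n+5$ of them.

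\textbf{Main difficulty.} The crux is this counting step: making the ``responsible'' relation precise enough to have multiplicity at most two, while simultaneously controlling, through the laminarity and a planar/outerplanar bound, the reflex vertices whose notch is blocked by the obstacle itself. It is precisely the interaction between the internal shape of a single obstacle and the global arrangement of all obstacles that forces the laminar structure and produces the additive constant $5$.
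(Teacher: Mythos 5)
Your opening move is essentially the paper's: the triangle $T_w$ cut off by a reflex vertex $w$ must contain a vertex of $G$, a vertex of another obstacle, or a re-entry of $O$'s own boundary, and the other-obstacle case dies by merging. Your treatment of the ``good'' vertices (obstructed by a vertex of $G$) via a responsibility relation of multiplicity at most two, yielding a $2n$ bound, also matches the paper, modulo details. One small technical point: your good/bad dichotomy is not quite well-defined, since a single $T_w$ may contain both a vertex of $G$ and a re-entry of $\partial O$; the paper fixes this by a canonical sweep that identifies the \emph{first} obstruction, which also makes the merging argument safe (the swept portion of $T_w$ before the first hit is known to be empty, so the merge path is unobstructed without the parenthetical fallback you add).

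The genuine gap is in the ``bad'' vertices, those whose notch is blocked by $O$'s own boundary. You observe laminarity of the associated regions (which is correct and is indeed used in the paper, where the regions $C_1\supset\cdots\supset C_m$ are nested), but laminarity alone does not bound their number: a single simple polygon can have arbitrarily many nested cutting segments. Your proposed ``outerplanar graph on $n+4$ vertices'' estimate yielding $2(n+4)-3=2n+5$ is numerology rather than an argument --- you do not say how to injectively assign reflex vertices to edges of an outerplanar graph, and the responsibility count already has multiplicity two per vertex, not degree-style bounds, so an outerplanar edge count is the wrong shape. The paper's $+5$ comes from a completely different mechanism: if there were $m\ge 6$ nested cutting segments inside one obstacle, one fills the empty annular strips $C_i'\setminus C_{i+1}$, producing a polygon with one hole, and then ``drills'' a thin vertical tunnel from a vertex of the innermost region to $\partial O$ to turn it back into a simple polygon; this adds at most $3$ vertices but removes at least $m-2\ge 4$, contradicting minimality. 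That construction is the missing idea; laminarity sets it up but is not a substitute for it.
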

\begin{proof}
Without loss of generality, we assume that no three vertices of obstacles are collinear otherwise we apply a suitable perturbation.
Let $v$ be a reflex vertex of an obstacle $O$ from $\mathcal{O}$.
We use $e_1$ and $e_2$ to denote the two edges of $O$ that contain $v$ and $T$ to denote the triangle spanned by $e_1$ and $e_2$.
The interior of $T$ contains some point from $D \cup \bigcup \mathcal{O}$ as otherwise we could remove $v$ by considering a new obstacle $O \cup T$, which is impossible as $(D,\mathcal{O})$ is minimal.

Similarly as in the proof of Lemma~\ref{lem-notBlocking}, we consider the line $\ell$ that contains $v$ and that is orthogonal to the axis of the acute angle between $e_1$ and $e_2$.
We sweep $\ell$ along the axis until the part of $\ell$ in the interior of $T$ meets a point $w$ from $D \cup \bigcup \mathcal{O}$ for the first time.
Then $w$ is either some vertex of $D$ or a vertex of some obstacle from $\mathcal{O}$.
We distinguish three cases.

First, suppose that $w$ is a vertex of another obstacle $O'$ from $\mathcal{O}$.
Then we can merge the obstacles $O$ and $O'$ by adding a small polygon inside the a small neighborhood of the swept portion of $T$ containing $w$; see Figure~\ref{fig-reflexApp}(a).
This, however, is impossible as $(D,\mathcal{O})$ is minimal and, in particular, uses the minimum number of obstacles for $G$.

\begin{figure}
    \centering
    \includegraphics{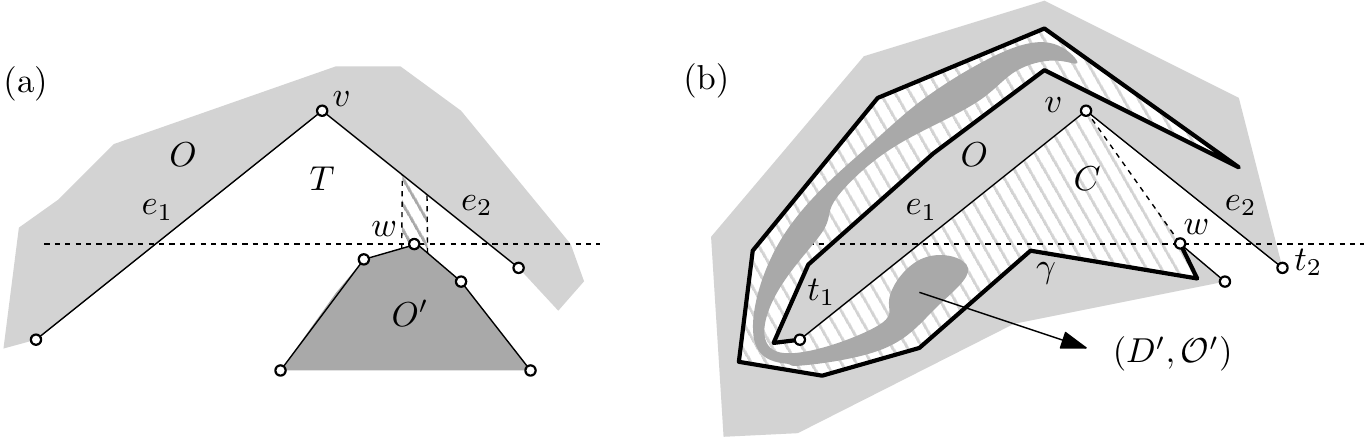}
    \caption{An illustration of the proof of Lemma~\ref{lem-minimalObstaclesReflex}. (a) Merging two obstacles. (b) If $w \in O$, then there is an isolated component $(D'\mathcal{O}')$ in the region $C$ (denoted by stripes).}
    \label{fig-reflexApp}
\end{figure}

Second, assume that $w$ is a vertex of the obstacle $O$. 
Let $t_1$ and $t_2$ be the endpoints of $e_1$ and $e_2$, respectively, different from $v$.
Since $O$ is a simple polygon, there is a polygonal chain $\gamma \subseteq \partial O$ between $w$ and some $t_i$ enclosing together with line segments $e_i$ and $\overline{vw}$ a region $C$ such that $C$ does not contain $t_{3-i}$; see Figure~\ref{fig-reflexApp}(b).
By symmetry, we assume $i=1$.
Let $D' = D \cap C$ and $\mathcal{O}'$ be the  set of obstacles from $\mathcal{O}$ that are contained in~$C$.
Note that $(D',\mathcal{O}') = (D,\mathcal{O}\setminus \{O\})$ as $G$ is connected and $(D',\mathcal{O}')$ is an obstacle representation of a subgraph of $G$ that is isolated from everything outside of $C$, since every line segment connecting a point of $D'$ with a point from $D \setminus D'$ intersects $O$.
We say that that the edge $vw$ is a \emph{cutting segment} \emph{bounding} $C$. 

We show that there are at most five reflex vertices that are endpoints of cutting segments.
Suppose for contradiction that we have cutting segments $v_1w_1,\dots,v_mw_m$ bounding regions $C_1,\dots,C_m$, respectively, where $m \geq 6$ and $v_1,\dots,v_m$ are distinct.
Then all the cutting segments have endpoints in the same obstacle $O$,
for every $i\in [m]$, as the drawing $D$ has $O_i$ in the outerface,
which cannot contain any other obstacle as otherwise we could merge
them, contradicting the minimality of $(D,\mathcal{O})$.
The line segments $v_1w_1,\dots,v_mw_m$ do not cross as, after possible relabeling, we have $C_1 \supset \dots \supset C_m$ and $v_iw_i$ is contained in $\partial C_i$ for every $i \in [m]$; see Figure~\ref{fig-reducingObstacle}.
For each $i \in [m-1]$, let $C'_i = C_i \setminus C_{i+1}$.
It follows from the definition of cutting segments that the regions $C'_1,\dots,C'_{m-1}$ do not contain any point of $D$ nor a point of an obstacle from $\mathcal{O}$ in their interior.
We now reduce the number of vertices of $O$.
The set $O \cup C'_1 \cup \dots \cup C'_{m-1}$ is a polygon with a single hole corresponding to the region $C_m$.
Let $t$ be a point of $C_m$ with the smallest $y$-coordinate and note that we can assume that $t$ is a vertex of $O$.
Let $s \subseteq O$ be a vertical line segment with the topmost point $t$.
By the choice of $t$, the other endpoint of $s$ is a point from $\partial O$ that does not lie in $\partial C_m$.
Thus, we can replace $O$ with a new obstacle $O'$ obtained from $O \cup C'_1 \cup \dots \cup C'_{m-1}$ by removing a small rectangular neighborhood of $s$; see Figure~\ref{fig-reducingObstacle}.
Since $t$ is a vertex of $O$, we add at most three new vertices to $O'$ with respect to~$O$.
On the other hand, we remove at least $m-2 \geq 4$ vertices, as $v_2,\dots,v_{m-1}$ will no longer be vertices of $O'$.
This contradicts the minimality of $(D,\mathcal{O})$.

\begin{figure}
    \centering
    \includegraphics{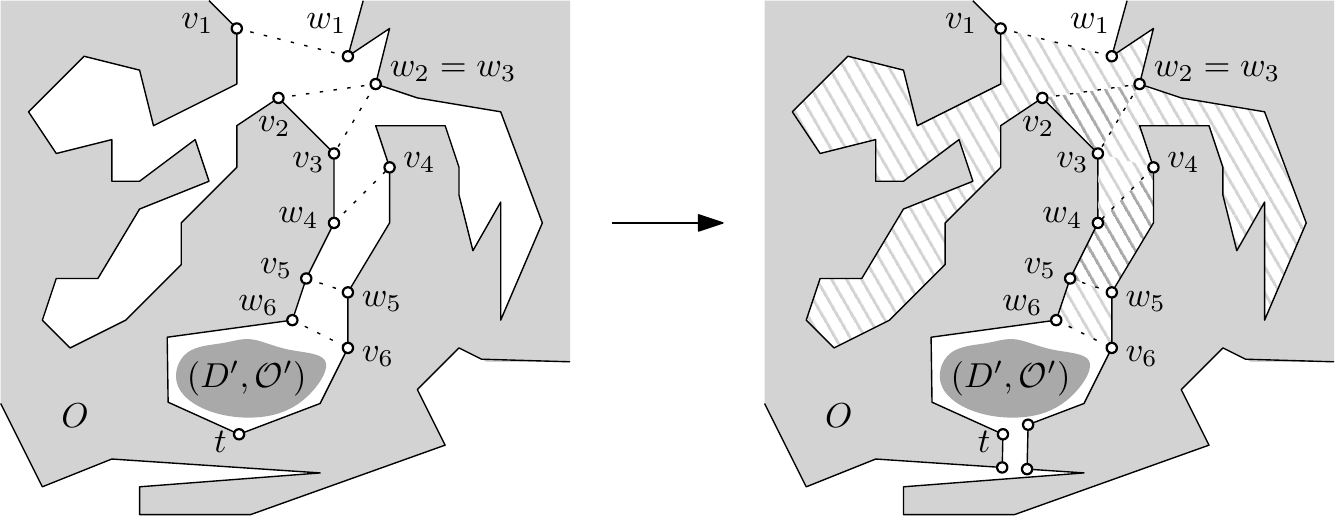}
    \caption{An illustration of the proof of Lemma~\ref{lem-minimalObstaclesReflex}. Reducing the number of vertices of an obstacle $O$ by filling up regions $C'_1,\dots,C'_5$ and ``drilling'' a vertical tunnel from the vertex $t$ of $C_6$.}
    \label{fig-reducingObstacle}
\end{figure}

Finally, assume that $w$ is a vertex of $G$.
Then we say that $w$ is \emph{responsible} for $v$.
We now show that every vertex of $G$ is responsible for at most two reflex vertices of an obstacle.
This is done via a similar argument as in the proof of Lemma~\ref{lem-notBlocking}.
Let $v$ and $v'$ be two reflex vertices of some obstacles $O$ and $O'$, respectively, and let $w$ be a vertex of~$G$ responsible for $v$ and $v'$.
Let $T$ be the triangle spanned by the edges $e_1$ and $e_2$ of $O$ that are incident to $v$.
The only points from any obstacle that are in the swept part of $T$ just before we meet $w$ are from $e_1$ and $e_2$.
This claim is true for the triangle $T'$ that is defined analogously for $v'$.
Thus, since no three vertices of obstacles are collinear, the triangles $T$ and $T'$ share an edge.
Therefore, $w$ cannot be responsible for more than two reflex vertices.

Now, every vertex of $G$ is responsible for at most two reflex vertices in total.
Thus, since apart for at most 5 endpoints of cutting segments for each reflex vertex $v$ of an obstacle there is at least one vertex of $G$ 
that is responsible for~$v$,
we get that the total number of reflex vertices is at most $2n+5$.
\end{proof}

To bound the number $g(h,n)$ of connected $n$-vertex graphs with obstacle number at most $h$, we follow the approach by Mukkamala, Pach, and P{\'a}lv{\"o}lgyi~\cite{mpp-lbong-12}, which is based on encoding point sets by order types.
Let $P$ and $Q$ be two sets of points in the plane in general position.
Then, $P$ and $Q$ \emph{have the same order type} if there is a one-to-one correspondence between them with the property that the orientation of any triple of points from $P$ is the same as the orientation of the corresponding triple of points from $Q$.
Recall that we consider only point sets in general position so the orientation is either clockwise or counterclockwise.
So, formally, an \emph{order type} of a set $P$ of points in the plane in general position is  a mapping $\chi$ that assigns $\chi(a, b, c) \in \{+1, -1\}$ to each ordered triple $(a, b, c)$ of points from $P$, indicating whether $a, b, c$ make a left turn (+1) or a right turn (-1).

We use the following classical estimates by Goodman and Pollack~\cite{gp-upcp-86,goPo93} on the number of order types of sets of $n$ points.

\begin{theorem}[\cite{gp-upcp-86,goPo93}]
\label{thm-orderType}
The number of order types on $n$ labeled points in $\R^2$ is at most
$\left(\frac{n}{2}\right)^{4(1+o(1))n} \in 2^{O(n \log n)}$.
\end{theorem}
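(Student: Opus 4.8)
This is a classical theorem of Goodman and Pollack~\cite{gp-upcp-86,goPo93}, so I would only recall the standard incremental argument. The plan is to build an order type on $n$ labelled points by placing $p_1,p_2,\dots,p_n$ one at a time and to bound the number of choices at each step. The crucial observation is that the order type of $\{p_1,\dots,p_{i+1}\}$ is determined by the order type of $\{p_1,\dots,p_i\}$ together with the orientations $\chi(p_a,p_b,p_{i+1})$ over all pairs $a<b\le i$: every triple of points either avoids $p_{i+1}$, and is then recorded by the former, or contains $p_{i+1}$, and is then one of the latter.

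So, after fixing a placement $P_i=\{p_1,\dots,p_i\}$, I would consider the line arrangement $\mathcal{A}_i$ formed by the $\binom{i}{2}$ lines $\ell_{ab}$ spanned by pairs $p_a,p_b$ with $a<b\le i$. Since all points are in general position, a new point $p_{i+1}$ lies on none of these lines, and for each pair the orientation $\chi(p_a,p_b,p_{i+1})$ records exactly the side of $\ell_{ab}$ on which $p_{i+1}$ lies. Hence any two positions for $p_{i+1}$ inside the same (two-dimensional) cell of $\mathcal{A}_i$ yield the same orientations and therefore the same order type of $\{p_1,\dots,p_{i+1}\}$. Consequently, once the order type on $i$ points is fixed, the number of its extensions to $i+1$ points is at most the number of cells of an arrangement of $\binom{i}{2}$ lines, namely at most $1+\binom{i}{2}+\binom{\binom{i}{2}}{2}=O(i^4)$; this cell bound holds for every line arrangement, no matter how degenerate.

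Multiplying over $i=3,\dots,n-1$ — there being only $O(1)$ order types on three points to start from — gives at most $\prod_{i=3}^{n-1}O(i^4)=O(1)^n\cdot\bigl((n-1)!\bigr)^4$ order types on $n$ labelled points. Plugging in Stirling's estimate $(n-1)!=2^{\,n\log n-\Theta(n)}$ turns this into $2^{(4+o(1))\,n\log n}$, which is exactly $\left(\frac{n}{2}\right)^{4(1+o(1))n}\in 2^{O(n\log n)}$. I expect the only genuinely delicate points to be (i) verifying that the order type on $i+1$ points really is a function of the cell of $\mathcal{A}_i$ containing $p_{i+1}$ — i.e.\ that passing from a configuration to its order type factors through the cell decomposition of $\mathcal{A}_i$ — and (ii) carrying the constants through Stirling's formula carefully enough to land on the sharp leading factor $4$ rather than merely on $2^{O(n\log n)}$.

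A second, more robust route I would keep in reserve uses real algebraic geometry: each orientation $\chi(p_a,p_b,p_c)$ is the sign of a $3\times 3$ determinant, which is a polynomial of degree $2$ in the $2n$ coordinates, so an order type is nothing but a sign pattern of $m=\binom{n}{3}$ polynomials of degree $d=2$ in $k=2n$ variables. Warren's bound $\left(\tfrac{4edm}{k}\right)^{k}$ on the number of realizable sign patterns then gives $\bigl(O(n^2)\bigr)^{2n}=2^{(4+o(1))\,n\log n}$ at once, with the same constant. Either way, the qualitative bound $2^{O(n\log n)}$ is immediate; the work is only in pinning down the constant.
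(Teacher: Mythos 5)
This theorem is not proved in the paper at all: it is imported verbatim from Goodman and Pollack, so the only question is whether your argument would reprove the cited result. Your "reserve" route does: encoding an order type as a strict sign pattern of the $\binom{n}{3}$ orientation determinants (degree-$2$ polynomials in $2n$ variables) and applying Warren's/Milnor--Thom-type bounds is essentially the actual Goodman--Pollack proof, and your arithmetic $(O(n^2))^{2n}=2^{4(1+o(1))n\log n}$ is right; general position guarantees all signs are strict, so the component count does bound the number of order types.

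Your primary incremental route, however, has a genuine gap, precisely at the spot you flag as delicate point (i), and it is not a matter of mere verification. For a \emph{fixed} realization of $\{p_1,\dots,p_i\}$ it is true that the cell of $\mathcal{A}_i$ containing $p_{i+1}$ determines all new orientations. But the arrangement $\mathcal{A}_i$ -- and hence its cell structure and the set of achievable sign vectors $(\chi(p_a,p_b,p_{i+1}))_{a<b\le i}$ -- is \emph{not} determined by the order type of $\{p_1,\dots,p_i\}$: the relative position of crossings such as $\ell_{ab}\cap\ell_{cd}$ (for disjoint pairs) with respect to the other lines can change between realizations of the same order type, already for points in convex position. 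Consequently, the number of extensions of a fixed order type on $i$ points is the number of sign vectors achievable over \emph{all} of its realizations, which is not bounded by the cell count $1+\binom{i}{2}+\binom{\binom{i}{2}}{2}$ of any single arrangement; the step "number of extensions $\le$ number of cells of an arrangement of $\binom{i}{2}$ lines" therefore does not follow, and the product bound collapses. (Realization spaces of order types can moreover be disconnected, so there is no easy continuity argument to rescue this.) If you want the theorem with the stated constant, stick with the Warren/Milnor--Thom route, which is the standard proof behind the citation.
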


Now, for each connected $n$-vertex graph $G$ with obstacle number at most $h$, fix a minimal obstacle representation $(D,\mathcal{O})$ of $G$ with $|\mathcal{O}| \leq h$.
Let $s$ be the number of vertices of obstacles from $\mathcal{O}$ and let $r_O$ be the number of reflex vertices of an obstacle $O$ from $\mathcal{O}$.
We use $r$ to denote the total number of reflex vertices $\sum_{O \in \mathcal{O}} r_O$.
Since every vertex of an obstacle is either convex (blocking or non-blocking) or reflex, Lemmas~\ref{lem-minimalObstaclesConvex},~\ref{lem-notBlocking}, and~\ref{lem-minimalObstaclesReflex} give 
\[s \leq r + \sum_{O \in \mathcal{O}} (2r_O + 2n) \leq 3r + |\mathcal{O}|\cdot 2n \leq 2hn+6n+15.\]

Let $S$ be a sequence of $n+s$ labeled points starting with the vertices of $G$ and followed by the vertices of the obstacles (listed one by one, in cyclic order, and properly separated from one another).
Then, the order type of $S$ determines the graph $G$ as it stores all the information about incidences in the obstacle representation $(D,\mathcal{O})$. 
By Theorem~\ref{thm-orderType}, the number of such order types is at most $2^{\alpha(n+s) \log{(n+s)}}$ for some suitable constant $\alpha>0$.
Since $s \leq 2hn+6n+15$ and $h \leq \binom{n}{2}$, we obtain the  desired upper bound
\[g(h,n) \in 2^{\alpha(n+2hn+6n+15) \log{(n+2hn+6n+15)}} \subseteq 2^{O(hn\log{n})}.\]

\subsection{Proof of Theorem~\ref{thm-complexityConvex}}
\label{sec-complexityConvex}

We show that the number of graphs on $n$ vertices that have   convex obstacle number at most~$h$ is bounded from above by~$2^{O(n(h+\log n))}$. 
That is, we prove the bound $\fc(h,n) \in 2^{O(n(h+\log n))}$, which improves the earlier estimate $\fc(h,n) \in 2^{O(hn\log{n})}$ 
by Mukkamala, Pach, and P{\'a}lv{\"o}lgyi~\cite{mpp-lbong-12} and for $h<n$ asymptotically matches the lower bounds by Balko, Cibulka, and Valtr~\cite{bcv-dgusno-DCG18}.

To this end, we find an efficient encoding of an obstacle representation $(D,\mathcal{O})$ for any $n$-vertex graph $G=(V,E)$ that uses at most $h$ convex obstacles.
We identify the vertex set $V$ and the points of $D$ that represent the vertices from $V$.
The first part of the encoding is formed by the order type of $V$.
By Theorem~\ref{thm-orderType}, there are at most $2^{O(n\log{n})}$ order types of sets of $n$ points in the plane in general position.

It remains to encode the obstacles and their interaction with the line segments between points from $V$.
To do that, we use so-called radial systems.
The \emph{clockwise radial system} $R$ of~$V$ assigns to each vertex $v\in V$ the clockwise cyclic ordering $R(v)$ of the~$n-1$ rays in~$D$ that start from~$v$ and pass through a vertex from~$V\setminus\{v\}$.  
The order type of~$V$ uniquely determines the radial system~$R$ of~$V$. 
Essentially, this also holds in the other direction: There are at most~$n-1$ order types that are compatible with a given radial system~\cite{acklv-rpsotro-16}.

For a vertex $v \in V$ and an obstacle $O \in \mathcal{O}$, let $I_O(v)$ be the subsequence of rays in~$R(v)$ that
intersect $O$.
We say that a subset $I$ of $R(v)$ is an \emph{interval} 
if there are no four consecutive elements $a,b,c,d$ in the radial order $R(v)$ such that $a,c \in I$ and $b,d \notin I$.
Since $O$ is connected, we get the following result.

\begin{observation}
\label{obs-blockingInterval}
For every vertex $v \in V$ and each obstacle $O \in \mathcal{O}$, the set $I_O(v)$ forms an interval in the radial ordering $R(v)$. \qed
\end{observation}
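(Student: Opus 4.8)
The plan is to show that $I_O(v)$ cannot be ``split'' in the radial order $R(v)$, i.e.\ there are no four rays $a,b,c,d$ appearing consecutively (in this cyclic order) with $a,c$ intersecting $O$ and $b,d$ missing $O$. Suppose for contradiction that such a configuration exists. Let $p_a,p_b,p_c,p_d \in V$ be the vertices defining these four rays $\overrightarrow{vp_a},\dots,\overrightarrow{vp_d}$. Since $a=\overrightarrow{vp_a}$ and $c=\overrightarrow{vp_c}$ both hit the convex set $O$, pick points $q_a \in a \cap O$ and $q_c \in c \cap O$; by convexity the segment $\overline{q_aq_c}$ lies entirely in $O$. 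The segment $\overline{q_aq_c}$ together with the two ray-segments $\overline{vq_a}$ and $\overline{vq_c}$ bounds a triangular region $\Delta$ with apex $v$, and every ray from $v$ whose direction lies strictly between the directions of $a$ and $c$ (on the side containing $b$) must cross the opposite side $\overline{q_aq_c}$ of $\Delta$, hence must intersect $O$. Because $b$ lies between $a$ and $c$ in the radial order, the ray $b=\overrightarrow{vp_b}$ is such a ray, so $b$ intersects $O$ --- contradicting $b \notin I_O(v)$.

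The one subtlety is the cyclic nature of the radial order: ``between $a$ and $c$'' is ambiguous, since going around $v$ the rays $a$ and $c$ divide the pencil of directions into two arcs, and only one of them is ``inside'' the triangle $\Delta$. Here is where the fourth ray $d$ does its work. The four rays $a,b,c,d$ occur consecutively in cyclic order, so $b$ lies in one of the two arcs between $a$ and $c$ and $d$ lies in the other. The argument of the previous paragraph shows that whichever arc is the ``inner'' one (the directions pointing into $\Delta$, equivalently crossing $\overline{q_aq_c}$) consists entirely of rays meeting $O$. Thus at least one of $b,d$ meets $O$, again contradicting $b,d \notin I_O(v)$. (In fact one should be slightly careful that the ray, not merely the line, through $p_b$ or $p_d$ meets $\overline{q_aq_c}$: since $v$ is the apex of $\Delta$ and $\overline{q_aq_c}$ is the opposite side, any ray emanating from $v$ in a direction interior to the angle $\angle q_a v q_c$ does cross that side, so this is automatic.)

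I expect the main obstacle --- really the only place demanding care --- to be this cyclic bookkeeping together with the general-position hypothesis: one wants to make sure that ``$a,b,c,d$ consecutive, $a,c \in I_O(v)$, $b,d \notin I_O(v)$'' genuinely forces $b$ and $d$ to lie in opposite open arcs determined by $a$ and $c$, and that no ray coincides with another (guaranteed since the $n-1$ rays through distinct vertices are distinct, as the points are in general position and in particular no two coincide with $v$ on a common line through $v$). Once the combinatorial picture is pinned down, convexity of $O$ supplies everything else: it is exactly the statement that the ``shadow'' of a convex obstacle as seen from a point $v$ is an angularly connected set of directions. I would phrase the write-up as: reduce to the four-ray condition in the definition of interval, invoke convexity to get the triangle $\Delta \subseteq O$, observe the inner arc of directions all cross $\overline{q_aq_c}$ and hence meet $O$, and conclude that one of the two supposedly-excluded rays $b,d$ lies in that inner arc. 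This yields Observation~\ref{obs-blockingInterval}.
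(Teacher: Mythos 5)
Your proof is correct. The paper itself gives no explicit argument for this Observation: the sentence immediately preceding it, ``Since $O$ is connected, we get the following result,'' is the whole justification, the idea being the standard topological fact that the radial projection of a connected set $O$ from a point $v\notin O$ (i.e.\ the image of $O$ under $p\mapsto (p-v)/\lVert p-v\rVert$) is a connected subset of the circle, hence an arc. Your route is genuinely different and more hands-on: instead of appealing to connectedness of continuous images, you use convexity directly --- choose $q_a\in a\cap O$, $q_c\in c\cap O$, observe that $\overline{q_aq_c}\subseteq O$, and note that every ray from $v$ whose direction lies in the (necessarily $<\pi$) inner wedge spanned by $\overrightarrow{vq_a}$ and $\overrightarrow{vq_c}$ crosses $\overline{q_aq_c}$ and hence hits $O$; the cyclic bookkeeping you do with $b$ and $d$ then correctly forces one of them into that inner wedge. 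What this buys is a short, self-contained, elementary argument that also happens to prove the stronger statement that $I_O(v)$ is a contiguous arc (not just the paper's weaker ``no four consecutive $a,b,c,d$'' condition); what it costs is generality, since it leans on convexity rather than mere connectedness --- which is fine here, as all obstacles in this section are convex, but it is worth noting that the paper's phrasing deliberately uses the weaker hypothesis.
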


Thus, we call the set $I_O(v)$ the \emph{blocking interval} of the pair~$(v,O)$.
For convex obstacles, knowing which rays from a vertex towards other vertices are blocked by an obstacle suffices to know which edges are blocked by this obstacle.

\begin{lemma}
\label{lem:blockingInterval}
For two vertices $u,v \in V$, the pair $\{u,v\}$ is a non-edge of $G$ if and only if there is an obstacle $O \in \mathcal{O}$ such that $u \in I_O(v)$ and $v \in I_O(u)$.
\end{lemma}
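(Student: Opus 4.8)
The plan is to prove both directions by tracing what the blocking intervals encode about the line segment $\overline{uv}$. For the easy direction, suppose $\{u,v\}$ is an edge of $G$. Then the segment $\overline{uv}$ is not intersected by any obstacle. In particular, for every obstacle $O \in \mathcal{O}$, the ray $\overrightarrow{uv}$ either does not meet $O$ at all, or meets $O$ only \emph{beyond} the point $v$; in the former case $v \notin I_O(u)$ by definition, and in the latter case we must argue that the ray from $v$ in the direction away from $u$ (or towards $u$) still does not see $O$ on the $\overline{uv}$ side — here convexity is used: if a convex obstacle $O$ is hit by the ray $\overrightarrow{uv}$ only past $v$, then it lies entirely on one side, and the ray $\overrightarrow{vu}$ (which is the relevant ray in $R(v)$ passing through $u$) does not meet $O$, so $u \notin I_O(v)$. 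Thus for every $O$ at least one of $u \in I_O(v)$, $v \in I_O(u)$ fails, which is the contrapositive of the "if" direction.

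For the converse, suppose $\{u,v\}$ is a non-edge. Then there is some obstacle $O \in \mathcal{O}$ that intersects the segment $\overline{uv}$. Since $O$ meets $\overline{uv}$, the ray $\overrightarrow{uv}$ meets $O$, and the first intersection point along this ray lies in the relatively open segment between $u$ and $v$ (or possibly at $v$, but general position and the fact that $v \notin O$ rule out the endpoint); hence the ray in $R(u)$ through $v$ belongs to $I_O(u)$, i.e. $v \in I_O(u)$. By the symmetric argument with the roles of $u$ and $v$ swapped, $u \in I_O(v)$. This gives the obstacle $O$ required by the statement.

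The main subtlety — and the only place convexity is essential — is in the "edge $\Rightarrow$ not both in blocking intervals" direction: for a non-convex obstacle, the ray $\overrightarrow{uv}$ could miss $O$ entirely on the segment $\overline{uv}$ but still stab $O$ beyond $v$, and simultaneously $\overrightarrow{vu}$ could stab $O$ beyond $u$, so that $v \in I_O(u)$ \emph{and} $u \in I_O(v)$ while $\overline{uv}$ is unobstructed — making the lemma false without convexity. I would handle the convex case by the following observation: for a convex region $O$ and a line $\ell$ through $u$ and $v$, the set $\ell \cap O$ is a (possibly empty or degenerate) sub-segment of $\ell$; if it is disjoint from $\overline{uv}$ it lies entirely on one side, say beyond $v$, and then it is disjoint from the ray $\overrightarrow{vu}$ emanating from $v$ through $u$, so $u \notin I_O(v)$. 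This is a short, clean geometric fact, and once it is in place the rest of the proof is just the bookkeeping above, invoking Observation~\ref{obs-blockingInterval} only implicitly (the intervals are well-defined) rather than substantively.
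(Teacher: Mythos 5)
Your proof is correct and follows essentially the same route as the paper: both use that the intersection of a convex obstacle~$O$ with the line through $u$ and $v$ is a (possibly empty) sub\-segment, and therefore $O$ meets both rays $\overrightarrow{uv}$ and $\overrightarrow{vu}$ if and only if $O$ meets $\overline{uv} = \overrightarrow{uv} \cap \overrightarrow{vu}$. The paper states this equivalence in one line; you unpack it into the two directions with a short case analysis, but the mathematical content is identical. One tiny slip: ruling out the intersection occurring at the endpoint $v$ is a consequence of the fact that vertices of $G$ lie outside the (closed) obstacles, not of general position; this does not affect the argument.
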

\begin{proof}
Consider two vertices $u,v \in V$.
Let $R_u$ be the ray starting at $u$ and passing through $v$ and let $R_v$ be the ray starting at $v$ and passing through $u$.
Then $uv = R_u \cap R_v$.
Since each obstacle $O \in \mathcal{O}$ is convex, the intersection of $O$ with each ray or line segment is a (possibly empty) line segment.
Thus, $O$ intersects both rays $R_u$ and $R_v$ if and only if $O$ intersects the line segment $\overline{uv}$.
Since $(D,\mathcal{O})$ is an obstacle representation of $G$, we then have $\{u,v\} \notin E$ if and only if $u \in I_O(v)$ and $v \in I_O(u)$.
\end{proof}

We note that Lemma~\ref{lem:blockingInterval} is not true if the obstacles from $\mathcal{O}$ are not convex, since both rays $R_u$ and $R_v$ could then be blocked although the line segment~$\overline{uv}$ is not. 

By Observation~\ref{obs-blockingInterval} and Lemma~\ref{lem:blockingInterval}, it suffices to encode the blocking intervals $I_O(v)$ for every $v \in V$ and $O \in \mathcal{O}$.
In the following we will describe how to obtain an encoding of size~$2^{O(hn)}$ for the blocking intervals, which together with the~order type of $V$ yields the claimed bound on $f_c(h,n)$.
In order to describe our approach, it is more convenient to move to the dual setting, using the standard \emph{projective duality transformation} that maps a point~$p=(p_x,p_y) \in \R^2$ to the line~$p^* = \{(x,y) \in \R^2 \colon y=p_xx-p_y\}$, and that maps a non-vertical line~$\ell = \{(x,y) \in \R^2 \colon y=mx+b\}$ to the point~$\ell^*=(m,-b)$.
This map is an involution, that is, $(p^*)^*=p$ and~$(\ell^*)^*=\ell$.
Moreover, it preserves incidences, that is, $p\in\ell\Longleftrightarrow\ell^*\in p^*$, and is order-preserving as a point $p$ lies above~$\ell\Longleftrightarrow\ell^*$ lies above~$p^*$. 

So consider the arrangement~$\mathcal{A}$ of the set~$V^*$ of~$n$ lines dual to the points in~$V$.
Note that the combinatorial structure of~$\mathcal{A}$ itself, that is, the sequences of intersections with other lines along each line, can be obtained from the radial system of~$V$, cf.~\cite{w-cvgnl-85}. (To be able to identify the vertical direction and the $x$-order of the vertices, we add a special point very high above all other points.)  
Let $O$ be an obstacle from~$\mathcal{O}$.
Define a map~$\tau$ that assigns to each~$x\in\R$ the \emph{upper tangent} of slope~$x$ to~$O$, that is, the line~$\tau(x)$ of slope~$x$ that is tangent to~$O$ and such that~$O$ lies below~$\tau(x)$; see Figure~\ref{fig-dualMappingApp} for an illustration.
Now, consider the dual~$\tau^*$ of~$\tau$, defined by~$\tau^*(x)=(\tau(x))^*$.
Note that, by definition, every line in~the image of $\R$ via $\tau$ passes through a vertex of the upper envelope~$O^+$ of the convex hull of~$O$.
Consequently, each point~$\tau^*(x)$ lies on a line that is dual to a vertex of~$O^+$.
In other words, $\tau^*$ is a piecewise linear function that is composed of line segments along the lines dual to the vertices of~$O^+$.
The order of these line segments from left to right (in increasing $x$-order) corresponds to primal tangents of increasing slope and, therefore, to the order of the corresponding vertices of~$O^+$ from right to left (in decreasing $x$-order).
As primal $x$-coordinates correspond to dual slopes, the slopes of the line segments along~$\tau^*$ monotonically decrease from left to right, and so~$\tau^*$ is concave.

\begin{figure}
    \centering
    \includegraphics{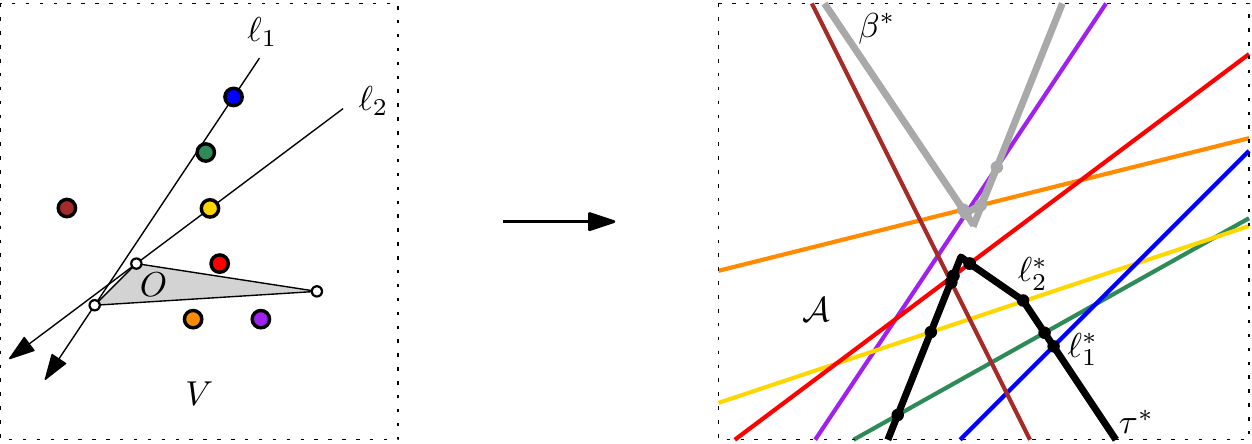}
    \caption{An example of a vertex set $V$ with its dual line arrangement $\mathcal{A}= V^*$. Points and lines of the same color are dual to each other. The curve $\tau^*$ for the upper envelope $O^+$ of the obstacle $O$ is denoted black and the curve $\beta^*$ for the lower envelope $O^-$ of the obstacle $O$ is denoted gray.}
    \label{fig-dualMappingApp}
\end{figure}

The primal line of a point in~$v^*\cap\tau^*$, for some~$v\in V$, passes through~$v$ and is an upper tangent to~$O$.
So such an intersection corresponds to an endpoint of the blocking interval $I_O(v)$.
In order to obtain all endpoints of the blocking interval $I_O(v)$, we eventually also consider the \emph{lower tangents} to~$O$ in an analogous manner.
The corresponding function~$\beta^*$ is convex and consists of segments along lines dual to the vertices of the lower convex hull of~$O$, in increasing $x$-order.

It remains to describe how to compactly encode the intersections of~$\tau^*$ with~$\mathcal{A}$.
To do so, we apply a result by Knuth~\cite{k-ah-92}; see also the description by Felsner and Valtr~\cite{fv-ccap-11}.
A set $\mathcal{A'}$ of biinfinite curves in~$\R^2$ forms an \emph{arrangement of pseudolines} if each
pair of curves from~$\mathcal{A}'$ intersects in a unique point, which corresponds to a proper, transversal crossing.
A curve~$\alpha$ is a \emph{pseudoline with respect to~$\mathcal{A'}$} if~$\alpha$ intersects each curve in~$\mathcal{A'}$ at most once.
Let $\gamma$ be a curve 
that intersects each curve from $\mathcal{A}'$ in a finite number of points.
The \emph{cutpath of $\gamma$ in $\mathcal{A'}$} is the sequence of intersections of $\gamma$ with the pseudolines from $\mathcal{A'}$ along~$\gamma$. 

\begin{theorem}[\cite{k-ah-92}]
\label{thm:pseudoline}
If~$\alpha$ is a pseudoline with respect to an arrangement~$\mathcal{A'}$ of $n$ pseudolines, then there are at most~$O(3^n)$ cutpaths of $\alpha$ in~$\mathcal{A'}$.
\end{theorem}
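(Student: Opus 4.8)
The plan is to recast the count in incremental form. First, reduce -- at the cost of an $O(n^2)$ factor for choosing the two unbounded cells of $\mathcal{A'}$ that contain the two ends of $\alpha$, which fixes the set $S$ of pseudolines crossed by $\alpha$ -- to the following: bound by $O(3^m)$ the number of cutpaths of a pseudoline $\alpha$ that crosses each of the $m\le n$ pseudolines of some arrangement. For such an $\alpha$, the arrangement together with the cutpath of $\alpha$ (the ordered sequence of crossings along $\alpha$) pins down, cell by cell, the route of $\alpha$, hence the combinatorial type of the $(m{+}1)$-pseudoline arrangement obtained by adding $\alpha$; so it is enough to bound the number of ways to extend an $m$-pseudoline arrangement by one pseudoline. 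I would fix an order $\ell_1,\dots,\ell_m$ on the pseudolines (chosen carefully, e.g.\ by the cyclic order of their ends at infinity), build the sub-arrangements $\mathcal{A}_i=\{\ell_1,\dots,\ell_i\}$ one at a time, and maintain the list of all combinatorial ways a hypothetical $\alpha$ could run through $\mathcal{A}_i$, each recorded as a partial cutpath plus a bounded amount of side data (in which unbounded cell each end of $\alpha$ lies). The target is the invariant that $\mathcal{A}_i$ admits at most $3^i$ such partial states, which for $i=m$ gives the bound.

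The core is a step lemma: each partial state for $\mathcal{A}_{i-1}$ extends to at most three partial states for $\mathcal{A}_i$. Because $\alpha$ meets $\ell_i$ at most once, passing from $\mathcal{A}_{i-1}$ to $\mathcal{A}_i$ amounts to inserting a single crossing with $\ell_i$ somewhere along the current route of $\alpha$, or inserting none. That crossing, if present, lies in the zone of $\ell_i$ within $\mathcal{A}_{i-1}$, and a priori this zone can meet $\alpha$'s route in many cells; the work is to show that the data already recorded for $\ell_1,\dots,\ell_{i-1}$, together with the at-most-once property, leave at most three combinatorially distinct insertions -- intuitively ``$\alpha$ crosses $\ell_i$ before a forced landmark cell'', ``after it'', or ``not at all''. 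Here one would use the zone theorem for pseudoline arrangements to keep the relevant portion of the picture of linear size, and then a planarity argument on how $\alpha$, $\ell_i$, and the boundary of the zone can interact. Granting the step lemma, the invariant follows by induction from the base case $i=0$ (the single empty partial state), and $i=m$ proves the theorem.

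I expect the step lemma to be the main obstacle: getting a genuine constant, rather than a bound that grows with the zone complexity, forces a careful choice of both the insertion order for $\ell_1,\dots,\ell_m$ and the exact side data carried in a partial state, so that the new crossing really is confined to three options. This is a rank-three local-to-global phenomenon of the same flavour as the axioms Knuth isolates for CC-systems (acyclic rank-$3$ oriented matroids), where pairwise-consistent orientation data glue into a unique global order; the cleanest way to discharge it is probably to model $\mathcal{A'}$ together with a hypothetical $\alpha$ as such a system and invoke its transitivity/interiority axiom, while the direct sweep argument sketched above is a more delicate alternative.
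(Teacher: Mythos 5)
This theorem is not proved in the paper; it is cited from Knuth's \emph{Axioms and Hulls} (see also Felsner and Valtr), and the relevant combinatorics are only sketched inside the proof of Lemma~\ref{lem:cutpaths}: walking along the cutpath $\alpha$, label each transition by whether $\alpha$ leaves the current cell through the leftmost edge (L), rightmost edge (R), or a middle edge (M), and prove the key lemma that each pseudoline of $\mathcal{A}'$ is encountered at most once as a middle edge. Recording the subset of M-lines together with a bitstring of L/R choices for the remaining steps yields $\sum_k \binom{n}{k} 2^{n-k} = 3^n$ codewords, and the codeword determines the cutpath. That is a one-shot encoding of the \emph{final} cutpath, not an incremental argument.

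Your proposal instead inserts the pseudolines one at a time and posits a step lemma: each partial cutpath through $\mathcal{A}_{i-1}$ has at most three extensions after adding $\ell_i$. This lemma is false, and you cannot repair it by a clever insertion order. Whichever pseudoline $\ell$ you insert last, the cutpath of $\alpha$ through the other $n-1$ pseudolines passes through up to $n$ cells, and in the worst case every one of those cells lies in the zone of $\ell$; then the unique $\alpha$--$\ell$ crossing may sit in any of up to $n$ of those cells, each giving a combinatorially distinct extension. (Concretely, already for $i=4$ with $\alpha$ crossing $\ell_1,\ell_2,\ell_3$ in order and $\ell_4$ passing through all four cells of that cutpath, you get four, not three, extensions.) The bound $3^n$ is only an \emph{average} growth rate of three per line; a uniform per-step bound of three cannot hold, so your induction would need an amortization that the proposal does not supply. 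The $O(n^2)$ prefactor for choosing the two unbounded end cells is also not free of charge against a bare $3^n$ target, though that is a secondary issue. To salvage the incremental route you would have to replace the step lemma by a charging scheme that attributes the extra branching at bad steps to savings elsewhere, which is in effect re-deriving Knuth's middle-edge lemma in disguise; the direct L/R/M encoding is substantially simpler.
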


Using Theorem~\ref{thm:pseudoline}, we can now estimate the number of cutpaths  of $\tau^*$ with respect to $\mathcal{A}$.

\begin{lemma}
\label{lem:cutpaths}
There are~$O(216^n)$ cutpaths of $\tau^*$ in~$\mathcal{A}$.
\end{lemma}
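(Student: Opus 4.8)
The curve $\tau^*$ is concave and piecewise linear, with pieces lying along lines dual to the vertices of the upper hull $O^+$. The key point is that $\tau^*$ is \emph{not} a pseudoline with respect to $\mathcal{A}$ — a line $v^*$ of $\mathcal{A}$ could cross $\tau^*$ up to twice (once on a "downhill" piece near the top, once farther along), since $\tau^*$ bends. So Theorem~\ref{thm:pseudoline} cannot be applied directly. The idea I would pursue is to \textbf{split $\tau^*$ into two monotone halves} at its highest point (equivalently, at the piece of slope closest to $0$, corresponding in the primal to the topmost vertex of $O^+$, i.e. the tangent of slope $0$). Call the left half $\tau^*_L$ and the right half $\tau^*_R$. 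On each half the slopes are of one sign and decreasing, so each half is genuinely a pseudoline relative to $\mathcal{A}$: any line of $\mathcal{A}$ meets a fixed concave monotone arc at most once. Actually one must be a touch careful — a concave arc and a line can meet twice even if the arc is $x$-monotone — so the right way to see "at most once" is to further observe that $\tau^*_L$ and $\tau^*_R$ can each be extended to a biinfinite concave (hence pseudoline-like) curve, but cleanest is to bound crossings by the number of hull vertices together with Theorem~\ref{thm:pseudoline} applied to the extension. Let me instead organize it as: extend $\tau^*_L$ to a biinfinite curve $\widehat{\tau^*_L}$ by prolonging its two extreme segments to rays; since the segment slopes are monotone this extension is still concave, and a concave curve crosses each line of $\mathcal{A}$ at most twice — not good enough either.

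Let me reset to the cleanest route. \textbf{Use that $\tau^*$ is a convex-position–like object and that $\mathcal{A}$ itself is a pseudoline arrangement.} The honest statement is: $\tau^*$ consists of at most $n$ linear pieces (one per vertex of $O^+$, and $O^+$ has at most $n$ vertices because every vertex of every obstacle... no — obstacles can be large). Hmm. But we only care about the \emph{cutpath} combinatorially, and a cleaner fact we are allowed to use is Observation~\ref{obs-blockingInterval}/Lemma~\ref{lem:blockingInterval}: the crossings of $\tau^*$ with a fixed line $v^*$ of $\mathcal{A}$ record the endpoints of the blocking interval $I_O(v)$, which form an \emph{interval}, so $v^*$ meets $\tau^*$ \textbf{in at most two points} (the two endpoints of the interval, one from $\tau^*$ and one from $\beta^*$ — so actually $v^*$ meets $\tau^*$ alone in at most one point!). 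Wait: $I_O(v)$ is an interval in $R(v)$; its two boundary rays are tangent to $O$; the upper tangent direction gives one endpoint and the lower tangent gives the other. So $v^* \cap \tau^*$ is a single point. Good — then $\tau^*$ \emph{is} a pseudoline with respect to $\mathcal{A}$, and Theorem~\ref{thm:pseudoline} applies directly, giving $O(3^n)$.

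So the plan is: (1) recall $\tau^*$ is a biinfinite piecewise-linear concave curve; (2) show each line $v^*\in\mathcal{A}$ crosses $\tau^*$ at most once, by the primal interpretation: a point of $v^*\cap\tau^*$ is dual to an \emph{upper} tangent of $O$ through $v$, and there is exactly one upper tangent of any given slope, while the set of slopes of upper tangents through $v$ forms (by Observation~\ref{obs-blockingInterval}) an interval whose two ends are the lower and upper tangent from $v$ — hence only one of these two comes from an \emph{upper} tangent, so $|v^*\cap\tau^*|\le 1$ (and $\tau^*$ meets itself nowhere, being a graph of a function); (3) likewise verify $\tau^*$ crosses each $v^*$ transversally (generic position / perturbation, consistent with the paper's standing assumption that no three obstacle vertices are collinear and points in general position); (4) conclude $\tau^*$ is a pseudoline w.r.t.\ $\mathcal{A}$, apply Theorem~\ref{thm:pseudoline} to get $O(3^n)$ cutpaths — and the claimed $O(216^n)=O(3^{6n})$ is then a (very) safe over-estimate, the slack presumably absorbing the bookkeeping when one later combines $\tau^*$ and $\beta^*$ over all obstacles and accounts for the at-most-$(n-1)$ order types per radial system.

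\textbf{Main obstacle.} The crux is step (2): pinning down that a single line of the arrangement crosses $\tau^*$ only once. This is exactly where convexity of $O$ is used, and it has to be argued through the primal picture (tangent lines of a given slope are unique for a convex body) together with Observation~\ref{obs-blockingInterval} to handle the subtlety that "$v^*$ meets $\tau^*$" must be distinguished from "$v^*$ meets $\tau^*\cup\beta^*$." If instead step (2) only yields "at most twice," one would fall back on splitting $\tau^*$ at its apex into two monotone concave arcs, apply Theorem~\ref{thm:pseudoline} to each (after extending each to a biinfinite pseudoline, which is legitimate since a monotone concave arc extended by its extreme tangent rays stays a pseudoline w.r.t.\ any line arrangement), and multiply, getting $O(3^n\cdot 3^n)=O(9^n)$ — still comfortably within $O(216^n)$. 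Either way the final bound follows; the generous base of the exponent means the argument has room to spare.
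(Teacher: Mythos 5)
Your step~(2) is where the argument breaks. You claim that each line $v^*\in\mathcal{A}$ crosses $\tau^*$ at most once, reasoning that the two endpoints of the blocking interval $I_O(v)$ correspond to one \emph{upper} tangent and one \emph{lower} tangent from~$v$ to~$O$, so that only one endpoint lands on~$\tau^*$. That is only true when $v$ lies to the left or right of~$O$. If $v$ lies vertically \emph{above} $O$, both tangent lines from $v$ to $O$ touch $O$ on its upper boundary and have $O$ below them --- both are upper tangents --- so $v^*$ meets $\tau^*$ twice and $\beta^*$ not at all. (Symmetrically, a point below $O$ contributes two crossings to $\beta^*$ and none to $\tau^*$.) So $\tau^*$ is genuinely \emph{not} a pseudoline with respect to~$\mathcal{A}$, and Theorem~\ref{thm:pseudoline} cannot be invoked directly. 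Your opening paragraph actually identifies this correctly ("a line $v^*$ could cross $\tau^*$ up to twice... since $\tau^*$ bends"), but the subsequent tangent-bookkeeping talks you out of it incorrectly.

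The fallback you sketch does not repair this. Splitting $\tau^*$ at its apex into two monotone halves and extending each by its extreme tangent rays still produces a concave biinfinite curve, and a line can meet a concave curve in two points regardless of monotonicity (for concave $f$, $\{x: f(x)\ge mx+b\}$ is an interval, so $f(x)=mx+b$ can have two solutions). In fact, the two crossings of $v^*$ with $\tau^*$ for a vertex $v$ above $O$ can perfectly well lie on the same monotone half of $\tau^*$, so your split does not even separate the problematic double crossings. The paper's route is different in kind: it leaves $\tau^*$ intact, partitions the \emph{lines} of $\mathcal{A}$ into those crossing $\tau^*$ once ($\mathcal{A}_1$) and those crossing twice ($\mathcal{A}_2$), and handles $\mathcal{A}_2$ by cutting each such line at its leftmost (resp.\ rightmost) crossing with $\tau^*$ and replacing the discarded half by a near-vertical ray, producing a bona fide pseudoline arrangement $\mathcal{A}_2'$ (resp.\ $\mathcal{A}_2''$) in which $\tau^*$ \emph{is} a pseudoline. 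The factor $216^n = 2^n\cdot 3^n\cdot 9^n\cdot 4^n$ then comes from encoding the partition, the $\mathcal{A}_1$-cutpath, the two $\mathcal{A}_2$-cutpaths, and the interleaving bitstring; it is not slack absorbing sloppiness but the actual accounting for the two-crossing lines that your argument dismisses.
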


\begin{proof}
If~$\tau^*$ was a pseudoline with respect to~$\mathcal{A}$, then using Theorem~\ref{thm:pseudoline} we could conclude that there are only~$O(3^n)$ cutpaths of $\tau^*$ in~$\mathcal{A}$.
However, $\tau^*$ is not a pseudoline with respect to~$\mathcal{A}$ in general because a vertex from $V$ can be contained in two tangents to~$O^+$, in which case~$\tau^*$ intersects the corresponding line of~$\mathcal{A}$ twice.
More precisely, this is the case exactly for those points from~$V$ that lie vertically above~$O$, whereas all points that lie to the left or to the right of~$O$ have only one tangent to~$O^+$ and one tangent to the lower envelope~$O^-$ of the convex hull of~$O$.
All points that lie vertically below~$O$ have no tangent to~$O^+$ but two tangents to~$O^-$. (We assign the tangents to the leftmost and rightmost vertex of~$O$ as we see fit.)

To remedy this situation, we split the lines of~$\mathcal{A}$ into two groups: Let~$\mathcal{A}_1$ denote the arrangement of those lines from~$\mathcal{A}$ that intersect~$\tau^*$ at most once, and let~$\mathcal{A}_2$ denote the arrangement of those lines from~$\mathcal{A}$ that intersect~$\tau^*$ exactly twice. To describe a cutpath of~$\tau^*$ in~$\mathcal{A}$ it suffices to encode (1)~which lines are in~$\mathcal{A}_1$; (2)~the cutpath of~$\tau^*$ in~$\mathcal{A}_1$; (3)~the cutpath of~$\tau^*$ in~$\mathcal{A}_2$; and (4)~a bitstring that tells us whether the next crossing is with a line from~$\mathcal{A}_1$ or from~$\mathcal{A}_2$ when walking along~$\tau^*$ from left to right. For~(1) there are~$2^n$ options, for~(2) there are $O(3^n)$~options by Theorem~\ref{thm:pseudoline}, and for~(4) there are at most~$2^{2n}=4^n$ options. It remains to argue how to encode~(3).

To encode the cutpath of~$\tau^*$ in~$\mathcal{A}_2$, we split each line~$\ell\in\mathcal{A}_2$ at its leftmost crossing~$\mathrm{c}_\ell$ with~$\tau^*$ and construct a pseudoline~$\ell'$ by taking the part of~$\ell$ to the left of~$\mathrm{c}_\ell$ and extending it to the right by an almost vertical downward ray (that is, so that no vertex of~$\mathcal{A}$ lies between the vertical downward ray from~$\mathrm{c}_\ell$ and~$\ell'$). Now the collection~$\{\ell'\colon\ell\in\mathcal{A}_2\}$ induces a pseudoline arrangement~$\mathcal{A}_2'$ such that~$\tau^*$ is a pseudoline with respect to~$\mathcal{A}_2'$. Note that~$\tau^*$ crosses all pseudolines in~$\mathcal{A}_2'$ from below. We claim that the cutpath of~$\tau^*$ in~$\mathcal{A}_2'$ together with~$\mathcal{A}_2$ suffices to reconstruct~$\mathcal{A}_2'$.

Let us prove this claim. So suppose we are given~$\mathcal{A}_2$ and the cutpath~$p$ of~$\tau^*$ in~$\mathcal{A}_2'$. The cutpath~$p$ is encoded as a bitstring that determines if the path continues by leaving the current cell through its leftmost or rightmost edge and the set of lines that is crossed as a middle (i.e., neither leftmost nor rightmost) edge of a cell as soon as it is encountered in such a way. This information suffices to decribe~$p$ because it can be shown that~$p$ encounters every line of~$\mathcal{A}_2'$ at most once as a middle edge of a cell, cf.~\cite{fv-ccap-11,k-ah-92}. 

To recover~$\mathcal{A}_2'$ we trace~$p$ from left to right. We know that the starting cell of~$p$ in~$\mathcal{A}_2$ is the bottom cell of~$\mathcal{A}_2$. Then we just proceed by pretending that~$p$ was described with respect to~$\mathcal{A}_2$. However, at each crossing we cut the line~$\ell$ crossed by~$p$ and discard the part of~$\ell$ to the right of the crossing and replace it by an almost vertical downward ray, as described above in the construction of~$\mathcal{A}_2'$. Hence, as~$p$ crosses only lines that approach it from above, no line whose leftmost crossing with~$p$ has already been processed is ever considered again. Moreover, as we trace~$p$ from left to right, whenever a crossing of a line~$\ell$ with~$p$ is processed, it is the leftmost crossing of~$\ell$ with~$p$. In other words, all crossings that are discovered during the trace correspond to crossings of~$p$ in~$\mathcal{A}_2'$. 

Finally, we claim that during the trace~$p$ encounters every line at most once as a middle edge of a cell, and so the middle set information (which has been created with respect to~$\mathcal{A}_2'$, but we interpret it in our work in progress arrangement, which is a combination of~$\mathcal{A}_2$ and~$\mathcal{A}_2'$) suffices for the reconstruction. To see this, suppose that during our trace of~$p$ a line~$\ell$ is encountered as a middle edge of a cell~$C$ and then later again as a middle edge of a cell~$C'$. Let~$\ell^-$ be the line that contributes the leftmost edge to the upper boundary of~$C'$. Then the unique crossing of~$\ell^-$ and~$\ell$ lies to the right of or on~$\ell'\cap C'$ along~$\ell'$ and~$\ell'$ lies below~$\ell$ to the left of~$C'$. Further, we know that~$\ell^-$ does not cross the part of~$p$ traced so far (i.e., to the left of~$C'$) because if it did, we would have cut~$\ell'$ at this crossing. It follows that to the left of~$C'$ the pseudolines~$p$ and~$\ell$ are separated by~$\ell'$ and so~$\ell$ cannot appear on the boundary of any cell~$C$ traced by~$p$ so far, a contradiction. So, as claimed, during the trace~$p$ encounters every line at most once as a middle edge of a cell. This also completes the proof of our claim that the cutpath of~$\tau^*$ in~$\mathcal{A}_2'$ together with~$\mathcal{A}_2$ suffices to reconstruct~$\mathcal{A}_2'$.

In a symmetric fashion (vertical flip of the plane) we obtain an arrangement~$\mathcal{A}_2''$ and cutpath to describe the rightmost crossings of the lines in~$\mathcal{A}_2$ with~$\tau^*$. Combining both we can reconstruct the cutpath of~$\tau^*$ in~$\mathcal{A}_2$. By Theorem~\ref{thm:pseudoline} there are~$O(3^n)$ options for each of the two cutpaths. So there are~$O(9^n)$ options for~(3) and altogether~$O(216^n)$ cutpaths of~$\tau^*$ in~$\mathcal{A}$.
\end{proof}

\begin{figure}
    \centering
    \includegraphics{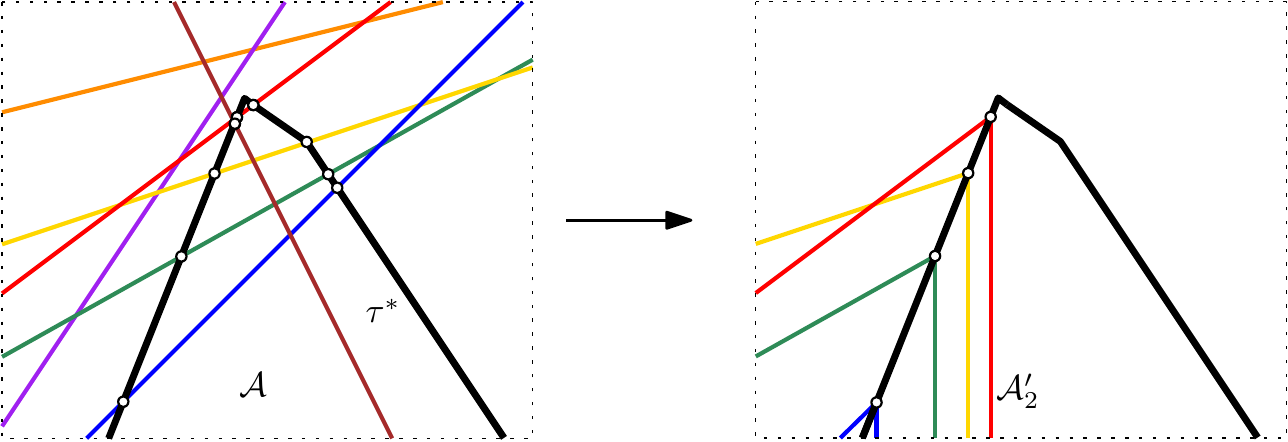}
    \caption{Cutting the lines of $\mathcal{A}$ that cross~$\tau^*$ twice at their leftmost crossing with~$\tau^*$. Then~$\tau^*$ is a pseudoline with respect to the resulting pseudoline arrangement $\mathcal{A}_2'$.}
    \label{fig-cutpathsApp}
\end{figure}

To summarize, we encode the obstacle representation $(D,\mathcal{O})$ of $G$ by first encoding the order type of $V$.
By Theorem~\ref{thm-orderType}, there are at most $2^{O(n\log{n})}$ choices for the order type of a set of $n$ points.
The order type of $V$ determines the arrangement $\mathcal{A}$ of $n$ lines that are dual to the points from $V$.
By Observation~\ref{obs-blockingInterval} and Lemma~\ref{lem:blockingInterval}, it suffices to encode the endpoints of the blocking intervals $I_O(v)$ for every $v \in V$ and $O \in \mathcal{O}$, which are defined using the radial system of $V$ that is also determined by the order type of~$V$.
For each obstacle $O \in \mathcal{O}$, the endpoints of all the intervals $I_V(O)$ are determined by the cutpath of the curve $\tau^*$ in $\mathcal{A}$ constructed for the upper tangents of $O$ and the analogous curve $\beta^*$ constructed for the lower tangents of $O$.
By Lemma~\ref{lem:cutpaths}, there are at most $O(216^n)$ cutpaths of $\tau^*$ in $\mathcal{A}$ and the same estimate holds for $\beta^*$.
This gives at most $O((216^n)^2)$ possible ways how to encode a single obstacle.
Altogether, we thus obtain
\[\fc(h,n) \in 2^{O(n\log{n})} \cdot O(216^{2hn}) \subseteq 2^{O(n\log{n} + hn)},\] which concludes the proof of Theorem~\ref{thm-complexityConvex}.

\begin{rem}
It is natural to wonder whether we cannot use Theorem~\ref{thm-complexityConvex} to obtain a linear lower bound on $\obs(n)$ by splitting each obstacle in an obstacle representation $(D,\mathcal{O})$ of a graph $G$ with $|\mathcal{O}| = h$ into few convex pieces and apply Theorem~\ref{thm-complexityConvex} on the resulting obstacle representation $(D,\mathcal{O}')$ of $G$ with convex obstacles.
But an obstacle $O$ with $r_O$ reflex vertices may require splitting into $\Omega(r_O)$ convex pieces and therefore we may have $|\mathcal{O}'| \in \Omega(h+r)$ where $r$ is the total number of reflex vertices of obstacles in $\mathcal{O}$.
Unfortunately, $r$ can be linear in $n$ and then the resulting bound $2^{O(n\log{n} + (h+r)n)}$ may be larger then the number $2^{\binom{n}{2}}$ of $n$-vertex graphs, in which case the proof of Theorem~\ref{thm-lowerBoundConvex} from Section~\ref{sec-lowerBound} no longer works.
\end{rem}

\section{A Lower Bound on the Obstacle Number of Drawings}
\label{sec-drawing}

Here, we prove Theorem~\ref{thm-drawing} by showing that there is a constant $\delta>0$ such that, for every $n$, there exists a graph $G$ on $n$ vertices and a drawing $D$ of $G$ such that ${\rm obs}(D) \geq \delta \cdot n^2$.
The proof is constructive. We first assume that $n$ is even.

A set of points in the plane is a \emph{cup} if all its points lie on the graph of a convex function.
Similarly, a set of points is a \emph{cap} if all its points lie on the graph of a concave function.
For an integer $k \geq 3$, a convex polygon $P$ is \emph{$k$-cap free} if no $k$ vertices of $P$ form a cap of size $k$. Note that $P$ is $k$-cap free if and only if it is bounded from above by at most $k-2$ segments (edges of $P$).
We use $e(P)$ to denote the leftmost edge bounding $P$ from above; see Figure~\ref{fig-drawing}(a).

\begin{figure}
    \centering
    \includegraphics{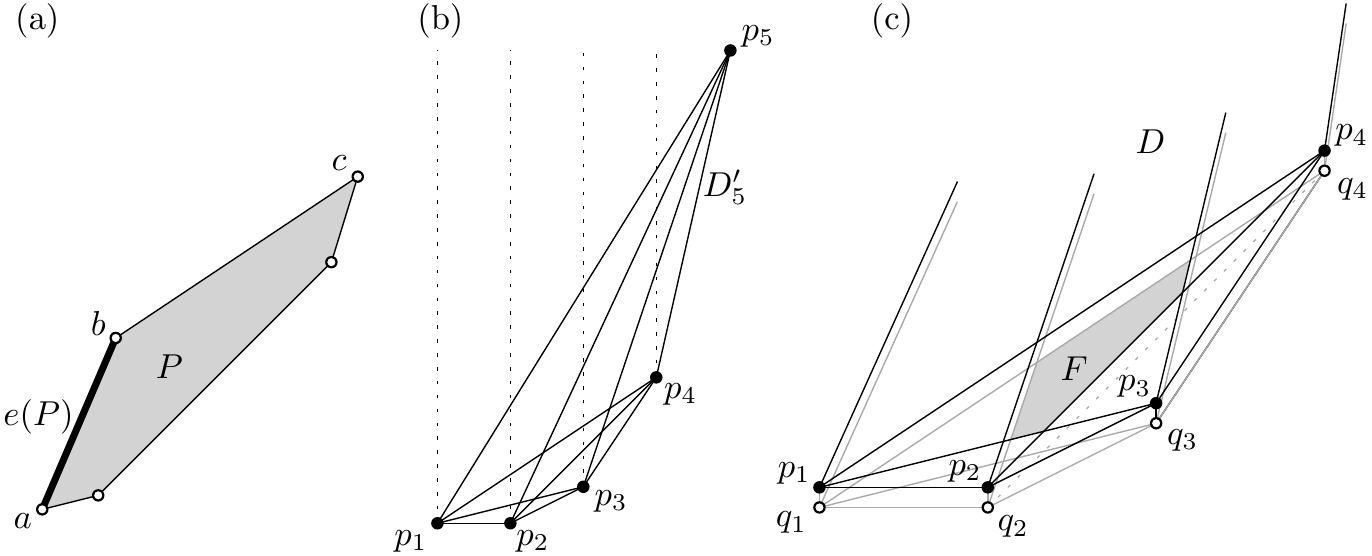}
    \caption{(a) A 4-cap free polygon $P$ that is not 3-cap free. (b) An example of the drawing $D'_m$ for $m=5$. If the point $p_m$ is chosen sufficiently high above $C_{m-1}$, then each line segment $\overline{p_i p_m}$ with $i < m$ is very close to the vertical line containing $p_i$ and thus all faces of $D'_m$ will be 4-cap free.  (c) An example of a part of the drawing $D$ with an example of a face $F$ of $D$. The line segments that are not in $D'_m$ are drawn grey and the non-edge from $X$ is dotted.}
    \label{fig-drawing}
\end{figure}

Let $m = n/2$.
First, we inductively construct a set $C_m$ of $m$ points $p_1,\dots,p_m$ in the plane that form a cup and their $x$-coordinates satisfy $x(p_i) = i$.
We let  $C_1=\{(1,0)\}$ and $C_2=\{(1,0),(2,0)\}$.
Now, assume that we have already constructed a set $C_{m-1} = \{p_1,\dots,p_{m-1}\}$ for some $m \geq 3$.
Let $D'_{m-1}$ be the drawing of the complete graph with vertices $p_1,\dots,p_{m-1}$.
We choose a sufficiently large number $y_m$, and we let $p_m$ be the point $(m,y_m)$ and $C_m = C_{m-1} \cup \{p_m\}$.
We also let $D'_m$ be the drawing of the complete graph with vertices $p_1,\dots,p_m$.
The number $y_m$ is chosen large enough so that the following three conditions are satisfied:
\begin{enumerate}
\item for every $i=1,\dots,m-1$, every intersection point of two line segments spanned by points of $C_{m-1}$ lies on the left side of the line $\overline{p_ip_m}$ if and only if it lies to the left of the vertical line $x=i$ containing the point $p_i$, 
\item if $F$ is a 4-cap free face of $D'_m$ that is not 3-cap free, then there is no point $p_i$ below the (relative) interior of $e(F)$,

\item no crossing of two edges of $D'_m$ lies on the vertical line containing some point $p_i$.
\end{enumerate}

Note that choosing the point $p_m$ is indeed possible as
choosing a sufficiently large $y$-coordinate $y_m$ of $p_m$ ensures that for each $i$, all the intersections of the line segments $p_ip_m$ with line segments of $D'_{m-1}$ lie very close to the vertical line $x=i$ containing the point $p_i$.
Note that it follows from the construction that no line segment of $D'_m$ is vertical and that no point is an interior point of more than two line segments of $D'_m$.
The drawing $D'_m$ also satisfies the following claim.

\begin{claim}
  \label{claim:drawing-faces}
Each inner face of $D'_m$ is a 4-cap free convex polygon.
\end{claim}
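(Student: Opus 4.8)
The plan is to prove Claim~\ref{claim:drawing-faces} by induction on $m$, using the three conditions imposed on the $y$-coordinate $y_m$ when $p_m$ is added. Convexity of each inner face is automatic: every inner face of a straight-line drawing of a complete graph is the intersection of half-planes bounded by the lines through pairs of points, hence is a convex polygon. So the real content is the ``$4$-cap free'' part, i.e.\ that each inner face is bounded from above by at most two edges.

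First I would treat the base case $m \le 4$ directly (the drawings are small enough that every inner face trivially has at most two upper edges, since one needs at least three caps to violate $4$-cap freeness and the faces are too small). For the inductive step, suppose every inner face of $D'_{m-1}$ is $4$-cap free. Adding $p_m = (m, y_m)$ introduces the $m-1$ new segments $\overline{p_i p_m}$, $i = 1,\dots,m-1$, each of which, by the choice of $y_m$, runs very close to the vertical line $x = i$ near $C_{m-1}$ and rises steeply to $p_m$. I would partition the inner faces of $D'_m$ into those lying ``below'' the fan of new segments (roughly, inside the region bounded above by the polygonal path $p_1 p_2 \cdots p_m$ together with the new segments, in the band of $x$-coordinates traversed by the old arrangement) and those contained in an old face of $D'_{m-1}$ essentially unchanged. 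For a face of the second kind, its upper boundary is inherited from $D'_{m-1}$ unless it is split by one of the new segments; condition~1 guarantees that the new segment $\overline{p_i p_m}$ separates points of $C_{m-1}$ and crossings of old edges exactly as the vertical line $x=i$ would, so the combinatorial structure of how new segments cut old faces is controlled, and a new segment entering an old face from its upper boundary can only replace part of that upper boundary by a single steep segment, not increase the number of upper edges beyond two. For a face of the first kind, one argues that its upper boundary consists of (a portion of) at most one new segment $\overline{p_i p_m}$ possibly together with at most one old edge, or of at most two consecutive segments of the path $p_1\cdots p_m$ — and here conditions~2 and~3 are exactly what is needed: condition~3 ensures faces are not created with a vertex lying on a vertical line $x=i$ (so the combinatorial picture near the vertical lines is generic), and condition~2 ensures that when an inner face $F$ is $4$-cap free but not $3$-cap free, its leftmost upper edge $e(F)$ has no $p_i$ below its interior, which prevents a new segment $\overline{p_ip_m}$ from cutting $F$ in a way that would produce a third upper edge.

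The main obstacle I expect is the careful case analysis of how the new segments $\overline{p_i p_m}$ subdivide the old faces and which new faces they bound from above — in particular, ruling out the configuration where an inner face ends up with three or more edges on its upper boundary. The key leverage is that each $\overline{p_i p_m}$ is, near the old arrangement, an arbitrarily good approximation of the vertical line $x = i$ (this is what conditions~1 and~3 formalize), so the ``upper boundary'' combinatorics of the new arrangement is a small perturbation of what one gets by cutting $D'_{m-1}$ along the $m-1$ vertical lines $x = 1, \dots, m-1$ and then bending those cuts up to meet at $p_m$. Cutting a $4$-cap free face by a near-vertical line, then replacing the cut by a single steep segment to $p_m$, adds at most one edge to the upper boundary, and one checks this never pushes the count past two — with condition~2 handling precisely the boundary case where the face already has two upper edges and we must verify the new near-vertical cut does not fall under the ``old'' second upper edge in a bad way. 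Once this is established for all faces, the induction closes and Claim~\ref{claim:drawing-faces} follows.
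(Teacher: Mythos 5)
Your overall strategy matches the paper's: induction on $m$, using the three conditions on $y_m$ to control how the near-vertical segments $\overline{p_i p_m}$ cut the old arrangement, with condition~2 singled out to handle an old face that already has two upper edges. That part is essentially the paper's argument for faces of $D'_m$ contained in a face of $D'_{m-1}$, and it is sound.

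Where the proposal goes astray is in defining and analyzing your ``first kind'' of faces. First, the two categories you describe overlap: a piece of an old face that gets sliced off by a new segment is both ``below the fan'' and ``contained in an old face,'' so the case split is not well-defined. Second, and more importantly, the faces of $D'_m$ that are \emph{not} contained in an old face are exactly those outside the convex hull of $C_{m-1}$, i.e.\ inside the triangle $p_1 p_{m-1} p_m$ — they sit above the chord $\overline{p_1 p_{m-1}}$, not above the cup path $p_1\cdots p_m$ as your description suggests. The only segments crossing that triangle are the rays $\overline{p_i p_m}$, all meeting at $p_m$, so each such face is a triangle with apex $p_m$ and base on $\overline{p_1 p_{m-1}}$, hence trivially $4$-cap-free (its upper chain is a single segment). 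Your proposed alternatives for the upper boundary of a first-kind face (two consecutive cup edges, or a new segment plus an old edge) do not occur for these triangles, so that branch of your case analysis would not close as stated. Replacing your ``below the fan'' notion with the clean inside/outside-the-old-convex-hull dichotomy, and then observing the outside faces are triangles with apex $p_m$, is exactly what the paper does and is the missing piece.
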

\begin{proof}
We prove the claim by induction on $m$.
The statement is trivial for $m \leq 3$ so assume $m \geq 4$.
Now, let $F$ be an inner face of $D'_{m-1}$.
By the induction hypothesis, $F$ is a 4-cap free convex polygon.
If $F$ is 3-cap free, then, by the choice of $p_m$, the line segments $\overline{p_i p_m}$ 
split $F$ into 4-cap free polygons (with the leftmost one being actually 3-cap free); see Figure~\ref{fig-drawing}(b).
If $F$ is 4-cap free and not $3$-cap free, then the choice of $p_m$ guarantees that the line segments $\overline{p_i p_m}$ 
split $F$ into 4-cap free polygons.
This is because the leftmost such polygon contains the whole edge $e(F)$ as there is no $p_i$ below the edge $e(F)$.
It remains to consider the inner faces of $D'_m$ which lie outside of the convex hull of the points $p_1,\dots,p_{m-1}$. These faces lie inside the triangle $p_1 p_{m-1} p_m$.
They are all triangular and therefore satisfy the claim; see the three faces with the topmost vertex $p_m=p_5$ in Figure~\ref{fig-drawing}(b).
\end{proof}

For a (small) $\varepsilon>0$ and every $i=1,\dots,m=n/2$, we let $q_i$ be the point $(i,y_i-\varepsilon)$.
That is, $q_i$ is a point slightly below $p_i$.
We choose $\varepsilon$ sufficiently small so that decreasing $\varepsilon$ to any smaller positive real number does not change the combinatorial structure of the intersections of the line segments spanned by the $n$ points $p_1,\dots,p_m,q_1,\dots,q_m$.
We let $D=D_n$ be the drawing with $n$ vertices $p_1,\dots,p_m,q_1,\dots,q_m$
containing all the line segments between two vertices, except the line segments $\overline{q_iq_j}$ where $i$ and $j$ are both even.

We have to show that at least quadratically many obstacles are needed to block all non-edges of $G$ in $D$.
Let $X$ be the set of line segments $\overline{q_i q_j}$ where both $i$ and $j$ are even.
Note that each line segment from $X$ corresponds to a non-edge of $G$ and that  
$|X|=\binom{\lfloor  n/4 \rfloor}{2} \geq n^2/40$ for a sufficiently large $n$.

\begin{claim}
  \label{claim:drawing-nonedges}
Every face of $D$ is intersected by at most two line segments from $X$.
\end{claim}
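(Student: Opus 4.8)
The plan is to analyze the structure of an arbitrary face $F$ of $D$ and bound how many segments of $X$ can enter it. The crucial observation should be that $F$ is contained in some face $F'$ of the subdrawing $D'_m$ (on the vertices $p_1,\dots,p_m$ only), since adding the points $q_i$ and their incident segments only refines the arrangement. By Claim~\ref{claim:drawing-faces}, $F'$ is a $4$-cap free convex polygon, hence bounded from above by at most two edges. A segment $\overline{q_iq_j}\in X$ with $i<j$ both even runs just below the segment $\overline{p_ip_j}$ (by the choice of $\varepsilon$), so it is ``almost a cap edge'' of the cup $C_m$; in particular its interior stays very close to the convex hull of $C_m$ from below. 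The key geometric point is that two such segments $\overline{q_iq_j}$ and $\overline{q_kq_\ell}$ that both meet the same face $F$ force the indices $\{i,j\}$ and $\{k,\ell\}$ to be ``nested or crossing'' in a very restricted way, and then property~2 in the construction (no point $p_t$ lies below the leftmost top edge $e(F')$ when $F'$ is not $3$-cap free) rules out a third one.

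First I would set up the reduction to $D'_m$: show every face of $D$ lies inside a unique face of $D'_m$, and that $e(F')$ (the leftmost top edge of the containing face) is the relevant object, using property~3 to guarantee that the vertical lines $x=t$ do not pass through crossings, so the picture near each $p_t$ is clean. Next I would show that any segment $\overline{q_iq_j}\in X$ meeting $F$ must cross or run alongside $e(F')$: since $\overline{q_iq_j}$ sits just below the hull edge $\overline{p_ip_j}$ of the cup, and the top boundary of $F'$ consists of at most two segments of $D'_m$, the segment from $X$ enters $F$ essentially through the region right beneath these top edges. Then I would argue that the ``slope order'' of the top edges of $F'$ (there are at most two, with decreasing slopes left to right, since $F'$ is a convex cup-face bounded above by a cap of at most two edges) together with property~2 restricts the possible pairs $(i,j)$: each top edge of $F'$ can be shadowed by at most one segment of $X$, because two distinct even-index chords lying just below the same line segment of $D'_m$ would, by the minimality of $\varepsilon$, have to be parallel-ish and one would block the other from reaching $F$. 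This yields the bound of two.

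I expect the main obstacle to be making precise the intuition ``$\overline{q_iq_j}$ hugs the hull edge $\overline{p_ip_j}$ from below, hence it can only shadow one top edge of $F'$.'' One has to rule out the scenario where a single segment from $X$ weaves below several top edges of $F'$ (impossible since $F'$ has at most two, but one must still check the segment doesn't re-enter $F$), and the scenario where two segments from $X$ both squeeze into the thin sliver between $F'$'s top boundary and the points $q_t$ — here property~2, which was inserted into the construction precisely for this, together with the smallness of $\varepsilon$, should do the work: if $F'$ is not $3$-cap free then no $p_t$ (and hence no $q_t$) lies below $e(F')$, so the sliver below $e(F')$ belongs to $F$ and a second chord would have to cross $e(F')$'s continuation or an incident edge, contradicting that $F$ is a single face. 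I would handle the $3$-cap free case (triangular-ish top) separately, where the bound is even easier. Finally I would note the parity restriction on $X$ (only even indices) is what keeps the chords $\overline{q_iq_j}$ from subdividing each other, so each survives as a genuine non-edge segment crossing $F$.
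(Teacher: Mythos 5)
Your plan rests on the assertion that every face of $D$ is contained in some face $F'$ of $D'_m$ ``since adding the points $q_i$ and their incident segments only refines the arrangement.'' This is false, and it is the gap. The points $q_i$ lie \emph{below} the cup $C_m$, i.e.\ outside the convex hull of $\{p_1,\dots,p_m\}$ and hence in the outer face of $D'_m$. Adding them creates new inner faces of $D$ in the strip between the two cups, namely inside the parallelograms $q_iq_{i+1}p_{i+1}p_i$, and these faces are not contained in any inner face of $D'_m$. They do get hit by segments of $X$: for instance $\overline{q_2q_4}$ passes through the parallelogram region near $q_2$. The paper's proof explicitly splits into two cases for exactly this reason; for the parallelogram case it uses a different argument, namely that every segment of $X$ meeting a parallelogram $Z=q_iq_{i+1}p_{i+1}p_i$ is incident to the single even-indexed $q_j\in\{q_i,q_{i+1}\}$, and edges of $G$ and non-edges of $X$ incident to $q_j$ alternate around $q_j$, so at most one non-edge enters any given face of $D$ inside $Z$. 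Your proposal has no mechanism to cover this case.

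For the case you do treat (faces inside a face $F'$ of $D'_m$), your idea is in the right direction but over-engineered and partly imprecise. You do not need the ``slope order'' discussion, the separate $3$-cap free subcase, or property~2 in the way you invoke it: by the choice of $\varepsilon$, a segment $\overline{q_iq_j}\in X$ intersects $F'$ precisely when a portion of $\overline{p_ip_j}$ bounds $F'$ from above, and since $F'$ is $4$-cap free it has at most two top edges, giving the bound of two directly. Also, the phrase ``its interior stays very close to the convex hull of $C_m$ from below'' is wrong in general: $\overline{p_ip_j}$ for $j>i+1$ is an interior chord of the cup, and $\overline{q_iq_j}$ hugs that chord, not the hull boundary. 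I would rewrite the first case along the paper's cleaner lines and add the missing parallelogram case.
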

\begin{proof}
The outer face of $D$ is intersected by no line segment of $X$.
Every inner face of $D$ is contained in some inner face of $D'_m$ or in one of the parallelograms $q_iq_{i+1}p_{i+1}p_i$.

First, suppose that a face $F$ of $D$ is contained in some face $F'$ of $D'_m$.
Due to the choice of $\varepsilon$, a line segment $\overline{q_iq_j}$ of $X$ intersects $F'$ (if and) only if a part of the edge $\overline{p_ip_j}$ bounds $F'$ from above. By Claim~\ref{claim:drawing-faces}, at most two line segments bound $F'$ from above. We conclude that at most two line segments of $X$ intersect $F'$.
Consequently, also at most two line segments of $X$ intersect $F$.

Suppose now that a face $F$ of $D$ is contained in some parallelogram $Z=q_iq_{i+1}p_{i+1}p_i$. Let $j$ be the even integer which is equal to $i$ or to $i+1$.
Due to the construction, all line segments of $X$ intersecting $Z$ are incident to $q_j$.
Since edges of $G$ and line segments of $X$ incident to $q_j$ alternate around $q_j$, at most one line segment of $X$ intersects $F$.
\end{proof}

Now, let $\mathcal{O}$ be a set of obstacles such that $D$ and $\mathcal{O}$ form an obstacle representation of~$G$.
Then each non-edge of $G$ corresponds to a line segment between vertices of $D$ that is intersected by some obstacle from $\mathcal{O}$.
In particular, each line segment from $X$ has to be intersected by some obstacle from $\mathcal{O}$.
Each obstacle $O$ from $\mathcal{O}$ lies in some face $F_O$ of $D$ as it cannot intersect an edge of $G$.
Thus, $O$ can intersect only line segments from $X$ that intersect~$F_O$. 
It follows from Claim~\ref{claim:drawing-nonedges} that each obstacle from $\mathcal{O}$ intersects at most two line segments from $X$.
Since $|X| \geq n^2/40$, we obtain $|\mathcal{O}| \geq |X|/2 \geq n^2/80$.
Consequently, ${\rm obs}(D) \geq n^2/80$, which finishes the proof for $n$ even.
If $n$ is odd, we use the above construction for $n-1$ and add an isolated vertex to it.

\section{Obstacle Number is FPT Parameterized by Vertex Cover Number}
\label{sec-fpt}

In this section we show that computing the obstacle number is
fixed-parameter tractabile with respect to the vertex cover number.
As our first step towards a fixed-parameter algorithm, we show that,
for every graph, its obstacle number is upper-bounded by a function of
its vertex cover number.

\begin{lemma}
  \label{lem:vcbound}
  Let $G$ be a graph with vertex cover number~$k$.  Then $G$ admits an
  obstacle representation with $1+\binom{k}{2}+k\cdot 2^k$ obstacles.
\end{lemma}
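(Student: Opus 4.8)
The plan is to construct an explicit obstacle representation of $G$ driven by a minimum vertex cover $C=\{c_1,\dots,c_k\}$, treating the vertices of $C$ and the independent set $I=V(G)\setminus C$ very differently. First I would place the $k$ cover vertices in convex position, say as the vertices of a regular $k$-gon, all near the origin. The non-edges that must be blocked then fall into two types: (i) non-edges with both endpoints in $C$, of which there are at most $\binom{k}{2}$, and (ii) non-edges with at least one endpoint in $I$ (there are no non-edges inside $I$ only if $I$ is not independent, but since $I$ is independent, \emph{every} pair inside $I$ is a non-edge, so this type also includes all pairs within $I$). The key structural observation is that each vertex $v\in I$ is determined, as far as adjacency is concerned, by its \emph{trace} on $C$, i.e.\ the set $N(v)\subseteq C$; there are at most $2^k$ possible traces. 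The idea is to group the vertices of $I$ into these $\le 2^k$ classes and handle each class with few obstacles, placing the points of each class far away along a dedicated direction so that classes do not interfere with one another.

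The main work is the gadget for a single trace-class. Fix a subset $A\subseteq C$ and let $I_A=\{v\in I: N(v)=A\}$. I would place all points of $I_A$ very far away from the $k$-gon, collinearly (or nearly so, to respect general position) along a ray whose direction is chosen so that the $|I_A|$ points are "behind one another" as seen from the $k$-gon. Because they are far and almost aligned, the segment between any two points of $I_A$ is nearly the same line, so a single thin obstacle placed between consecutive points along that ray can be made to block \emph{all} $\binom{|I_A|}{2}$ non-edges within $I_A$ at once (each such segment passes through a common sliver region); this costs $1$ obstacle per class. It remains to block, for every $v\in I_A$, all segments from $v$ to the vertices of $C\setminus A$ while leaving the segments from $v$ to $A$ unobstructed. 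Here I would exploit that, from the far-away vantage point of the ray for class $A$, the $k$-gon subtends a tiny angle and the vertices $c_1,\dots,c_k$ appear in a fixed angular order; the "forbidden" vertices $C\setminus A$ occupy at most $|C\setminus A|\le k$ contiguous-or-not angular slots. For each forbidden vertex $c_j$ I place one small obstacle just outside the $k$-gon, on the side facing the ray direction of $A$, positioned so that it blocks exactly the segments from all points of $I_A$ to $c_j$ and nothing else; this is possible because all those segments, viewed near $c_j$, form a narrow pencil (the far points are essentially a single direction). That is at most $k$ obstacles per class, so at most $(1+k)\cdot 2^k \le k\cdot 2^k + 2^k$ obstacles for all of $I$; combined with the $\binom{k}{2}$ obstacles inside the $k$-gon for type-(i) non-edges and one outside obstacle to handle the outer face / any remaining boundary non-edges, we reach the claimed bound $1+\binom{k}{2}+k\cdot 2^k$.

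The step I expect to be the genuine obstacle is showing that all these gadgets can coexist \emph{without} creating spurious blockages: an obstacle introduced near $c_j$ for class $A$ must not accidentally intersect a segment it should leave alone — neither an edge $c_i v'$ for $v'$ in some other class $A'$, nor an edge $c_i c_{i'}$, nor a required edge $vc_i$ with $c_i\in A$. Controlling this amounts to a careful choice of the "far" distances and of the angular separation between the rays assigned to the different trace-classes, done in order of increasing distance so that each newly added cluster of $I$-points is so remote that, from the $k$-gon, it is angularly disjoint from all previously placed clusters and from the $k$-gon's own interior directions. I would make this quantitative with a sequence of genericity/perturbation arguments: pick the $2^k$ ray directions to be pairwise distinct and in general position with the $k$-gon, then pick the clusters' distances iteratively large enough that (a) within-cluster segments all stab a common thin region, (b) the pencil of segments from a cluster to any fixed $c_j$ is thin enough to be blocked by one small polygon disjoint from everything else, and (c) no segment from a cluster to a vertex of $A$ enters any obstacle. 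Finally I would verify that the resulting obstacles are pairwise disjoint simple polygons (they are, being tiny polygons placed in pairwise disjoint regions — inside distinct faces near $C$, along distinct rays, and one in the outer face) and that the visibility graph is exactly $G$: edges inside $C$ survive except the $\binom{k}{2}$ blocked ones, edges from $I$ to $C$ survive exactly to $A=N(v)$, and pairs inside $I$ are all blocked, as required.
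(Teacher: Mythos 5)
There is a genuine gap, and it shows up both in the accounting and in a missing case. Your plan uses one thin obstacle per trace class to block the $\binom{|I_A|}{2}$ segments internal to that class, which is $2^k$ obstacles; adding the $k\cdot 2^k$ obstacles for class-to-cover non-edges, the $\binom{k}{2}$ obstacles inside the cover, and one outside obstacle gives $1+\binom{k}{2}+k\cdot 2^k+2^k$, which overshoots the claimed bound by $2^k$. The sentence ``so at most $(1+k)\cdot 2^k \le k\cdot 2^k + 2^k$ obstacles for all of $I$'' does not compress this: $(1+k)2^k$ equals $2^k+k\cdot 2^k$, and nothing in your construction merges those $2^k$ per-class obstacles into one.

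More importantly, you never block the \emph{inter}-class non-edges: with $I$ independent, every segment between a vertex of class $A$ and a vertex of class $A'\neq A$ is a non-edge, and placing classes far away along distinct rays does nothing to obstruct those segments. They are also not reliably caught by one outside obstacle, because such a segment typically crosses many cover-to-independent edges and therefore passes through several faces of the drawing, not just the outer one. The idea you are missing is the one the paper uses to make the $+1$ work: place \emph{all} of $I$ on a single line $\ell$ far from the cover (with each class occupying its own short subsegment of $\ell$), and then a single ``hair-comb''-shaped obstacle with a tooth between every two consecutive points of $\ell$ blocks all $\binom{|I|}{2}$ pairwise visibilities within $I$ at once --- within-class and between-class alike --- while not obstructing any segment from $\ell$ to the cover. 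That single obstacle replaces both your $2^k$ per-class obstacles and your (unstated) treatment of inter-class non-edges, bringing the count down to exactly $1+\binom{k}{2}+k\cdot 2^k$. Your handling of cover--cover non-edges ($\binom{k}{2}$ tiny obstacles near cover vertices on an arc) and of class-to-cover non-edges ($\le k$ nearly point-like obstacles per class, placed next to the forbidden cover vertices) matches the paper; it is the placement of $I$ and the blocking of $I$--$I$ non-edges that needs to change.
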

\begin{proof}
  Our proof is by construction.  Let $X\subseteq V$
  be a cardinality-$k$ vertex cover of $G=(V,E)$.  We begin by placing
  the vertices in~$X$ on an arc $\gamma$ that is a small part of a
  circle with some center~$c$.  For an example, see Figure~\ref{fig:kernel},
  where $X=\{x,y,z\}$.  This ensures that $X$ is in general position.
  For each pair of non-adjacent vertices $x,y\in X$, we then place a
  single ``tiny'' point-like obstacle in the immediate vicinity of
  either $x$ or $y$; the choice is arbitrary.  Crucially, these at
  most $\binom{k}{2}$ many obstacles will not obstruct any visibility
  between other pairs of vertices in~$G$.  (In Figure~\ref{fig:kernel},
  these are the two square obstacles.)

  \begin{figure}[htb]
    \centering
    \includegraphics{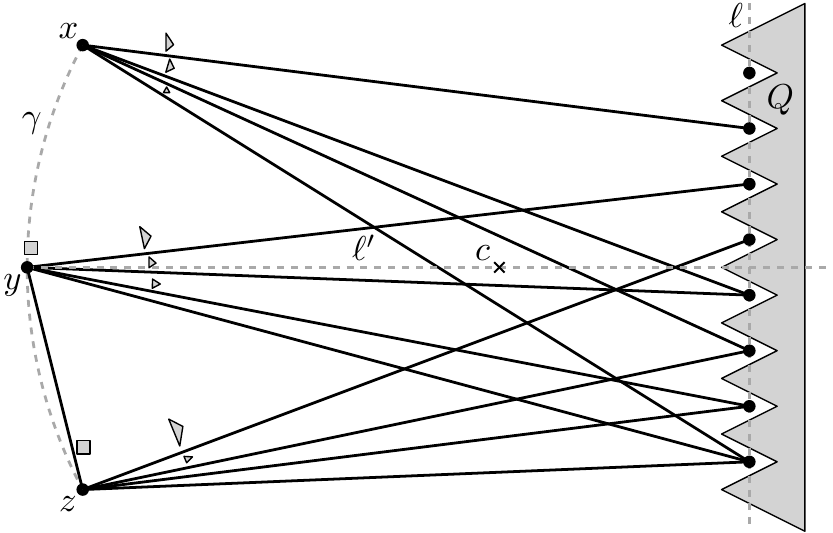}
    \caption{The obstacle number is bounded by a function of the
      vertex cover number.}
    \label{fig:kernel}
  \end{figure}
  
  Next, we partition the vertices in~$V\setminus X$ into at most $2^k$
  equivalence classes called \emph{types} based on the following
  equivalence: two vertices $a,b\in V\setminus X$ have the same type
  if and only if they have the same neighborhood.  We place these
  vertices on a line $\ell$ sufficiently far from $\gamma$.  On this
  line, we mark $2^k$ sufficiently small, pairwise disjoint
  segments---one for each type.  We place~$\ell$ orthogonally to the
  line~$\ell'$ that connects the center of $\gamma$ to $c$.
  The small segments will be placed in the general vicinity of the
  intersection point of~$\ell$ and $\ell'$.
  The idea is that the vertices of some type
  will all be placed on the segment of $\ell$ dedicated to that type.
  (In Figure~\ref{fig:kernel}, there is only one representative of each type.)

  We now construct a single obstacle $Q$ that blocks the pairwise
  visibilities between all vertices placed on~$\ell$; see
  Figure~\ref{fig:kernel}.  The boundary of $Q$ consists of a zig-zag
  line whose zigs and zags all intersect~$\ell$ and whose first and
  last line segments are a bit longer than the others.  We close~$Q$
  by connecting the endpoints of the zig-zag line by a line segment
  that is parallel to~$\ell$.  Intuitively, the shape of~$Q$ can be
  described as a ``hair comb''.  By construction, $Q$ guarantees that
  no two vertices placed on~$\ell$ will be visible to each other.  At
  the same time, $Q$ does not obstruct visibilities between the
  vertices in~$X$ and those in $V\setminus X$.

  To complete the construction, it suffices to block the visibilities
  between the vertices in the types and the vertices in $X$. To this
  end, for each type $T \subseteq V \setminus X$ and each vertex $x\in X$ such that no
  vertex in~$T$ is adjacent to~$x$, we add a single nearly point-like
  obstacle in the immediate vicinity of $x$ that (i)~blocks visibility
  between $x$ and the line segment of $\ell$ dedicated to~$T$, but
  (ii)~does not block visibility between any other pair of vertices in
  the construction.  (In Figure~\ref{fig:kernel}, these are the triangular
  obstacles.)  To see that this is possible, observe that, since
  $\gamma$ is a circular arc, small obstacles can be placed in the immediate
  vicinity of vertices on~$\gamma$ in the direction of~$c$ without
  lying on lines connecting other vertices on~$\gamma$.

  This completes the construction.  We have used at most
  $1+\binom{k}{2}+k\cdot 2^k$ obstacles.
\end{proof}

Our algorithm crucially uses the exponential-time decision procedure
for the existential theory of the reals (ETR) by
Renegar~\cite{Renegar92a,Renegar92b,Renegar92c}.  An \emph{existential
  first-order formula about the reals} is a formula of the form
$\exists x_1\dots\exists x_m$ $\Phi(x_1,\dots,x_m)$, where $\Phi$
consists of Boolean combinations of order and equality relations
between polynomials with rational coefficients over the variables
$x_1,\dots,x_m$.  Renegar's result on the ETR can be summarized as
follows.

\begin{theorem}[Renegar~\cite{Renegar92a,Renegar92b,Renegar92c}]
  \label{thm:renegar}
  Given any existential sentence $\Phi$ about the reals, one can
  decide whether $\Phi$ is true or false in time
  \[
    (L\log L\log\log L) \cdot (PD)^{O(N)},
  \]
  where~$N$ is the number of variables in~$\Phi$,~$P$ is the
  number of polynomials in~$\Phi$, $D$ is the maximum total
  degree over the polynomials in~$\Phi$, and~$L$ is the maximum length
  of the binary representation over the coefficients of the polynomials
  in~$\Phi$.
\end{theorem}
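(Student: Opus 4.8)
The plan is to reduce deciding whether $\Phi$ is true to the construction of a finite \emph{sample set} $\Sigma \subseteq \R^N$ that meets every connected component of the realization set of every \emph{sign condition} of the polynomials $g_1,\dots,g_P$ occurring in $\Phi$; here a sign condition is a vector $\sigma \in \{-1,0,+1\}^P$ with realization set $\{x : \operatorname{sign} g_i(x) = \sigma_i \text{ for all } i\}$. If such a $\Sigma$ is available, one would compute $\operatorname{sign} g_i(p)$ for every $p\in\Sigma$ and every $i$, substitute the resulting truth values into the Boolean part of $\Phi$, and report that $\Phi$ is true iff it holds at some $p\in\Sigma$. Correctness is immediate: the set $\{x \in \R^N : \Phi(x)\}$ is a union of realization sets of certain sign conditions, each of which is a union of its connected components, so if this set is nonempty it contains a point of $\Sigma$. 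Everything therefore reduces to constructing $\Sigma$ of size $(PD)^{O(N)}$ whose points are algebraic of degree $(PD)^{O(N)}$ with integer coefficients of bit length $L\cdot(PD)^{O(N)}$.

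To build $\Sigma$ one cannot afford cylindrical algebraic decomposition, which would be doubly exponential, $(PD)^{2^{O(N)}}$; the singly-exponential bound forces the \emph{critical-point method}. First I would regularize the geometry: pass to the real closed field of algebraic Puiseux series in a chain of infinitesimals $1 \gg \varepsilon_0 \gg \varepsilon_1 \gg \cdots$ (legitimate because a sentence holds over $\R$ iff it holds over any real closed extension), and intersect everything with the ball of radius $1/\varepsilon_0$, which makes all sets bounded without changing realizability. For each of the $(PD)^{O(N)}$ candidate sign conditions, replace $g_i>0$ by $g_i=\varepsilon_j$ and $g_i<0$ by $g_i=-\varepsilon_j$ (the $\varepsilon_j$ taken in decreasing order so that the deformed set is smooth and its components limit onto those of the original realization set), and fold these equations together with the ball constraint into the zero set of a single polynomial $Q$ of degree $O(PD)$, which by the deformation argument is a smooth bounded hypersurface.

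The heart of the argument is sampling such a hypersurface. After a generic linear change of coordinates the system
\[
  Q = 0, \qquad \frac{\partial Q}{\partial x_2} = \cdots = \frac{\partial Q}{\partial x_N} = 0
\]
is zero-dimensional, so by B\'ezout it has at most $\deg(Q)\,(\deg(Q)-1)^{N-1} = (PD)^{O(N)}$ complex solutions; and since $\{Q=0\}$ is bounded, the coordinate function $x\mapsto x_1$ attains a minimum on each (compact) connected component, and that minimizer is a real solution of the system, so the real solutions meet every component. Encoding this zero-dimensional system by a rational univariate representation obtained through iterated resultant and subresultant computations produces $(PD)^{O(N)}$ explicitly described algebraic sample points of degree $(PD)^{O(N)}$ with coefficients of bit length $L\cdot(PD)^{O(N)}$. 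Running this over all candidate sign conditions (only $(PD)^{O(N)}$ are realizable, by the Oleinik--Petrovsky--Thom--Milnor bound) and taking limits of the sample points as the infinitesimals tend to $0$ returns the desired $\Sigma\subseteq\R^N$.

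It remains to bound the running time. There are $(PD)^{O(N)}$ sample points, and comparing two of them or evaluating $\operatorname{sign} g_i$ at one reduces to sign evaluation of univariate polynomials of degree $(PD)^{O(N)}$ with coefficients of bit length $b=L\cdot(PD)^{O(N)}$, each costing $(PD)^{O(N)}$ arithmetic operations on $b$-bit integers; with fast integer multiplication one operation is nearly linear in $b$, and absorbing every factor independent of $L$ into $(PD)^{O(N)}$ yields the claimed $(L\log L\log\log L)\cdot(PD)^{O(N)}$. The main obstacle is making the critical-point method correct in full generality: choosing the infinitesimal deformation so that $\{Q=0\}$ is \emph{simultaneously} smooth, bounded, and in sufficiently general position that the projection $x\mapsto x_1$ has only nondegenerate critical points hitting every component, while keeping $\deg Q = O(PD)$ and the coefficient blow-up at $L\cdot(PD)^{O(N)}$; and then carrying out all algebraic-number arithmetic (isolation and comparison, e.g.\ via Thom encodings) within the same budget.
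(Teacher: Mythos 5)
This statement is not proved in the paper at all: it is imported verbatim as Renegar's theorem and used as a black box inside Lemma~\ref{lem:etr}, so there is no in-paper argument to compare yours against. What you have written is a sketch of the standard singly-exponential decision procedure for the existential theory of the reals via the critical-point method (sample points meeting every connected component of every realizable sign condition, infinitesimal deformation to a smooth bounded hypersurface, B\'ezout-bounded critical-point system after a generic coordinate change, rational univariate representations, and Thom-encoding arithmetic). That is a legitimate route to the quoted complexity bound, though in style it is closer to the Grigoriev--Vorobjov/Basu--Pollack--Roy line of development than to Renegar's original argument, which is organized around multivariate resultants (the $u$-resultant) and a parametrized, limit-based construction; both yield $(PD)^{O(N)}$ arithmetic complexity with softly linear dependence on the coefficient bit length $L$, which is the only feature of the theorem the paper actually uses. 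Your accounting of the bit-length blow-up ($L\cdot(PD)^{O(N)}$ for the sample-point data, near-linear integer arithmetic, absorbing all $L$-independent factors into $(PD)^{O(N)}$) is consistent with the stated $(L\log L\log\log L)\cdot(PD)^{O(N)}$ bound. Be aware, however, that your sketch openly defers the genuinely difficult parts---choosing the deformation so that smoothness, boundedness, and general position hold simultaneously without degree or coefficient blow-up, and carrying out the algebraic-number sign determinations within the same budget---so as written it is a correct outline of a known proof rather than a self-contained one; for the purposes of this paper, citing Renegar (or Basu--Pollack--Roy) as the source, exactly as the authors do, is the appropriate level of rigor.
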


Note that Renegar's algorithm decides whether there exist real values
for the variables in~$\Phi$ that make~$\Phi$ true; the algorithm does
not explicitly compute these values.

\begin{lemma}
  \label{lem:etr}
  Let $G$ be a graph with vertex cover number~$k$, and let $h$ be a
  positive integer.  Suppose that there is a computable function~$f$
  such that~$G$ has at most $f(k)$ vertices, then we can decide in FPT
  time with respect to~$k$ whether $G$ admits an obstacle
  respresentation with at most $h$ obstacles.
\end{lemma}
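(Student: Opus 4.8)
The plan is to reduce the question of the existence of an obstacle representation with at most $h$ obstacles to a single existential first-order sentence $\Phi$ about the reals, so that Renegar's algorithm (Theorem~\ref{thm:renegar}) can decide it. First I would fix the combinatorial data of a hypothetical solution: since $G$ has at most $f(k)$ vertices, and every obstacle in an obstacle representation may without loss of generality be taken to be a simple polygon with at most some bounded number of vertices, we need a bound on the complexity of the obstacles. Here I would invoke the machinery developed in Section~\ref{sec-complexity}: by Lemmas~\ref{lem-minimalObstaclesConvex},~\ref{lem-notBlocking}, and~\ref{lem-minimalObstaclesReflex}, a minimal obstacle representation of a connected graph on $N$ vertices uses obstacles with a total of $O(hN)$ vertices; hence each obstacle has at most $O(hN) = O(h\cdot f(k))$ vertices. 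Since (as already observed after Lemma~\ref{lem:vcbound}) we may assume $h \le 1+\binom{k}{2}+k2^k$, the total number of obstacle vertices is bounded by a function of $k$ alone, say $m(k)$. (For the disconnected case we apply this componentwise, as in the Claim on connectivity in Section~\ref{sec-complexity}.)

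Next I would set up the ETR instance. Introduce a pair of real variables for each of the $N \le f(k)$ vertex coordinates, and a pair of real variables for each of the at most $m(k)$ obstacle-polygon vertices, together with combinatorial bookkeeping: a guess of how the obstacle vertices are partitioned among the (at most $h$) obstacles and in what cyclic order they appear along each obstacle boundary. There are only boundedly many such guesses (a function of $k$), so we may iterate over all of them outside the ETR call. For a fixed guess, the conditions that must be expressed as a Boolean combination of polynomial (in)equalities are: (i)~the obstacle polygons are simple and pairwise disjoint; (ii)~for every edge $uv \in E(G)$, the segment $\overline{uv}$ is disjoint from every obstacle; and (iii)~for every non-edge $uv$, the segment $\overline{uv}$ meets at least one obstacle. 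Each of these is a statement about finitely many points, segments, and polygon edges, and segment/polygon intersection and disjointness are expressible via sign conditions on low-degree polynomials (orientations of triples of points, i.e.\ $2\times2$ determinants, and existential quantifiers over a parameter along a segment, which can be eliminated or kept since ETR allows existential quantifiers). Simplicity of a polygon is "no two non-adjacent edges cross", again a bounded Boolean combination of such sign conditions. Disjointness of two polygons is "no edge of one crosses an edge of the other, and no vertex of one lies inside the other", the latter being a standard ray-crossing sign condition.

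With the formula in hand I would estimate its parameters: the number of variables $N$ is $O(f(k) + m(k))$, a function of $k$; the number of polynomials $P$ is polynomial in $f(k)$ and $m(k)$, hence a function of $k$; the maximum degree $D$ is a constant (orientation tests are quadratic, inclusion tests are bounded); and the coefficient bit-length $L$ is constant (the coefficients are small integers). Plugging into Theorem~\ref{thm:renegar} gives a running time of $(PD)^{O(N)}$, which is bounded by a function of $k$ only --- i.e.\ constant with respect to the (already-bounded) input size, hence trivially FPT. Summing over the boundedly many combinatorial guesses preserves this. Therefore we can decide in FPT time whether $G$ admits an obstacle representation with at most $h$ obstacles.

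The main obstacle I expect is not the ETR encoding itself --- that is routine, if tedious --- but justifying the \emph{bound on the number of vertices per obstacle}. One must argue that it suffices to search over obstacle representations in which the obstacles have bounded complexity, which is exactly what the minimality analysis of Section~\ref{sec-complexity} provides; care is needed to handle disconnected $G$ and to confirm that a minimal representation (as opposed to merely one with few obstacles) still witnesses membership, and that the bound $O(hN)$ translates correctly once $h$ and $N$ are both bounded by functions of $k$. A secondary point requiring care is faithfully expressing "polygon is simple" and "polygons are pairwise disjoint" as quantifier-free (or existentially quantified) polynomial conditions without blowing up the degree; this is standard computational-geometry predicate engineering but should be spelled out.
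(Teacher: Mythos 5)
Your overall architecture --- bound the obstacle complexity by a function of $k$, then feed a quantifier-free formula to Renegar --- matches the paper, but you take a genuinely different route to the key complexity bound. You invoke the minimality machinery of Section~\ref{sec-complexity} (Lemmas~\ref{lem-minimalObstaclesConvex}, \ref{lem-notBlocking}, \ref{lem-minimalObstaclesReflex}) to get a total of $O(hn)$ obstacle corners. The paper instead uses a simple, self-contained planarization argument: each obstacle lies in a face of the straight-line drawing, and one can WLOG replace every obstacle by a slightly shrunk copy of its containing face of the planarized drawing; since a planarization of an $n$-vertex, $m$-edge drawing has only $O(m^2)$ vertices and edges, the total obstacle complexity is $O(m^2+n)$. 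Both bounds are computable functions of $k$, so both are adequate for the FPT conclusion. Your route buys a tighter bound but at the cost of importing the heavier analysis of Section~\ref{sec-complexity}; the paper's route is cruder but elementary and independent.

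There is, however, a genuine gap in your handling of disconnected graphs. Lemmas~\ref{lem-boundaryIntersections}, \ref{lem-minimalObstaclesConvex}, and~\ref{lem-minimalObstaclesReflex} are stated for \emph{connected} $G$, and the proofs use this: Lemma~\ref{lem-boundaryIntersections} needs every vertex to be non-isolated, and Lemma~\ref{lem-minimalObstaclesReflex} uses connectivity to rule out cutting segments that disconnect a nontrivial subrepresentation. You patch this by saying you proceed ``componentwise, as in the Claim on connectivity in Section~\ref{sec-complexity}.'' But that Claim only shows that a disconnected graph can be \emph{encoded} componentwise for counting purposes; it explicitly warns that the resulting encoding ``does not represent an obstacle representation for~$G$ in general.'' For the ETR application you need one actual representation of all of $G$ (non-edges between components must also be blocked), so the componentwise decomposition does not supply the bound you need. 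This is fixable --- for instance by proving the required bound directly for graphs without isolated vertices and then treating isolated vertices by hand, or more simply by switching to the paper's planarization argument, which makes no connectivity assumption. But as written, the disconnected case is not handled.

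Two smaller remarks. First, your outer iteration over the partition of obstacle corners among the $h$ obstacles and over cyclic orders is a reasonable way to fix the combinatorial type; the paper instead lets the $n_i$ be (implicitly) bounded by the overall complexity, which amounts to the same thing. Second, you should verify that the lemma statement itself assumes $G$ small --- $n \le f(k)$ --- so your intermediate remark that $h \le 1+\binom{k}{2}+k2^k$ is correct but not strictly needed here: since $h$ can always be assumed at most $\binom{n}{2} \le \binom{f(k)}{2}$, any polynomial-in-$n$ bound on the obstacle complexity already suffices.
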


\begin{proof}
  We show that an instance $(G,h)$ can be decided via ETR.  First,
  however, we establish a simple upper bound on the complexity of an
  obstacle representation for an arbitrary graph~$H$ with $n$ vertices
  and $m$ edges; that is, a bound on the total number of obstacle
  corner points in the representation.

  For now, we assume that the graph~$H$ is connected.  Observe that
  each obstacle corresponds to a face in a straight-line
  drawing~$\Gamma$ of~$H$.  From a planarization $\Gamma'$ of~$\Gamma$
  we can obtain an obstacle representation by using two copies of each
  edge of~$\Gamma'$ and slightly offsetting (and slightly modifying
  the length of) each copy into one of the adjacent faces.  If one of
  the endpoints of the edge is a leaf, the two copies go to the same
  face and are slightly rotated in order to meet each other.  As a
  result, each face of~$\Gamma'$ is turned into an obstacle polygon,
  and the number of corner points of the polygon is the same as the
  number of vertices on the boundary of the face.  Clearly, $\Gamma'$
  consists of at most $O(m^2)$ vertices (where at least two edges
  of~$\Gamma$ intersect) and $O(m^2)$ edges (the line segments that
  connect two vertices that are consecutive on one of the edges
  of~$\Gamma$).

  In case the graph~$H$ is not connected, we need to ``stitch
  together'' the obstacle representations of the connected components,
  which can be done with a constant number of extra obstacle corner
  points for each of the at most $n$ connected components of~$H$.
  (A component whose obstacle representation has no obstacle incident
  to the outer face can be shrunken and placed inside a nearly-closed
  cavity of an obstacle of another component.)

  Hence, if $N$ is the complexity of an obstacle representation for a
  graph with $n$ vertices and $m$ edges, then $N \in O(m^2+n)$.  Now
  it is clear that a collection~$Z$ of $h$ obstacles can be encoded by
  $h$ sequences of corner points:
  \begin{align*}
  Z = \Big\{ &\big\langle(x_{1,1},y_{1,1}),\dots,(x_{1,n_1},y_{1,n_1})\big\rangle,
  \;\big\langle(x_{2,1},y_{2,1}),\dots,(x_{2,n_2},y_{2,n_2})\big\rangle,\dots\\
  &\big\langle (x_{h,1},y_{h,1}),\dots,(x_{h,n_h},y_{h,n_h})\big\rangle \Big\},
  \end{align*}
  where $N=\sum_{i=1}^h n_i \in O(m^2+n)$ is the total complexity
  of~$Z$.
  
  We start to set up an ETR formula by encoding the coordinates of
  the $n$ vertices of the given graph~$G$ and the $N$ corner points of
  the $h$ obstacles in the plane:
  \[\exists (x_1,y_1), \dots, (x_n,y_n), (x_{1,1},y_{1,1}), \dots,
    (x_{h,n_h},y_{h,n_h}) \in \mathbb{R}^{2n+2N}\]
  This encoding represents a geometric drawing of the graph (where
  edges are line segments) and the polygons that describe the
  obstacles.  We insist that (i)~no edge of the graph intersects an
  edge formed by a pair of consecutive corner points of an obstacle,
  and (ii)~every non-edge of the graph does intersect an edge formed
  by a pair of consecutive corner points of an obstacle.  Both
  properties can easily be expressed using ETR.  The lemma statement
  assumes that~$G$ has at most $f(k)$ vertices.  This implies
  that $N$, the complexity of~$Z$, is bounded by $O(f^4(k))$.
  Therefore, Theorem~\ref{thm:renegar} yields an algorithm to decide
  an instance $(G,h)$ whose running time is FPT with respect to~$k$.
\end{proof}

Our proof of Theorem~\ref{thm:fpt} relies also on a Ramsey-type
argument based on the following formulation of Ramsey's
theorem~\cite{ramsey29}:

\begin{theorem}[\cite{ramsey29}]
\label{thm:ramsey}
There exists a function $\texttt{Ram}\colon\N^2\rightarrow \N$ with the following property. For each pair of integers $q,s$ and each clique $H$ of size at least $\texttt{Ram}(q,s)$ such that each edge has a single label (color) out of a set of $q$ possible labels, it holds that $H$ must contain a subclique~$K$ of size at least~$s$ such that every edge in $K$ has the same label.
\end{theorem}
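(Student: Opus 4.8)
The plan is to prove the finite Ramsey theorem for edge colourings directly, via the classical ``neighbourhood pigeonhole'' argument; this also yields an explicit (though far from optimal) value for $\texttt{Ram}$. The base cases are immediate: we may set $\texttt{Ram}(q,1)=1$ and $\texttt{Ram}(1,s)=s$, so we may assume $q\ge 2$. For the general case I would show that
\[
  \texttt{Ram}(q,s) := q^{\,q(s-1)+1}
\]
has the required property (any larger value works too). So let $H$ be a clique on a vertex set $W$ with $|W|\ge q^{\,q(s-1)+1}$, every edge of which carries one of $q$ labels, and write $t := q(s-1)+1$.

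Next I would greedily construct vertices $v_1,\dots,v_t$, labels $c_1,\dots,c_t$, and a nested chain $W = W_0 \supseteq W_1 \supseteq \cdots \supseteq W_t$ as follows. Given $W_{i-1}$ with $|W_{i-1}|\ge 1$, choose an arbitrary $v_i \in W_{i-1}$; the $|W_{i-1}|-1$ edges joining $v_i$ to $W_{i-1}\setminus\{v_i\}$ use $q$ labels, so by pigeonhole some label $c_i$ appears on a set $W_i \subseteq W_{i-1}\setminus\{v_i\}$ of at least $\lceil(|W_{i-1}|-1)/q\rceil$ of the other endpoints. A short induction gives $|W_i| \ge q^{\,t-i}$ for all $i \le t$: if $|W_{i-1}| \ge q^{\,t-i+1}$ then, since $q\ge 2$, $|W_i| \ge \lceil (q^{\,t-i+1}-1)/q\rceil = q^{\,t-i}$. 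In particular $|W_{i-1}| \ge q \ge 1$ whenever $i \le t$, so the process never gets stuck and produces all of $v_1,\dots,v_t$. The $v_i$ are pairwise distinct (for $i<j$ we have $v_j \in W_{j-1} \subseteq W_i \subseteq W_{i-1}\setminus\{v_i\}$, so $v_j\ne v_i$), and for all $i<j$ the edge $v_iv_j$ has label $c_i$, because $v_j \in W_j \subseteq W_i$ and $W_i$ was defined as the set of $c_i$-labelled neighbours of $v_i$ inside $W_{i-1}$.

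Finally, since the sequence $c_1,\dots,c_t$ has length $t = q(s-1)+1 > q(s-1)$, some label $c$ occurs at least $s$ times in it; pick indices $i_1 < \cdots < i_s$ with $c_{i_1} = \cdots = c_{i_s} = c$. Then every edge $v_{i_a}v_{i_b}$ ($a<b$) of the clique on $\{v_{i_1},\dots,v_{i_s}\}$ has label $c_{i_a} = c$, so this is a monochromatic subclique of size $s$, as required. The only point needing care — rather than a genuine obstacle — is the bookkeeping that the greedy process survives all $t$ rounds, i.e. the estimate $|W_i|\ge q^{\,t-i}$; everything else is pigeonhole. (Alternatively, one could prove the two-label case in the usual recursive way and reduce $q$ labels to $2$ by merging labels $2,\dots,q$ into one and iterating $q-1$ times; we prefer the self-contained argument above.)
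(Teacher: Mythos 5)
Your proof is correct: the greedy ``neighbourhood pigeonhole'' construction is the standard argument for the multicolour finite Ramsey theorem, the bookkeeping $|W_i|\ge q^{\,t-i}$ (using $q\ge 2$ so that $\lceil (q^{\,t-i+1}-1)/q\rceil = q^{\,t-i}$) is right, and the final pigeonhole on the label sequence $c_1,\dots,c_t$ with $t=q(s-1)+1$ correctly yields a monochromatic subclique of size $s$, with the explicit bound $\texttt{Ram}(q,s)\le q^{\,q(s-1)+1}$. There is nothing in the paper to compare against: the statement is quoted as a classical black box with a citation to Ramsey's 1929 paper and no proof is given, so your self-contained argument is a perfectly adequate (indeed, the textbook) justification and even strengthens the citation by providing a concrete value of $\texttt{Ram}$, which the paper never needs explicitly since only the existence of the function enters the definition of $\typesize{k}$.
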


We are now ready to prove Theorem~\ref{thm:fpt}, which we restate
here.

\thmfpt*

\newcommand{\typesize}[1]{\texttt{typesize}(#1)}
\newcommand{\cliquesize}[1]{\texttt{cliquesize}(#1)}
\newcommand{\crevicenum}[1]{\texttt{crevicenum}(#1)}

\begin{proof}
  Let $n$ be the number of vertices of~$G$, and let $k$ be the vertex
  cover number of~$G$.

  \proofparagraph{Setup.}
  First, we use one of the well-known fixed-parameter algorithms for
  computing a cardinality-$k$ vertex cover $X$ in $G$.

Second, we immediately output ``\texttt{Yes}'' if
$h \ge 1+\binom{k}{2}+k\cdot 2^k$.  This is correct due to
Lemma~\ref{lem:vcbound}.  For the rest of the proof, we assume that
$h \le \binom{k}{2}+k\cdot 2^k$.

Third, we recall the notion of ``type'' defined in the proof of
Lemma~\ref{lem:vcbound} and note that if $G$ contains no type of
cardinality greater than a function of $k$, then the size of $G$ is
upper-bounded by a function of the parameter. In this case, $G$ can be
solved via Lemma~\ref{lem:etr}.
Hence, we may proceed with the assumption that $G$ contains at least
one type~$T$ that is larger than a given function of~$k$.  We call
this function $\typesize$, and we need that
$\typesize{k}=\texttt{Ram}(h,15k\cdot 2^k)$.  (For the definition of
\texttt{Ram}, see Theorem~\ref{thm:ramsey}.)  Our proof strategy will
be to show that under these conditions, $G$ is a \texttt{Yes}-instance
(i.e., $G$ admits a representation with at most $h$ obstacles) if and
only if the graph $G'$ obtained by deleting an arbitrary vertex from
$T$ is a \texttt{Yes}-instance.  In other words, we may prune a vertex
from~$T$.  Once proved, this claim will immediately yield
fixed-parameter tractability of the problem: one could iterate the
procedure of computing a vertex cover for the input graph, check
whether the types are sufficiently large, and, based on this, one
either brute-forces the problem or restarts on a graph that contains
one vertex less. (Note that the number of restarts is at most~$n$.)

\proofparagraph{Establishing the Pruning Step.}
Hence, to complete the proof it suffices to establish the correctness
of the pruning step outlined above, i.e., that the instance $(G,h)$ is
equivalent to the instance $(G',h)$, where $G'$ is obtained from~$G$ by deleting a
single vertex of the type~$T$ identified above. Unfortunately, the
proof of this claim is far from straightforward. 

As a first step, consider a hypothetical ``optimal'' solution $S$ for $(G,h)$, i.e., $S$ is an obstacle representation of $G$ with the minimum number of obstacles. It is easy to observe that using the same obstacles for the graph~$G'$ obtained by \emph{removing} a single vertex from type~$T$ yields a desired solution $S'$ for $(G',h)$.
(Note that $S'$ might not be obstacle-minimal, but this is not an issue.) 
The converse direction is the difficult one: for that, it suffices to establish the equivalent claim that the graph~$G''$ obtained by \emph{adding} a vertex to type~$T$ admits an obstacle representation~$S''$ with the same number of obstacles as~$S$.

Given $S$, we now consider an auxiliary edge-labeled clique $H$
as follows. Assume that the obstacles in $S$ are numbered from $1$ to
$h_S$, where $h_S\leq h$, and that there is an arbitrary linear
ordering $\prec$ of the vertices of~$G$ (which will only be used for
symmetry-breaking). The vertices of~$H$ are precisely the vertices 
of type~$T$. The labeling of the edges of~$H$ is defined as follows. For each
pair of vertices $a,b\in T$ with $a\prec b$, the edge~$ab$ in~$H$ is
labeled by the integer $q\in [h_S]$, where $q$ is the first obstacle
encountered when traversing the line segment~$\overline{ab}$ from~$a$
to~$b$.

Crucially, by Theorem~\ref{thm:ramsey} and by the fact that $|H|\geq
\typesize{k}=\texttt{Ram}(h,15k\cdot 2^k)$, the clique~$H$ must
contain a subclique~$K$ of size at least $\cliquesize{k}=15k\cdot
2^k$ such that each edge in $K$ has the same label, say~$p$.
This means that, in particular, every edge in~$K$ intersects the
obstacle~$p$.

\proofparagraph{Analysis of an Obstacle Clique.}
Intuitively, our aim will be to show that the obstacle~$p$ can be
safely extended towards some vertex $z$ in $K$, where by ``safely'' we
mean that it neither intersects another obstacle nor blocks the
visibility of an edge; this extension will either happen directly from
$S$, or from a slightly altered version of $S$ as will be described
later. Once we create such an extension, it will be rather
straightforward to show that $p$ can be shaped into a tiny
``comb-like'' slot for a new vertex $z'$ next to $z$ which will have
the same visibilities (and hence neighborhood) as $z$, which means
that we have constructed a solution $S'$ for $G'$ as desired.

To this end, we now choose an arbitrary vertex $x\in X$ adjacent to
all vertices in~$K$.  Let a vertex~$v$ in~$K$ be \emph{$x$-separating}
if the ray~$\overrightarrow{xv}$ does not intersect $p$. As our first
step in this case, we will show that there are only very few
$x$-separating vertices in $K$.

\begin{claim}
\label{claim:fewsep}
There exist at most two $x$-separating vertices in $K$.
\end{claim}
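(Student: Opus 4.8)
The goal is to bound the number of $x$-separating vertices in $K$—those $v\in K$ for which the ray $\overrightarrow{xv}$ misses the obstacle $p$. Since $x$ is adjacent to every vertex of $K$, each segment $\overline{xv}$ is an edge of $G$ and hence unobstructed; in particular $\overline{xv}$ does not meet $p$. So $v$ is $x$-separating exactly when the \emph{entire} ray $\overrightarrow{xv}$ misses $p$, i.e.\ $p$ does not lie ``behind'' $v$ as seen from $x$. The key observation I would exploit is that for \emph{every} $v\in K$, since all edges inside $K$ carry the label $p$, every segment $\overline{uv}$ with $u,v\in K$ crosses $p$; so $p$ must be positioned relative to the point set $K$ in a way that separates each pair. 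I would combine this with the fact that $X$ (hence $x$) was placed on a tiny circular arc $\gamma$ far from the line $\ell$ carrying the type vertices, so the vertices of $K\subseteq T$ all lie in a small region and $x$ sees them within a narrow angular cone.

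**Key steps.** First, I would set up the local geometry: let $\rho_v=\overrightarrow{xv}$ and let $\theta(v)$ be the angular coordinate of $v$ as seen from $x$; since $K\subseteq T$ lies in a tiny segment of $\ell$, the $\theta(v)$ are sorted along an interval, giving a linear order $v_1,\dots,v_t$ on $K$ by angle from $x$. Second, I would argue that $p$, being connected and crossed by every segment $\overline{v_iv_j}$, must in particular be crossed by the ``hull-spanning'' segments; more usefully, consider three vertices $v_i,v_j,v_\ell$ of $K$ with $i<j<\ell$ in angular order. Since $\overline{v_iv_\ell}$ crosses $p$ but $\overline{xv_j}$ does not, and $v_j$ lies angularly between $v_i$ and $v_\ell$ from $x$, a continuity/crossing argument (the segment $\overline{v_iv_\ell}$ together with the two edges $\overline{xv_i},\overline{xv_\ell}$ bounds a triangle containing $v_j$, and $p$ enters this triangle through $\overline{v_iv_\ell}$) forces $p$ to have points on the far side of $v_j$ along $\rho_{v_j}$ \emph{unless} $v_j$ is extreme. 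Third, I would conclude that any $x$-separating vertex must be either the angularly-first or angularly-last vertex of $K$ as seen from $x$—hence at most two of them.

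**Main obstacle.** The delicate part is the third step: turning the intuition ``$p$ lies behind all the middle vertices'' into a rigorous statement. The obstacle $p$ is an arbitrary simple polygon, possibly large and wrapping around, so I cannot simply say ``$p$ is on one side.'' The right framework is probably a radial/visibility argument from $x$: consider the radial order around $x$ and the interval $I_p(x)$ of directions in which $p$ is visible from $x$ (or an analogue of Observation~\ref{obs-blockingInterval} adapted to this non-convex setting via Lemma~\ref{lem-boundaryIntersections}). The claim then becomes that all $v_j$ with $j$ strictly between the extremes have $\rho_{v_j}$ hitting $p$, which should follow because the segment $\overline{v_iv_\ell}$ enters and exits $p$, and these crossing points, joined to $x$, pin $p$'s angular footprint from $x$ to an interval that strictly contains the direction of every intermediate $v_j$; since $\overline{xv_j}$ itself avoids $p$, the obstacle must sit beyond $v_j$ on the ray, making $v_j$ not $x$-separating. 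I would need to be careful that $p$ does not ``sneak around'' $x$ on the other side, which is ruled out by the placement: $x$ is on the arc $\gamma$, the obstacle $p$ blocks edges inside $K$ which all point away from $\gamma$ toward $\ell$, so $p$ lies in the halfplane on the $\ell$-side and cannot encircle $x$. Assembling these observations gives that at most the two angular extremes of $K$ can be $x$-separating, proving the claim.
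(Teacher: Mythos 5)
Your proposal correctly identifies the available ingredients (each $\overline{xv}$ with $v\in K$ is an edge and hence avoids $p$; each $\overline{uv}$ with $u,v\in K$ meets $p$), but the argument built from them has a genuine gap. You aim for the stronger statement that only the two angularly-extreme vertices of $K$ (as seen from $x$) can be $x$-separating. That statement is false in general: suppose $v_j$ lies angularly between $v_i$ and $v_\ell$ and only $v_j$ is $x$-separating. Then $p$ may cross the rays $\overrightarrow{xv_i}$ and $\overrightarrow{xv_\ell}$ at points beyond $v_i$ and $v_\ell$, and thereby wrap around behind $x$ to connect its pieces on the two sides of the ray $\overrightarrow{xv_j}$, never touching $\overrightarrow{xv_j}$ or any of the edges $\overline{xv_i},\overline{xv_j},\overline{xv_\ell}$. (Also, the auxiliary step asserting that $v_j$ lies inside triangle $xv_iv_\ell$ is not implied by angular betweenness---if the three points are roughly equidistant from $x$, then $v_j$ lies outside that triangle.) The paper avoids all of this by assuming that all three vertices $s,t,u$ are $x$-separating and using all three rays from $x$ simultaneously: the rays split the plane into three sectors, at most one of which has angle over $180^\circ$, so two of the three segments between $s,t,u$ lie in two distinct small sectors, placing pieces of $p$ in at least two sectors; but $p$ is connected, may cross none of the three rays, and cannot contain $x$---a contradiction. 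The requirement that all three vertices be $x$-separating, so that $p$ is walled off from all three rays at once, is exactly what rules out the wrap-around escape your version leaves open.

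Your fix for the wrap-around issue---invoking the arc $\gamma$ and line $\ell$---does not apply in this context. Those objects belong to the explicit construction in the proof of Lemma~\ref{lem:vcbound} (the upper bound). Claim~\ref{claim:fewsep} concerns an arbitrary hypothetical optimal solution $S$ for $(G,h)$; in $S$, nothing forces the vertices of $X$ onto a small circular arc or the type vertices onto a distant line, so no half-plane constraint is available to keep $p$ from encircling $x$. This conflation of the constructive upper-bound setting with the arbitrary-solution setting is what makes the stronger ``extremes-only'' claim look plausible when it is not.
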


\begin{proof}[Proof of the claim.]
Assume for a contradiction that there are three $x$-separating vertices in~$K$, say~$s$, $t$, and~$u$. 
The rays from~$x$ through these three vertices split the plane into three parts.
At most one of these parts can have an angle of over $180^\circ$ at~$x$. 
The other two parts must both contain $p$ since~$p$ must occur on the line segments between each pair of~$s$, $t$, and~$u$. 
However, by definition, $p$ cannot cross the rays defining these parts. 
This contradicts the fact that $p$ is a single simply-connected obstacle.
\end{proof}

\proofparagraph{Crevices.} %
Now, we consider the (e.g., clockwise) cyclic order of vertices in~$K$
around~$x$.  By Claim~\ref{claim:fewsep}, there must exist a sequence
of at least $(\cliquesize{k}-2)/{2}$ vertices in $K$ which occur
consecutively in this order and are all non-separating; let these be
denoted by~$v_1,\dots,v_g$ for some
$g\ge (\cliquesize{k}-2)/2$.  Recall that for every $i\in[g]$,
the ray $\overrightarrow{x v_i}$ intersects no obstacle before
reaching~$v_i$, and after passing~$v_i$, it intersects the
obstacle~$p$.  Let~$p_i$ be the intersection point of
$\overrightarrow{x v_i}$ with $\partial p$, the boundary of~$p$.
Let~$p_{i,i+1}$ be the first intersection point of
$\overrightarrow{v_iv_{i+1}}$ with~$\partial p$, and let~$p_{i+1,i}$
be the first intersection point of $\overrightarrow{v_{i+1}v_i}$
with~$\partial p$.
If~$v_i \prec v_{i+1}$, then $p$ is the first obstacle on the ray
$\overrightarrow{v_iv_{i+1}}$.  Otherwise $p$ is the first obstacle on
the ray $\overrightarrow{v_{i+1}v_i}$.
Note that a counterclockwise traversal of~$\partial p$ starting
in~$p_1$ visits
$p_{1,2}, p_{2,1}, p_2, p_{2,3}, \dots, p_{g,g-1}, p_g$ (in this
order) because $p$ is a simply-connected region.  For~$1 < j < g$ and
$j'\in\{j-1,j+1\}$, let $p'_{j,j'}$ be the first intersection point of
the ray $\overrightarrow{xp_{j,j'}}$ and~$p$, where
$p'_{j,j'}=p_{j,j'}$ is possible.  (For example, in
Figure~\ref{fig:crevices2}, $p_{2,3}' \ne p_{2,3}$, but
$p_{3,2}' = p_{3,2}$.)  Now we can define the following polygons for
$1 < i < g$; see Figure~\ref{fig:crevices2}:
\begin{figure}
  \centering
  \includegraphics{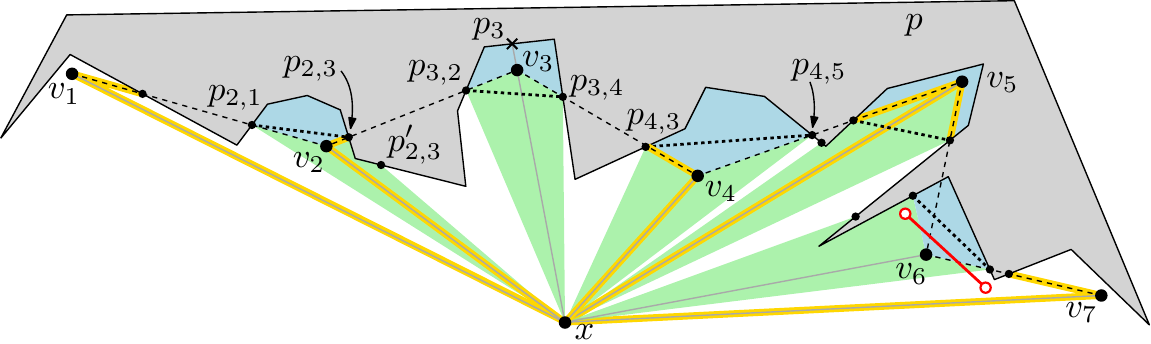}
  \caption{The crevices of obstacle~$p$ (with respect to vertex~$x$
    in~$X$) are the light blue regions that contain vertices $v_2$,
    $v_3$, and $v_4$ of~$K$.  Each extended crevice consists of a
    light blue and an adjacent light green region.  The dotted line
    segments are the doors of the corresponding vertices.  The red
    line segment pierces the crevice~$Z_6$.  The orange L-shapes
    assume
    $v_1 \prec v_2 \prec v_3 \succ v_4 \succ v_5 \prec v_6 \succ v_7$.
    With $\partial p$, two consecutive L-shapes enclose parts of at
    most three extended crevices.}
  \label{fig:crevices2}
\end{figure}
\begin{itemize}
\item The \emph{crevice} of~$v_i$ is a polygon~$Z_i$ whose boundary
  starts with the line segment $\overline{v_i p_{i,i+1}}$, then
  follows $\partial p$ in the clockwise direction, passes the point~$p_i$,
  reaches the point~$p_{i,i-1}$, and returns to~$v_i$ along the line
  segment $\overline{p_{i,i-1} v_i}$.
\item The \emph{extended crevice} of~$v_i$ is a polygon~$Z_i'$ whose
  boundary starts with the line segment $\overline{x p'_{i,i+1}}$,
  then follows~$\partial p$ in the clockwise direction, passes~$p_i$,
  reaches~$p'_{i,i-1}$, and returns to~$x$ along the line segment
  $\overline{p'_{i,i-1} x}$.
\end{itemize}
We consider (extended) crevices as open; hence they do not intersect
the polygon~$p$ to which they belong.  We use the term ``crevice''
here since, for every $i \in [g]$, the vertex~$v_i$ sits in an
indentation of~$p$; see Figure~\ref{fig:crevices2}.  Note that there
are $g-2$ extended crevices and that each of them contains precisely
one crevice.  No two extended crevices intersect, and the (extended)
crevices are naturally ordered according to the numbering of the
vertices from the perspective of~$x$ as described above.

The reason why we define these (extended) crevices is that they will
allow us to slightly alter~$S$ to accommodate the before-mentioned
additional vertex of~$T$.  To do so, we need to carefully avoid other
obstacles as well as visibility edges.  In the remainder of this
part of the proof, we identify three types of ``bad'' (extended)
crevices and show that each of these occurs only a bounded number of
times.

First, we simply observe that at most $k-1$ extended crevices can
contain a vertex of the vertex cover~$X$ (because
$|X \setminus \{x\}|=k-1$).

Second, we argue that every obstacle intersects only very few extended
crevices (of~$p$ with respect to~$x$).  To this end, observe that for
every $i \in [g]$, the line segment $\overline{xv_i}$ is an edge
of~$G$ and as such it must not intersect any obstacle.  Moreover, for every
$i\in[g-1]$ with $v_i \prec v_{i+1}$, by the definition of~$K$, no
obstacle intersects the (open) line segment $\overline{v_ip_{i,i+1}}$.
This line segment, together with $\overline{xv_i}$, forms an L-shape
that does not intersect any obstacle (see, e.g., the orange L-shape
at~$v_4$ in Figure~\ref{fig:crevices2}).  Similarly, for every
$1<i\le g$ with $v_i \prec v_{i-1}$, the line
segment~$\overline{xv_i}$ and the open line segment
$\overline{v_ip_{i,i-1}}$ form an L-shape that does not intersect any
obstacle (see, e.g., the orange L-shape at~$v_2$ in
Figure~\ref{fig:crevices2}).  Together with~$\partial p$, two
consecutive L-shapes form a polygon whose boundary does not intersect
any obstacles (other than~$p$).  The interior of such a polygon
contains parts of at most one extended crevice (if
$v_{i-1} \succ v_i \prec v_{i+1}$), parts of two extended crevices (if
$v_{i-1} \prec v_i \prec v_{i+1}$ or
$v_{i-1} \succ v_i \succ v_{i+1}$), or parts of three extended
crevices (if $v_{i-1} \prec v_i \succ v_{i+1}$); see
Figure~\ref{fig:crevices2}.  If $v_1 \succ v_2$ and/or $v_{g-1} \prec
v_g$, the parts of the extended crevices of~$v_2$ and/or~$v_{g-1}$ lie
outside the region enclosed by~$p$, the first L-shape, and the last
L-shape.  Hence any obstacle outside that region can intersect at most
two extended crevices.
Summarizing the above, we conclude that every obstacle can intersect
at most three extended crevices.  This implies that at most
$3h \le 3 \cdot \big(\binom{k}{2}+k\cdot 2^k\big)$ extended crevices
are intersected by obstacles.

Third, for $i\in[g]$, let the \emph{defining triangle} of
crevice~$Z_i$ be the triangle $\triangle(v_i,p_{i,i+1},p_{i,i-1})$,
where $v_i$ is the \emph{tip} of that triangle and the line segment
$\overline{p_{i,i+1} p_{i,i-1}}$ is its \emph{door}.  (In
Figure~\ref{fig:crevices2}), the doors are dotted.)  Note that
the line segment $\overline{xv_i}$ may or may not cross the door
of~$v_i$.
If $\overline{xv_i}$ does not cross the door (as in the case of~$v_2$
and of~$v_4$ in Figure~\ref{fig:crevices2}), there could exist an
edge~$e$ of~$G$ that crosses both non-door line segments of the
defining triangle.  In this case, we say that the crevice~$Z_i$ is
\emph{pierced} (by the edge~$e$).  For example, the red solid line
segment in Figure~\ref{fig:crevices2} pierces~$Z_4$.

\begin{claim}
  \label{claim:pierced}
  Every sequence of at least $2k-1$ crevices contains at least one
  crevice that is not pierced.
\end{claim}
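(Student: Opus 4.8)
The plan is to argue that a pierced crevice forces a specific local configuration of the ``piercing'' edge relative to the vertex $x$, and that this configuration cannot repeat too often among consecutive crevices. Recall that if $Z_i$ is pierced by an edge $e$ of $G$, then $\overline{xv_i}$ does not cross the door of $Z_i$, and $e$ crosses both non-door sides of the defining triangle $\triangle(v_i,p_{i,i+1},p_{i,i-1})$, i.e., the segments $\overline{v_ip_{i,i+1}}$ and $\overline{v_ip_{i,i-1}}$. Since these two segments both emanate from the tip $v_i$ and lie (except for $v_i$) inside the extended crevice, the edge $e$ must have one endpoint ``inside'' the wedge at $v_i$ cut off by the door and one endpoint outside it; in particular $e$ separates $v_i$ from the door within the defining triangle. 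I would first record that at least one endpoint of such a piercing edge $e$ lies in the vertex cover $X$ (because $v_i\notin X$ is in the independent type $T$, so any edge incident to $v_i$ goes to $X$; and if neither endpoint of $e$ were $v_i$, I would still want to pin down where the endpoints sit).

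The key geometric observation I would isolate is this: because $\overline{xv_i}$ does \emph{not} cross the door of $Z_i$, the vertex $x$ lies on the same side of the door as $v_i$, i.e., $x$ is ``shallow'' with respect to this crevice. A piercing edge $e$ crosses both sides of the wedge at $v_i$, hence crosses the segment $\overline{xv_i}$ as well (one can see this by a continuity/Jordan-curve argument: the door side of the triangle, together with the two wedge sides, bounds the triangle, and $x$ sits beyond the wedge sides near the tip while $e$ cuts across both wedge sides). But $\overline{xv_i}$ is an edge of $G$ and may not be crossed by any obstacle — crossing by another \emph{graph} edge is allowed in a straight-line drawing, so this alone is not yet a contradiction. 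Instead, I would use the endpoint $x'\in X$ of $e$ to charge the piercing: the edge $e=\overline{x'v_i}$ (or more generally an edge with one endpoint in $X$) and the position of $x'$ relative to $x$ and to the fan of rays $\overrightarrow{xv_1},\dots,\overrightarrow{xv_g}$ determines which crevices $Z_i$ it can pierce. The heart of the matter is to show that a single vertex $x'\in X\setminus\{x\}$, together with all graph edges incident to it, can pierce at most a bounded number — ideally one, or a small constant — of consecutive crevices $Z_i$; summing over $|X\setminus\{x\}|\le k-1$ then gives that at most $O(k)$ of any consecutive run of crevices are pierced, so any run of length $2k-1$ contains an unpierced one.

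The main obstacle I expect is making precise the claim that one cover vertex $x'$ cannot pierce many consecutive crevices. The intended argument is monotonicity: as $i$ increases, the crevices $Z_i$ are encountered in angular order around $x$, and the ``door-facing'' direction at $v_i$ rotates monotonically; an edge from a fixed point $x'$ to (a point near) $v_i$ that threads through both wedge sides of $\triangle(v_i,p_{i,i+1},p_{i,i-1})$ imposes that $x'$ lies in a specific bounded angular sector as seen from that crevice, and these sectors for distinct $i$ overlap in a controlled way. I would formalize this by considering, for each pierced $Z_i$, the cyclic position of the ray $\overrightarrow{xx'}$ relative to $\overrightarrow{xv_{i-1}},\overrightarrow{xv_i},\overrightarrow{xv_{i+1}}$, and showing that $x'$ piercing $Z_i$ forces $\overrightarrow{xx'}$ to fall strictly between $\overrightarrow{xv_{i-1}}$ and $\overrightarrow{xv_{i+1}}$ (roughly, $x'$ must be ``behind'' $v_i$ from $x$'s viewpoint and between its neighbors), so that $x'$ can pierce $Z_i$ for at most two values of $i$. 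Combined with the count over $k-1$ cover vertices, at most $2(k-1)<2k-1$ crevices in any consecutive block are pierced, and the claim follows. (If the per-vertex bound turns out to be $1$ rather than $2$, the conclusion only improves; the statement $2k-1$ leaves comfortable slack.) A careful treatment of the degenerate sub-case where $v_i$ itself is an endpoint of the piercing edge, versus the case where both endpoints are elsewhere, will be needed, but in the latter case the edge would have to cross the obstacle boundary $\partial p$ inside the crevice, which is impossible, so in fact $v_i$ is always an endpoint of any piercing edge — a simplification I would flag early.
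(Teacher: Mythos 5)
Your high-level strategy---charge pierced crevices to vertices of the cover $X\setminus\{x\}$ and show each charges at most a constant---matches the paper. However, there are two concrete gaps in the mechanism and one wrong side-claim.

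First, the key geometric step in the paper is missing from your sketch: assuming $2k-1$ \emph{consecutive} crevices $Z_j,\dots,Z_{j+2k-2}$ are all pierced implies (because piercing forces $\overline{xv_i}$ to miss the door, hence a left turn at $v_i$) that the tips $v_{j-1},\dots,v_{j+2k-1}$ are in convex position. This convexity is what makes the bounded-charging argument go through. Without it, neither your argument nor the paper's survives. Second, your charging device is the angular position of $\overrightarrow{xx'}$ in the fan around $x$, and you assert that $x'$ piercing $Z_i$ forces $\overrightarrow{xx'}$ to lie strictly between $\overrightarrow{xv_{i-1}}$ and $\overrightarrow{xv_{i+1}}$. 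This is not proven and I do not see why it holds: piercing constrains where the \emph{segment} threads the defining triangle near the tip $v_i$, not where the endpoint $x'$ sits angularly around $x$; a faraway $x'$ whose direction from $x$ is nowhere near $v_i$ could still send an edge through the tip region. The paper instead uses wedges $W_i$ with apex $v_i$ (bounded by $\overrightarrow{v_iv_{i+1}}$ and the ray opposite to $\overrightarrow{v_iv_{i-1}}$), shows that tip-piercing an edge from $u$ forces $u\in W_i$, and uses disjointness of the $W_i$ (which is where convexity enters) to get at most one tip-pierce and one door-pierce per cover vertex; it also notes $x$ itself contributes no piercing edges. That gives $2(k-1)<2k-1$, a contradiction.

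Finally, the parenthetical ``simplification'' you flag---that $v_i$ is always an endpoint of a piercing edge because otherwise it would have to cross $\partial p$---is incorrect, and in fact backwards. An edge that transversally crosses both non-door sides $\overline{v_ip_{i,i+1}}$ and $\overline{v_ip_{i,i-1}}$ enters and exits the defining triangle through those two sides; both of its endpoints lie outside the triangle, it never meets $\partial p$, and it cannot be incident to $v_i$ (a segment ending at $v_i$, which lies on both non-door sides, cannot cross either of them transversally). So $v_i$ is \emph{never} an endpoint of a piercing edge, and piercing edges with both endpoints away from $Z_i$ are precisely the generic case you would need to handle.
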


\begin{proof}[Proof of the claim.]
  Assume, for a contraction, that there exists an index~$j$ with
  $1<j<g-2k+2$ such that each of the $2k-1$ crevices in the sequence
  $Z_j, Z_{j+1}, \dots, Z_{j+2k-2}$ is pierced.  This implies that,
  for $j \le i \le j+2k-2$, when going from $v_{i-1}$ to~$v_{i+1}$,
  there is a left turn in~$v_i$; see Figure~\ref{fig:wedges}.  Hence
  the set $\{v_{j-1},v_j, \dots, v_{j+2k-2}, v_{j+2k-1}\}$ is the
  vertex set of a convex polygon (dashed in Figure~\ref{fig:wedges})
  that contains the doors of $v_j, \dots, v_{j+2k-2}$.
  Clearly, $x$ lies outside this convex polygon.
  
  Consider an arbitrary vertex~$u$ in the vertex cover~$X$ of~$G$.  We
  show that at most two edges incident to~$u$ pierce a crevice in our
  sequence.  There are two ways how an edge~$uv$ can pierce a
  crevice~$Z_i$ in that sequence.  When rotating~$uv$ counterclockwise
  around~$u$, $uv$ can either first hit the tip~$v_i$ or first hit the
  door of~$v_i$.  In the former case, we say that $uv$
  \emph{tip-pierces}~$Z_i$; in the latter case, $uv$
  \emph{door-pierces}~$Z_i$.  We show that $u$ can have at most one
  edge of either type.  The arguments are symmetric, so it suffices to
  show that $u$ is incident to at most one edge that tip-pierces a
  crevice in our sequence.  To this end, we define, for each index~$i$
  with $j \le i \le j+2k-2$, an open wedge~$W_i$ with apex~$v_i$ that
  is bounded by the ray~$\overrightarrow{v_iv_{i+1}}$ and by the ray
  that starts in~$v_i$ and is
  opposite to the ray $\overrightarrow{v_iv_{i-1}}$; see
  Figure~\ref{fig:wedges}.  Note that, for different indices~$i$
  and~$i'$, wedges~$W_i$ and~$W_{i'}$ are disjoint.

  \begin{figure}[htb]
    \centering
    \includegraphics{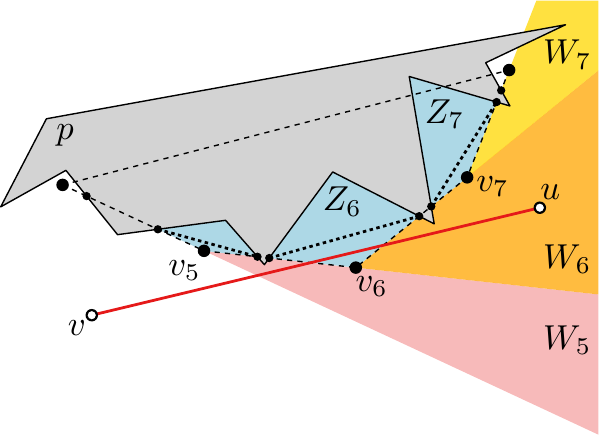}
    \caption{Wedges $W_5$, $W_6$, $W_7$ for crevices $Z_5$, $Z_6$, and
      $Z_7$.  In order for an edge~$uv$ to tip-pierce~$Z_6$, $u$ must
      lie in wedge~$W_6$.}
    \label{fig:wedges}
  \end{figure}
  
  We claim that, for an edge $uv$ to tip-pierce the crevice~$Z_i$,
  vertex~$u$ has to lie in wedge~$W_i$.  This is due to the fact that
  the ray $\overline{uv}$ must enter the defining triangle of~$v_i$
  via the edge~$\overline{v_iv_{i,i+1}}$ and leave the defining
  triangle via the edge $\overline{v_iv_{i,i-1}}$.  If~$u$ does not
  lie in~$W_i$ but in some~$W_{i'}$ with $i'>i$, then $u$ does not
  see~$v_i$ nor the edge~$\overline{v_iv_{i,i+1}}$. If~$u$ lies in
  some~$W_{i'}$ with $i'<i$, then $u$ does not see the
  edge~$\overline{v_iv_{i,i-1}}$ through the
  edge~$\overline{v_iv_{i,i+1}}$.

  Given our claim and the fact that the wedges are pairwise disjoint
  yields that $u$ is incident to at most one edge that tip-pierces a
  crevice (and, similarly, at most one edge that door-pierces a
  cervice.  Hence, any vertex in~$X \setminus \{x\}$ is incident to at
  most two edges that pierce crevices.  Since~$x$ sees
  $v_1,\dots,v_g$, no edge incident to~$x$ pierces a crevice.  Hence,
  at most $2k-2$ of the crevices in the sequence
  $Z_j, Z_{j+1}, \dots, Z_{j+2k-2}$ can be pierced, which yields the
  desired contradiction.

\end{proof}

Claim~\ref{claim:pierced} implies that every subsequence
$(v_{j},v_{j+1},\dots,v_{j+2k})$ of consecutive vertices in~$K$ must
contain at least one crevice that is not pierced.  Let an extended
crevice $Z_i'$ be \emph{good} if~$Z_i'$ is not intersected by any
obstacle, $Z_i' \cap X = \emptyset$, and $Z_i$ is not pierced.  
Recall that $\cliquesize{k}=15 k \cdot 2^k$.  In
view of the bounds established above, we conclude that there
must exist a sequence of good extended crevices of length at least
\begin{align*}
  \crevicenum{k} &= \left(
                   \left( \frac{\cliquesize{k}-2}{2}-2 \right) -
                   \left( (k-1) + 3\cdot\binom{k}{2}
                   + 3 \cdot k 2^k \right)
                   \right) \cdot \frac{1}{2k} \\
                 &\ge \frac{ (7.5 \cdot k 2^k - 3) - (1.5 \cdot k 2^k + 3\cdot k2^k) }{2k} 
                   \qquad \text{using $k \le 2^k$ } \\
                 &\ge \frac{ 6.5 \cdot k 2^k - 4.5 \cdot k 2^k }{2k}
                   \ge 2^k \qquad \text{using $k2^k \ge 3$ for $k \ge 2$}
\end{align*}

\proofparagraph{Extending $p$ in a Non-Obstructive Way.} %
From now on, we focus on good extended crevices in the
aforementioned, sufficiently long sequence
$Z'_{\sigma(1)},Z'_{\sigma(2)},\dots,Z'_{\sigma(\crevicenum{k})}$,
where the $\sigma(i)$-th extended crevice $Z_{\sigma(i)}$ is defined
by the vertex $v_{\sigma(i)}$ and the points
$p_{\sigma(i),\sigma(i)+1}$, $p_{\sigma(i),\sigma(i)-1}$.  Our aim in
this section is to show that there exists some solution $S^\circ$
(which will either be $S$, or a slight adaptation of $S$) for $(G,h)$
which contains a crevice $Z_{\sigma(i)}$ where we can extend~$p$ so as
to accommodate an additional vertex of type~$T$ without intersecting
visibility lines between any pair of vertices that are adjacent
in~$G$.

We say that a good crevice $Z'_{\sigma(i)}$ is \emph{perfect} if it
furthermore contains no vertex other than $v_{\sigma(i)}$.  Next we
claim that if we can show that $(G,h)$ admits a solution with a
perfect crevice, we are essentially done.  Recall that the graph~$G''$
has vertex set $V(G'') = V(G) \cup \{v\}$, where $v$ is the additional
vertex of type~$T$.

\begin{claim}
  \label{claim:perfectdone}
  If $(G,h)$ admits a solution 
  with a perfect crevice, then $(G'',h)$ also admits a solution.
\end{claim}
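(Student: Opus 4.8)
The plan is to start from the hypothesised solution $S$ for $(G,h)$ together with its perfect crevice $Z'_{\sigma(i)}$, which belongs to some obstacle $p$ and has $z:=v_{\sigma(i)}\in T$ as the tip of the associated (non-extended) crevice $Z:=Z_{\sigma(i)}$; recall that $\partial Z$ consists of two segments $\overline{zp_1}$, $\overline{zp_2}$ with $p_1,p_2\in\partial p$ and an arc of $\partial p$ joining $p_1$ and $p_2$. From $S$ I will build a solution $S''$ for $(G'',h)$ that keeps every vertex of $G$ and every obstacle in place, except that $p$ is enlarged to a simple polygon $p'=p\cup F$ by attaching one thin polygonal ``finger'' $F$ inside $Z$, and that the new type-$T$ vertex $z'$ is inserted inside $Z$ right next to $z$. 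In particular $|\mathcal{O}''|=|\mathcal{O}|\le h$. Since $p\subseteq p'$ and no other obstacle changes, no non-edge of $G$ that was blocked in $S$ becomes visible in $S''$; hence it remains to guarantee that (i) $F$ crosses no edge of $G$, so all edges of $G$ stay unblocked; (ii) $z'$ sees exactly the set that $z$ sees, namely $N_G(z)=N_G(T)$ --- here we use that $T$ is an independent set contained in $V\setminus X$, so that $N_G(z)\subseteq X$ is precisely the neighbourhood the new vertex must realise; and (iii) $z$ and $z'$ are non-adjacent.

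The geometric core is the claim that \emph{no edge of $G$ meets the open crevice $Z$}, and this is exactly where all three qualifiers of a perfect crevice enter. An edge incident to $z$ points towards a vertex of $N_G(T)\subseteq X$; it must leave $z$ on the side of the walls $\overline{zp_1}$, $\overline{zp_2}$ away from $Z$, because any ray from $z$ into $Z$ meets the bounding arc of $\partial p$ and so would be blocked, and having started on that side it can never enter $Z$ (a segment with one endpoint at $z$ cannot cross either wall, both of which have $z$ as an endpoint). An edge with an endpoint in $Z$ is excluded since, the crevice being \emph{perfect}, $z$ is the only vertex of $G$ in the extended crevice. Finally, an edge crossing $Z$ with both endpoints outside it would cross $\partial Z$ twice, and since an edge cannot touch the obstacle boundary $\partial p$ these two crossings would lie one on each wall --- i.e.\ the edge would pierce $Z$ --- contradicting that $Z$ is \emph{not pierced}. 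In addition, by the \emph{good} property no obstacle other than $p$ meets the extended crevice. Consequently I can route a thin polygonal finger $F$ that starts on $\partial p$ along the arc bounding $Z$, runs through the interior of $Z$ to a point very close to $z$, and stops there; it crosses no edge of $G$ and no other obstacle, so $p'=p\cup F$ is a simple polygon disjoint from the remaining obstacles.

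To finish, choose a direction $d$ pointing from $z$ into $Z$ that is generic --- in particular so that neither $d$ nor $-d$ is the direction from $z$ to any vertex of $N_G(T)$ --- set $z':=z+\varepsilon d$, and route $F$ so that it passes through the interior of $F$ at an interior point of $\overline{zz'}$ while staying inside an arbitrarily thin tube around that segment; then $\overline{zz'}$ is blocked, which gives (iii). For (ii), recall that (after a harmless perturbation of $S$ into general position, ensuring that no visibility segment is tangent to an obstacle) for every vertex $w\neq z$ the segment $\overline{zw}$ is either disjoint from every obstacle --- exactly when $w\in N_G(T)$ --- or crosses the interior of some obstacle; since there are only finitely many such $w$, a sufficiently small $\varepsilon$ (smaller than all the relevant positive distances and than the width of the finger tube) forces $\overline{z'w}$ to behave the same way, because far from $z$ it is $\varepsilon$-close to $\overline{zw}$, while near $z$ the only obstacle is $p'$, whose finger was kept off the direction $z'\to w$ and whose part $p$ is met by $\overline{z'w}$ if and only if it is met by $\overline{zw}$. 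Hence $N_{G''}(z')=N_G(T)$, so $(D'',\mathcal{O}'')$ is an obstacle representation of $G''$ with at most $h$ obstacles, as required. I expect the delicate point to be fixing the shape and placement of $F$ so that (ii) and (iii) hold simultaneously --- $F$ must separate $z$ from $z'$ and yet leave $z'$ an unobstructed view of all of $N_G(T)$ --- which is the reason for choosing $d$ generically and making both $F$ and $\varepsilon$ small; the general-position reduction for $S$ is the only other step needing a (routine) argument.
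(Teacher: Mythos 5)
Your proof is correct and takes essentially the same approach as the paper's: you attach a thin "finger" to $p$ inside the crevice to separate the new vertex from $v_{\sigma(i)}$, place the new vertex in the immediate vicinity, and use all three defining properties of a perfect crevice (no other vertex, no other obstacle, not pierced) to ensure the extension blocks only the one new visibility while preserving all others. You spell out more carefully than the paper why no edge enters the crevice and how general position and a generic direction $d$ ensure the small perturbation does not change any other visibility, but the underlying construction and argument are the same.
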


\begin{proof}[Proof of the claim.]
  Let $i \in [\crevicenum{k}]$ be such that $Z'_{\sigma(i)}$ is a
  perfect crevice in a solution of~$(G,h)$.  We modify the
  obstacle~$p$ that is adjacent to~$Z'_{\sigma(i)}$, in order to
  accommodate the additional vertex~$v$ of type~$T$.
  We add to~$p$ a ``thick'' line segment that is orthogonal to the
  door of~$v_{\sigma(i)}$ and stops very close to~$v_{\sigma(i)}$,
  slightly to its right, say; see Figure~\ref{fig:perfect0}.
  \begin{figure}[htb]
    \centering
    \includegraphics[page=3]{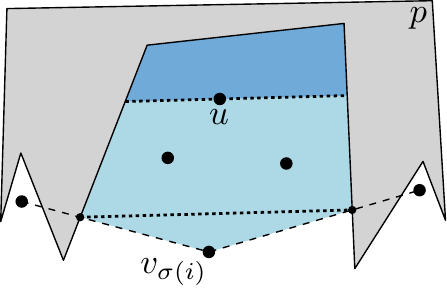}
    \caption{Adding a vertex of type~$T$ to a perfect crevice $Z'_{\sigma(i)}$}
    \label{fig:perfect0}
  \end{figure}
  Then we place~$v$ to the right of this extension of~$p$ such
  that~$v_{\sigma(i)}$ and~$v$ do not see each other.  By making the
  extension thin and short enough and by placing~$v$ sufficiently
  close to~$v_{\sigma(i)}$, we can ensure that $v$ sees the same
  vertices as~$v_{\sigma(i)}$ (namely those adjacent
  to~$v_{\sigma(i)}$ in~$G$).  Since the extended crevice contains no
  other vertices and does not intersect any obstacles, this extension
  could only block a line of sight between two vertices outside of
  $Z'_{\sigma(i)}$, but this line would pierce~$Z_{\sigma(i)}$,
  contradicting the fact that~$Z'_{\sigma(i)}$ is good.  Hence, we
  have obtained a solution for $(G'',h)$.
\end{proof}

As a consequence, if a solution already contains a perfect crevice, we
can invoke the previous claim to complete the proof.  Unfortunately,
it may happen that none of the at least $2^k$ good crevices in a
solution is perfect.  We now deal with this case.

\begin{claim}
  If the instance $(G,h)$ admits a solution, then $(G,h)$ admits also
  a solution with a perfect crevice.
\end{claim}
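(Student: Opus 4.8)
The plan is to take a solution $S$ for $(G,h)$ with a long sequence of good extended crevices $Z'_{\sigma(1)},\dots,Z'_{\sigma(\crevicenum{k})}$ (at least $2^k$ of them), and to show that at least one of these crevices can be ``cleaned out'' so that it becomes perfect, without increasing the number of obstacles. First I would observe that the only obstruction to a good crevice being perfect is that its defining triangle (equivalently, its extended crevice region) contains some vertex $w\ne v_{\sigma(i)}$ of $G$. Since such a $w$ lies in the extended crevice and the extended crevice does not meet $X$, the vertex $w$ must lie outside $X$, hence $w$ is a vertex of some type, and in fact (by the geometry of the crevice, which is hidden behind a reflex part of $p$ as seen from $x$) $w$ is only visible to a restricted set of vertices. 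The key counting point is that there are at least $2^k$ good crevices but only $2^k$ types; I would argue that vertices lying inside good crevices are partitioned by type, and use a pigeonhole/charging argument to find a good crevice all of whose interior vertices (other than its tip) can be relocated.

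The main step is the relocation argument. Given a good crevice $Z'_{\sigma(i)}$ containing some extra vertex $w$, I would show that $w$ can be moved out of $Z'_{\sigma(i)}$ and placed next to a vertex of the \emph{same type} that sits in (or near) a different good crevice, or on the line $\ell$-like structure used in Lemma~\ref{lem:vcbound}, in such a way that (a) $w$ keeps exactly the same neighborhood, because type membership is determined solely by adjacency to $X$ and all vertices of a type are mutually non-adjacent, and (b) no new obstructions are created, because we route $w$ through the good (obstacle-free, $X$-free, unpierced) region. Concretely, I expect the cleanest route is: pick a good crevice $Z'_{\sigma(i_0)}$ that contains a vertex of a type $T'$, note there is another good crevice whose tip $v_{\sigma(i_1)}$ has the same type as some vertex we want to keep, and shuffle the non-tip vertices along a path inside the union of good regions so that eventually one good crevice is emptied except for its tip. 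Since a good extended crevice is not intersected by any obstacle and is not pierced by any edge, sliding a vertex inside it cannot block an edge between two other vertices, and cannot collide with an obstacle; the only visibilities that change are those incident to the moved vertex, and these are kept intact by placing it in a tiny ``comb slot'' carved from $p$ exactly as in Claim~\ref{claim:perfectdone}'s construction.

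The step I expect to be the main obstacle is controlling what happens to visibilities \emph{incident to the moved vertices} during the relocation, and making sure the moves can be performed simultaneously / iteratively without the obstacle $p$ running out of ``room'' or the different comb-slots interfering. In particular, moving a vertex $w$ out of crevice $Z'_{\sigma(i)}$ and into a slot attached to $p$ near another good crevice $Z'_{\sigma(j)}$ requires: verifying that $w$'s target slot can be made to see exactly $w$'s old neighborhood (this uses that $w$'s neighbors all lie in $X$ and that $X$ is on a circular arc, so point-like modifications near a vertex of $X$ do not disturb other sightlines, exactly as in Lemma~\ref{lem:vcbound}); and verifying that the edges of $p$ we add for $w$'s new slot do not stick out of a good region. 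I would handle this by always extending $p$ \emph{into} a good extended crevice (which is obstacle-free and unpierced by construction), so each such extension is automatically harmless, and by doing the relocations one crevice at a time, re-deriving ``goodness'' of the remaining crevices after each step (goodness is a property that can only be destroyed for the crevice we just modified, since we never move obstacles and never create edges). Once some good crevice has been emptied to be perfect, Claim~\ref{claim:perfectdone} finishes the argument; combined with the easy forward direction noted earlier, this establishes that $(G,h)$ and $(G',h)$ are equivalent and hence completes the proof of the pruning step and of Theorem~\ref{thm:fpt}.
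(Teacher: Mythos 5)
Your plan is in the same spirit as the paper's proof (pigeonhole on types versus the $\ge 2^k$ good crevices, plus an iterative relocation of vertices inside good regions while preserving visibilities), but there are genuine gaps.

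First, you never address the possibility that a single good crevice contains \emph{several} vertices of the main type $T$. If it does, then ``clearing out all other types'' still leaves a crevice with two type-$T$ vertices and no perfect crevice. The paper deals with this first: it shows that a good crevice $Z_{\sigma(i)}$ can contain at most one vertex of type $T$, namely its tip $v_{\sigma(i)}$, because a second type-$T$ vertex $w$ in $Z_{\sigma(i)}$ would be separated from $v_{\sigma(i)}$ only by $p$ and from $v_{\sigma(i)+1}$ only by $p$, forcing $w$ itself to be the tip of a good crevice -- a contradiction. This step is essential for the pigeonhole at the end to produce a \emph{perfect} crevice rather than merely one with no foreign types.

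Second, the relocation target in your argument is under-specified and in places appeals to structure that does not exist in an arbitrary solution $S$. You suggest moving a displaced vertex ``on the line $\ell$-like structure used in Lemma~\ref{lem:vcbound}'' and later invoke ``$X$ is on a circular arc'' to argue point-like modifications are harmless -- but the arc and the line $\ell$ are features of the hand-built representation in Lemma~\ref{lem:vcbound}, not of $S$. The paper's relocation is cleaner and fully internal to $S$: in the crevice being processed, it locates the vertex $u$ deepest behind the door, observes that the portion of the crevice beyond the $u$-door is vertex-free, and carves a small hair-comb into that empty portion of $p$; then it moves \emph{all} vertices of $u$'s type (from all good crevices) into that comb. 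This preserves visibilities because the comb is inside a region that is obstacle-free and unpierced, and because all relocated vertices have the same neighborhood (that of $u$). Crucially, this also gives a clean accounting: each iteration retires exactly one crevice and one type, there are at most $2^k-1$ types besides $T$, and at least $2^k$ good crevices, so some unprocessed crevice survives all iterations containing only its tip -- a perfect crevice. Your ``shuffle vertices between good crevices until one is empty'' does not come with such a termination or counting argument, and moving vertices \emph{into} other good crevices (rather than retiring both a crevice and a type per step) risks not making progress.

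So the overall approach is close, but you need (i) the ``at most one type-$T$ vertex per good crevice'' lemma, (ii) a concrete relocation target (the empty part of a crevice behind the deepest vertex's door, not some external structure), and (iii) a one-type-per-crevice accounting that guarantees termination in at most $2^k-1$ steps.
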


\begin{proof}[Proof of the claim.]
  Let $S$ be a solution of $(G,h)$.
  If $S$ does not contain a perfect crevice, then each of the at least
  $2^k$ good crevices in~$S$ must contain some vertex that is not part
  of the vertex cover~$X$.  Let~$\mathcal{Z}$ be the set of
  (non-extended) crevices that correspond to the good crevices in~$S$.
  We start by arguing that each crevice~$Z_{\sigma(i)}$
  in~$\mathcal{Z}$ contains only a single vertex of type~$T$,
  specifically~$v_{\sigma(i)}$.  Indeed, assume for a contradiction
  that~$Z_{\sigma(i)}$ contains not only~$v_{\sigma(i)}$ but also
  another vertex~$w$ of the same type as~$v_{\sigma(i)}$.
  Since~$Z_{\sigma(i)}$ corresponds to a good crevice, $Z_{\sigma(i)}$
  does not intersect any obstacle, and hence the visibility
  between~$v_{\sigma(i)}$ and~$w$ must be blocked by~$p$.  Similarly,
  the first obstacle hit when traversing the line segment from~$w$ to
  the vertex~$v_{\sigma(i)+1}$ in the next crevice~$Z_{\sigma(i)+1}$
  in our sequence must again be~$p$ since this occurs on the boundary
  of~$Z_{\sigma(i)}$.  As a consequence, we obtain that $w$ is itself
  contained in a good crevice for $S$, contradicting the choice of
  $\mathcal{Z}$.

  Next, we mark all of the crevices in $\mathcal{Z}$ as unprocessed
  and then iterate the following procedure.  We choose an unprocessed
  crevice $Z_{\sigma(i)}\in \mathcal{Z}$ and consider the vertex~$u$
  in~$Z_{\sigma(i)}$ that is farthest from the door in the direction
  away from $x$ (i.e., ``inside'' the crevice). Let the
  \emph{$u$-door} be a line segment defined as follows.  The $u$-door
  is parallel to the door of~$v_{\sigma(i)}$, contains~$u$, and starts
  and ends at the first intersection with the boundary of~$p$ (i.e.,
  the $u$-door extends from~$p$ in both directions until it hits the
  boundary of the crevice; see the dotted line segment through~$u$ in
  Figure~\ref{fig:perfect1}).  By the choice of $u$, the part of
  $Z_{\sigma(i)}$ that is separated from $x$ by the $u$-door (darker
  blue in Figure~\ref{fig:perfect1}) contains no vertex at all.

  \begin{figure}
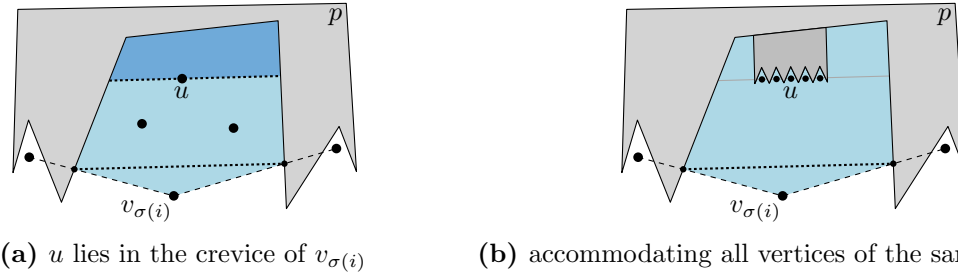

    \begin{subfigure}{.33\textwidth}
      \centering
      \includegraphics[page=1]{perfect-crevice}
      \caption{$u$ lies in the crevice of $v_{\sigma(i)}$}
      \label{fig:perfect1}
    \end{subfigure}
    \hfill
    \begin{subfigure}{.58\textwidth}
      \centering
      \includegraphics[page=2]{perfect-crevice}
      \caption{accommodating all vertices of the same type as~$u$}
      \label{fig:perfect2}
    \end{subfigure}
    \caption{Reshaping the obstacle $p$ to establish the existence of a
      perfect extended crevice.}
    \label{fig:perfect}
  \end{figure}

  Hence, it is possible to extend the boundary of $p$ towards $u$ in
  this part of the crevice without intersecting any visibility lines
  of edges in~$G$.  Let~$T_u$ be the set of all vertices of the same
  type as~$u$ that occur in the crevices in $\mathcal{Z}$.  We now
  extend~$p$ by constructing a tiny haircomb-like structure in the
  immediate vicinity of $u$ with precisely $|T_u|$ slots; then we move
  each vertex in~$T_u$ to a dedicated slot; see
  Figure~\ref{fig:perfect2}.
  The teeth of the haircomb are long enough to block the
  visibilities between every pair of vertices in~$T_u$,
  and they are short enough to preserve the
  visibilities between all other vertices in the graph (since the part
  of $Z_{\sigma(i)}$ behind the $u$-door is empty, and since all
  vertices are placed in the immediate vicinity of~$u$).

  At this point, we mark the crevice $Z_{\sigma(i)}$ and also the
  type~$T_u$ as ``processed'', and restart our considerations with the
  newly constructed solution and some unprocessed crevice.  After $j$
  steps of this procedure, we have processed precisely $j$ crevices,
  and none of the unprocessed crevices contains any vertex of a
  processed type.  Since the number of types other than~$T$ is
  upper-bounded by $2^k-1$, the procedure is guaranteed to construct a
  solution where at least one unprocessed
  crevice~$Z_{\sigma(i^\star)}$ contains only a single
  vertex~$v_{\sigma(i^\star)}$ of type~$T$.  By definition,
  $Z_{\sigma(i^\star)}$ corresponds to a perfect
  crevice~$Z'_{\sigma(i^\star)}$.
\end{proof}

In summary, we conclude that if $(G,h)$ is a \texttt{YES}-instance,
then it must admit a solution~$S$ with a perfect crevice.  By
Claim~\ref{claim:perfectdone}, this implies that $(G'',h)$, where
$G''$ is obtained by adding a vertex of type~$T$ to~$G$, is also a
\texttt{YES}-instance.  Moreover, we have already argued that if
$(G'',h)$ is a \texttt{YES}-instance, then so is $(G,h)$.  Hence, if
$G$ contains at least $\typesize{k}$ vertices of type~$T$, we can
delete a vertex of type~$T$ to obtain an equivalent instance.
Iterating this step results in an equivalent instance $(H,h)$ where
there are at most $\typesize{k}$ vertices of each type.  In
particular, $H$ contains at most $k+(2^k\cdot \typesize{k})$ many
vertices.  At this point, we can solve the instance by brute force in
the manner described in Lemma~\ref{lem:etr}.
\end{proof}

\section{NP-Hardness of Deciding Whether a Given Obstacle is Enough}
\label{sec-hardness}

In this section, we investigate the complexity of the obstacle
representation problem for the case that a single (outside) obstacle
is given.  In particular, we prove Theorem~\ref{thm-hardness}, which
we restate here.

\hardness*

\begin{proof}
  We reduce from \textsc{3-Partition}, which is NP-hard
  \cite{gj-crmsr-75}.  An instance of \textsc{3-Partition} is a
  multiset~$S$ of $3m$ positive integers $a_1,\dots,a_{3m}$, and the
  question is whether~$S$ can be partitioned into $m$ groups of
  cardinality~3 such that the numbers in each group sum up to the same
  value $B$, which is $(\sum_{i=1}^{3m} a_i)/m$.
  The problem remains NP-hard if $B$ is polynomial in~$m$ and 
  if, for each $i \in [3m]$, it holds that $a_i \in (B/4,B/2)$.
  
  Given a multiset~$S$, we construct a graph~$G$ and a simple 
  polygon~$P$ with the property that the vertices of~$G$ can be placed in~$P$
  such that their visibility graph with respect to~$P$ is isomorphic 
  to~$G$ if and only if $S$ is a yes-instance of \textsc{3-Partition}.

  Let $G$ be a clique with vertices $v_1,\dots,v_{3m}$ 
  where, for $i \in [3m]$, \emph{clique vertex}~$v_i$ is connected to $a_i$ 
  \emph{leaves} $\ell_{i,1},\dots,\ell_{i,a_i}$ (that is, vertices of
  degree~1).  Note that $G$ has $3m+Bm$ vertices and $\binom{3m}{2}+Bm$
  edges.  The polygon~$P$ (see Figure~\ref{fig:hardness}) is an
  orthogonal polygon with $m$ groups of $B$ bays (red in
  Figure~\ref{fig:hardness}) that are separated from each other by
  corridors (of height~$B/4$ and width~$4B$; light blue in
  Figures~\ref{fig:hardness} and~\ref{fig:polygon-size}).
  Each bay is a unit square; any two consecutive bays are one unit
  apart.  The height of~$P$ (including the bays) is $B/2+1$.
  Hence, the sizes of~$G$ and~$P$ are polynomial in~$m$.

  \begin{figure}[htb]
      \centering
      \includegraphics[scale=0.99]{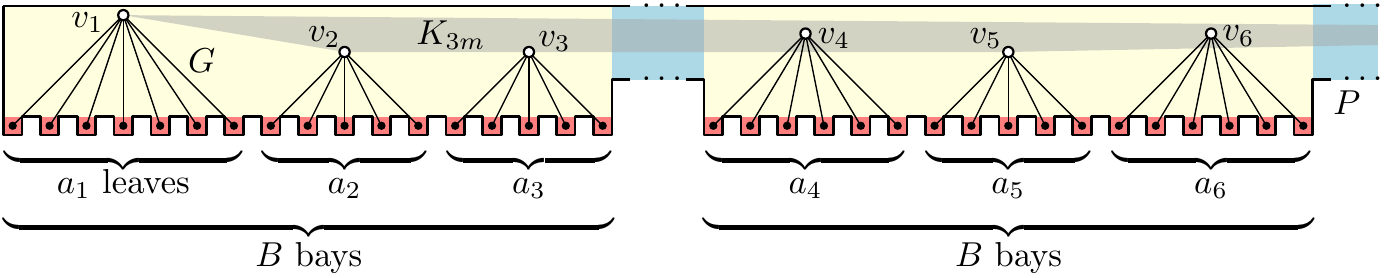}
      \caption{Idea of the reduction from \textsc{3-Partition}}
      \label{fig:hardness}
  \end{figure}
  \begin{figure}
    \centering
    \includegraphics[scale=0.99]{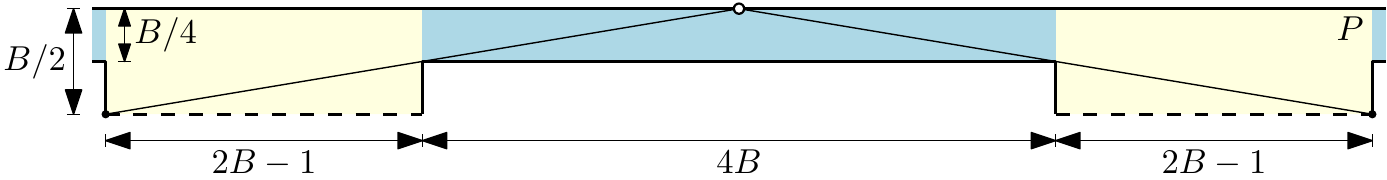}
    \caption{Size of the polygon~$P$ used in our reduction; drawing
      not (quite) to scale}
    \label{fig:polygon-size}
  \end{figure}

  If $S$ is a yes-instance, then we can sort $S$ such that 
  $\sum_{i=1}^3 a_{i}=\dots=\sum_{i=3m-2}^{3m} a_{i}=B$.
  We place the leaves adjacent to $v_1,\dots,v_{3m}$ into 
  the $Bm$ bays of~$P$, going through $P$ from left to right.  
  We place each leaf at the center of its bay.
  Then, for $i \in [3m]$, we place vertex~$v_i$ on the perpendicular 
  bisector of its leftmost leaf~$\ell_{i,1}$ and its rightmost
  leaf~$\ell_{i,a_i}$ so that $v_i$ sees exactly the leaves 
  $\ell_{i,1},\dots,\ell_{i,a_i}$ adjacent to~$v_i$.
  This is achieved by setting the height of~$v_i$ above its leaves
  to~$a_i$ if $a_i$ is even and to $a_i-1$ otherwise.
  Recall that $B/4+1 \le a_i \le B/2-1$.
  To make sure that even the lowest clique vertices can see each other
  through the corridors, we lift all clique vertices by half a unit.
  This does not change the set of bay centers that a clique vertex can
  see.  (Note that Figure~\ref{fig:hardness} does not reflect this
  slight modification.)
  The sizes of the corridors between consecutive groups of~$B$ bays
  ensure that each clique vertex sees every other
  clique vertex but no clique vertex can see into bays of different
  groups; see Figure~\ref{fig:polygon-size} (where the dashed line
  segments mark the top edges of the bays).
  
  Given a yes-instance of the obstacle problem, that is, a placement
  of the vertices of~$G$ in~$P$ such that their visibility graph with
  respect to~$P$ is isomorphic to~$G$, we show how~$S$ can be
  3-partitioned.  First, observe that no two leaves are adjacent
  in~$G$; hence any convex region in~$P$ contains at most one leaf.
  In particular, this holds for the yellow \emph{group} regions in
  Figures~\ref{fig:hardness} and~\ref{fig:polygon-size}, for each of
  the red bays, and for the bounding box of all blue corridors.
  Therefore, all but $m+1$ leaves lie in bays.  By scaling the
  partition instance by a factor of $m+2$, we can assume that each
  leaf lies in a bay.  Due to our above observation regarding the
  visibility region of clique vertices, we can map each clique vertex
  to a group of bays.  Since we assumed that, for each $i \in [1,3m]$,
  $v_i \in (B/4,B/4)$, exactly three clique vertices are mapped to
  each group of bays.  Since we can assume that every leaf lies in its
  own bay, the total number of leaves per group cannot exceed~$B$.  On
  the other hand, we must distribute a total of $mB$ leaves to $m$
  groups, so each group must get exactly $B$ leaves.  Hence the
  numbers of leaves of the three clique vertices in each group sums up
  to~$B$, which 3-partitions~$S$ as desired.
\end{proof}

\bibliographystyle{plainurl}
\bibliography{abbrv,obstacles}

\begin{thebibliography}{10}

\bibitem{acklv-rpsotro-16}
Oswin Aichholzer, Jean Cardinal, Vincent Kusters, Stefan Langerman, and Pavel
  Valtr.
\newblock Reconstructing point set order types from radial orderings.
\newblock {\em Int. J. Comput. Geom. Appl.}, 26(3-4):167--184, 2016.
\newblock \href {https://doi.org/10.1142/S0218195916600037}
  {\path{doi:10.1142/S0218195916600037}}.

\bibitem{akl-ong-DCG10}
Hannah Alpert, Christina Koch, and Joshua~D. Laison.
\newblock Obstacle numbers of graphs.
\newblock {\em Discrete Comput. Geom.}, 44(1):223--244, 2010.
\newblock \href {https://doi.org/10.1007/s00454-009-9233-8}
  {\path{doi:10.1007/s00454-009-9233-8}}.

\bibitem{bcv-dgusno-DCG18}
Martin Balko, Josef Cibulka, and Pavel Valtr.
\newblock Drawing graphs using a small number of obstacles.
\newblock {\em Discrete Comput. Geom.}, 59(1):143--164, 2018.
\newblock \href {https://doi.org/10.1007/s00454-017-9919-2}
  {\path{doi:10.1007/s00454-017-9919-2}}.

\bibitem{ch-rcpvg-DCG17}
Jean Cardinal and Udo Hoffmann.
\newblock Recognition and complexity of point visibility graphs.
\newblock {\em Discrete Comput. Geom.}, 57(1):164--178, 2017.
\newblock \href {https://doi.org/10.1007/s00454-016-9831-1}
  {\path{doi:10.1007/s00454-016-9831-1}}.

\bibitem{clpw-ov1o-GD16}
Steven Chaplick, Fabian Lipp, {Ji-won} Park, and Alexander Wolff.
\newblock Obstructing visibilities with one obstacle.
\newblock In Yifan Hu and Martin N{\"o}llenburg, editors, {\em Proc. 24th Int.
  Symp. Graph Drawing \& Network Vis. (GD)}, volume 9801 of {\em LNCS}, pages
  295--308. Springer, 2016.
\newblock URL: \url{http://arxiv.org/abs/1607.00278}, \href
  {https://doi.org/10.1007/978-3-319-50106-2\_23}
  {\path{doi:10.1007/978-3-319-50106-2\_23}}.

\bibitem{DowneyF13}
Rodney~G. Downey and Michael~R. Fellows.
\newblock {\em Fundamentals of Parameterized Complexity}.
\newblock Texts in Computer Science. Springer, 2013.
\newblock \href {https://doi.org/10.1007/978-1-4471-5559-1}
  {\path{doi:10.1007/978-1-4471-5559-1}}.

\bibitem{dm-on-15}
Vida Dujmovi{\'c} and Pat Morin.
\newblock On obstacle numbers.
\newblock {\em Electr. J. Comb.}, 22(3):\#P3.1, 2015.
\newblock \href {https://doi.org/10.37236/4373} {\path{doi:10.37236/4373}}.

\bibitem{fv-ccap-11}
Stefan Felsner and Pavel Valtr.
\newblock Coding and counting arrangements of pseudolines.
\newblock {\em Discrete Comput. Geom.}, 46(3):405--416, 2011.
\newblock \href {https://doi.org/10.1007/s00454-011-9366-4}
  {\path{doi:10.1007/s00454-011-9366-4}}.

\bibitem{fkkkkw-ooravof-GD22}
Oksana Firman, Philipp Kindermann, Jonathan Klawitter, Boris Klemz, Felix
  Klesen, and Alexander Wolff.
\newblock Outside-obstacle representations with all vertices on the outer face.
\newblock In Patrizio Angelini and Reinhard von Hanxleden, editors, {\em Proc.
  30th Int. Symp. Graph Drawing \& Network Vis. (GD)}, volume 13764 of {\em
  LNCS}, pages 432--440. Springer, 2023.
\newblock \href {https://doi.org/10.1007/978-3-031-22203-0_31}
  {\path{doi:10.1007/978-3-031-22203-0_31}}.

\bibitem{gj-crmsr-75}
Michael~R. Garey and David~S. Johnson.
\newblock Complexity results for multiprocessor scheduling under resource
  constraints.
\newblock {\em SIAM J. Comput.}, 4(4):397--411, 1975.
\newblock \href {https://doi.org/10.1137/0204035} {\path{doi:10.1137/0204035}}.

\bibitem{gr-sropvg-TCS15}
Subir~Kumar Ghosh and Bodhayan Roy.
\newblock Some results on point visibility graphs.
\newblock {\em Theoret. Comput. Sci.}, 575:17--32, 2015.
\newblock \href {https://doi.org/10.1016/j.tcs.2014.10.042}
  {\path{doi:10.1016/j.tcs.2014.10.042}}.

\bibitem{gpv-onpg-GD17}
John Gimbel, Patrice~Ossona de~Mendez, and Pavel Valtr.
\newblock Obstacle numbers of planar graphs.
\newblock In Fabrizio Frati and Kwan-Liu Ma, editors, {\em Proc. 25th Int.
  Symp. Graph Drawing \& Netw. Vis. (GD)}, volume 10692 of {\em LNCS}, pages
  67--80. Springer, 2018.
\newblock \href {https://doi.org/10.1007/978-3-319-73915-1_6}
  {\path{doi:10.1007/978-3-319-73915-1_6}}.

\bibitem{gp-upcp-86}
Jacob~E. Goodman and Richard Pollack.
\newblock Upper bounds for configurations and polytopes in $\mathds{R}^d$.
\newblock {\em Discrete Comput. Geom.}, 1(3):219--227, 1986.
\newblock \href {https://doi.org/10.1007/BF02187696}
  {\path{doi:10.1007/BF02187696}}.

\bibitem{goPo93}
Jacob~E. Goodman and Richard Pollack.
\newblock Allowable sequences and order types in {Discrete and Computational
  Geometry}.
\newblock In {\em New Trends in Discrete and Computational Geometry}, volume~10
  of {\em Algorithms and Combinatorics}, pages 103--134. Springer, Berlin,
  1993.
\newblock \href {https://doi.org/10.1007/978-3-642-58043-7_6}
  {\path{doi:10.1007/978-3-642-58043-7_6}}.

\bibitem{js-rpslg-CCCG14}
Matthew~P. Johnson and Deniz Sar\i\"oz.
\newblock Representing a planar straight-line graph using few obstacles.
\newblock In {\em Proc. 26th Canadian Conf. Comput. Geom. (CCCG)}, pages
  95--99, 2014.
\newblock URL: \url{http://www.cccg.ca/proceedings/2014/papers/paper14.pdf}.

\bibitem{k-ah-92}
Donald~E. Knuth.
\newblock {\em Axioms and Hulls}, volume 606 of {\em LNCS}.
\newblock Springer, Heidelberg, Germany, 1992.
\newblock \href {https://doi.org/10.1007/3-540-55611-7}
  {\path{doi:10.1007/3-540-55611-7}}.

\bibitem{mpp-lbong-12}
Padmini Mukkamala, J{\'{a}}nos Pach, and D{\"{o}}m{\"{o}}t{\"{o}}r
  P{\'{a}}lv{\"{o}}lgyi.
\newblock Lower bounds on the obstacle number of graphs.
\newblock {\em Electr. J. Comb.}, 19(2):\#P32, 2012.
\newblock \href {https://doi.org/10.37236/2380} {\path{doi:10.37236/2380}}.

\bibitem{ramsey29}
Frank~P. Ramsey.
\newblock On a problem of formal logic.
\newblock {\em Proc. London Math. Soc. (2)}, 30(4):264--286, 1929.
\newblock \href {https://doi.org/10.1112/plms/s2-30.1.264}
  {\path{doi:10.1112/plms/s2-30.1.264}}.

\bibitem{Renegar92a}
James Renegar.
\newblock On the computational complexity and geometry of the {First-order
  Theory of the Reals, Part I}: {Introduction}. {Preliminaries}. {The} geometry
  of semi-algebraic sets. {The} decision problem for the {Existential Theory of
  the Reals}.
\newblock {\em J. Symb. Comput.}, 13(3):255--300, 1992.
\newblock \href {https://doi.org/10.1016/S0747-7171(10)80003-3}
  {\path{doi:10.1016/S0747-7171(10)80003-3}}.

\bibitem{Renegar92b}
James Renegar.
\newblock On the computational complexity and geometry of the first-order
  theory of the reals. {Part II: The} general decision problem. {Preliminaries}
  for quantifier elimination.
\newblock {\em J. Symb. Comput.}, 13(3):301--327, 1992.
\newblock \href {https://doi.org/10.1016/S0747-7171(10)80004-5}
  {\path{doi:10.1016/S0747-7171(10)80004-5}}.

\bibitem{Renegar92c}
James Renegar.
\newblock On the computational complexity and geometry of the first-order
  theory of the reals. {Part III: Quantifier} elimination.
\newblock {\em J. Symb. Comput.}, 13(3):329--352, 1992.
\newblock \href {https://doi.org/10.1016/S0747-7171(10)80005-7}
  {\path{doi:10.1016/S0747-7171(10)80005-7}}.

\bibitem{w-cvgnl-85}
Emo Welzl.
\newblock Constructing the visibility graph for $n$ line segments in
  {$O(n^{2})$} time.
\newblock {\em Inform. Process. Lett.}, 20:167--171, 1985.
\newblock \href {https://doi.org/10.1016/0020-0190(85)90044-4}
  {\path{doi:10.1016/0020-0190(85)90044-4}}.

\end{thebibliography}

\end{document}